\newcommand{\blind}{1}
\begin{document}

\def\spacingset#1{\renewcommand{\baselinestretch}%
{#1}\small\normalsize} \spacingset{1}


\if1\blind
{
  \title{\bf Deep Learning of Multivariate Extremes \\via a Geometric Representation}
  \author{Callum J. R. Murphy-Barltrop\thanks{The authors
    would like to thank Phil Jonathan for providing the data, Ryan Campbell for access to code, and Lambert de Monte for helpful discussions. Reetam Majumder was supported by grants from the United States Geological Survey’s National Climate Adaptation Science Center (G24AC00197), and the National Science Foundation (DMS2152887). This work has made use of the resources provided by the Edinburgh Compute and Data Facility (ECDF) (http://ecdf.ed.ac.uk/)}\hspace{.2cm}\\
    Technische Universität Dresden \&\\
   Center for Scalable Data Analytics and Artificial Intelligence (ScaDS.AI)\\
    and \\
    Reetam Majumder \\
    Department of Mathematical Sciences, University of Arkansas\\
    and \\
    Jordan Richards \\
    School of Mathematics, University of Edinburgh}
  \maketitle
} \fi

\if0\blind
{
  \bigskip
  \bigskip
  \bigskip
  \begin{center}
    {\LARGE\bf Deep Learning of Multivariate Extremes \\via a Geometric Representation}
\end{center}
  \medskip
} \fi

\bigskip
\begin{abstract}
The study of geometric extremes, where extremal dependence properties are inferred from the deterministic limiting shapes of scaled sample clouds, provides an exciting approach to modelling the extremes of multivariate data. These shapes, termed limit sets, link several popular extremal dependence modelling frameworks.
Although the geometric approach is becoming an increasingly popular modelling tool for multivariate extremes, current inference techniques are limited to a low dimensional setting ($d\leq 4$), and generally require rigid modelling assumptions. In this work, we propose a range of novel theoretical results to aid with the implementation of the geometric extremes framework and introduce the first approach to modelling limit sets using deep learning. By leveraging neural networks, we construct asymptotically-justified and flexible semi-parametric models for extremal dependence of high-dimensional data. We showcase the efficacy of our approach by modelling the complex extremal dependencies between meteorological and oceanographic variables in the North Sea. 
\end{abstract}

\noindent%
{\it Keywords:} extremal dependence; geometric extremes; neural networks; deep learning
\vfill

\newpage
\spacingset{1.9} 
\section{Introduction}\label{sec:intro}

Multivariate extreme value theory seeks to understand relationships between the extremes of multiple variables. 
A wide variety of modelling approaches for multivariate extremes (or, equivalently, extremal dependence) have been proposed; classical approaches use the framework of regular variation \citep[see, e.g.,][]{Tawn1990,Rootzen2006,Einmahl2009}, but these models are restrictive in the forms of extremal dependence that they can capture \citep{Huser2024}. In particular, they can only provide accurate inference for random vectors exhibiting asymptotic dependence, i.e., when all variables are simultaneously extreme \citep{Ledford1997,Heffernan2004}. Assuming this form of extremal dependence for data is unrealistic in many applications, and numerous works have advocated against the use of regular variation models for environmental studies \citep[e.g.,][]{Opitz2016,Dawkins2018,Huser2024}. 

The first approach to modelling asymptotically independent data was provided by \citet{Ledford1996,Ledford1997}; see, also, \textit{hidden regular variation} \citep{Resnick2002}. 
Letting $\boldsymbol{X}_E:=(X_{E,1},\dots,X_{E,d})^T$ denote a random vector with standard exponential margins, \citet{Ledford1996} assumed that, as $u \to \infty$,
\begin{equation} \label{eqn:led_tawn}
    \Pr\left(\min_{i\in \mathcal{D}} \{X_{E,i}\} > u\right) = L(e^u)\exp\{-u/\eta\},
\end{equation}
where $\mathcal{D} := \{1,\hdots,d\}$, $L(\cdot)$ is a slowly varying function, i.e., $\lim_{u \to \infty}L(cu)/L(u) = 1$ for any constant $c > 0$, and $\eta \in (0,1]$ is termed the coefficient of tail dependence. Under asymptotic dependence, we have $\eta = 1$ and $\lim_{u \to \infty} L(u) > 0$, with other forms of extremal dependence arising when these conditions are not satisfied. In practice, model \eqref{eqn:led_tawn} is only applicable when all components of $\boldsymbol{X}_E$ are jointly large \citep[see, e.g., ][]{Wadsworth2017}. To overcome this limitation, \citet{Wadsworth2013} introduced the \textit{angular dependence function} (ADF), which generalises $\eta$. Consider any angle $\boldsymbol{w} := (w_1,\dots,w_d)^T \in \dsphere_+$ where $\mathcal{S}_+^{d-1} := \{ \boldsymbol{x} \in \RR_+^{d}: ||\boldsymbol{x}|| = 1\}$ is the strictly positive part of the unit $(d-1)$-sphere and $\|\cdot\|$ is the Euclidean norm; \citet{Wadsworth2013} assume that
\begin{equation} \label{eqn:wads_tawn}
    \Pr\left(\min_{i \in \mathcal{D}}\{X_{E,i}/w_{i}\}>u\right) = L(e^u;\boldsymbol{w})e^{-\lambda(\boldsymbol{w})u}, \; \; \lambda(\boldsymbol{w}) \geq \max(\boldsymbol{w}),
\end{equation}
as $u \to \infty$, where $L(\cdot \; ;\boldsymbol{w})$ is a slowly varying function and $\lambda(\cdot)$ denotes the ADF; the latter provides information about the joint tail of $\boldsymbol{X}_{E}$, and we have $\eta = \{\sqrt{d}\lambda(d^{-1/2},\hdots,d^{-1/2})\}^{-1}$. Model \eqref{eqn:wads_tawn} can capture both extremal dependence regimes, with asymptotic dependence implying the lower bound, $\lambda(\boldsymbol{w}) = \max(\boldsymbol{w})$, for all $\boldsymbol{w} \in \dsphere_+$. Loosely speaking, the angle $\boldsymbol{w}$ is the direction in $\RR_+^d$ for which the joint tail region in \eqref{eqn:wads_tawn} is defined. This model has been successfully applied in environmental applications by, e.g., \cite{Murphy-Barltrop2023}, \cite{Murphy-Barltrop2024b}, and \cite{Murphy-Barltrop2024c}.

Several of the aforementioned models introduced are defined for random vectors exhibiting standard margins with finite lower bounds, e.g., Pareto, Fr\'echet, or exponential, which limits the study of extremal dependence. In particular, taking random vectors with double tailed margins and applying such approaches (following marginal transformation), these models reduce the study of extremal dependence to data observed only in the positive orthant, $\RR^d_+$, of $\RR^d$; see Section \ref{sec:theory} for further discussion. This can be restrictive in practical applications where different regions of low joint probability mass may be of interest. Consequently, many recent works have introduced modelling approaches for data on standard Laplace margins. For example, \citet{Keef2013} extended the model of \citet{Heffernan2004} to Laplace margins, with the resulting framework providing greater flexibility and interpretability. Moreover, \citet{Mackay2023}, \citet{Simpson2024}, \citet{Papastathopoulos2024} and \citet{Murphy-Barltrop2024} demonstrate that models on Laplace margins permit evaluation of the joint tail behaviour of random vectors in all $2^d$ orthants of $\RR^d$. To demonstrate this, Figure \ref{fig:equidensity_contours} illustrates extreme isodensity contours for a bivariate Gaussian copula with correlation parameter $\rho = -0.5$ on both standard exponential and standard Laplace margins. One can observe that the Laplace representations offers a more detailed perspective of the entire dependence structure. Hereafter, we use $\boldsymbol{X}:=(X_1,\dots,X_d)^T$ to denote a $d$-dimensional random vector with standard Laplace margins, with distribution function $F_{\boldsymbol{X}}(\cdot)$ and continuous density function $f_{\boldsymbol{X}}(\cdot)$. 

\begin{figure}
    \centering
    \includegraphics[width=.7\textwidth]{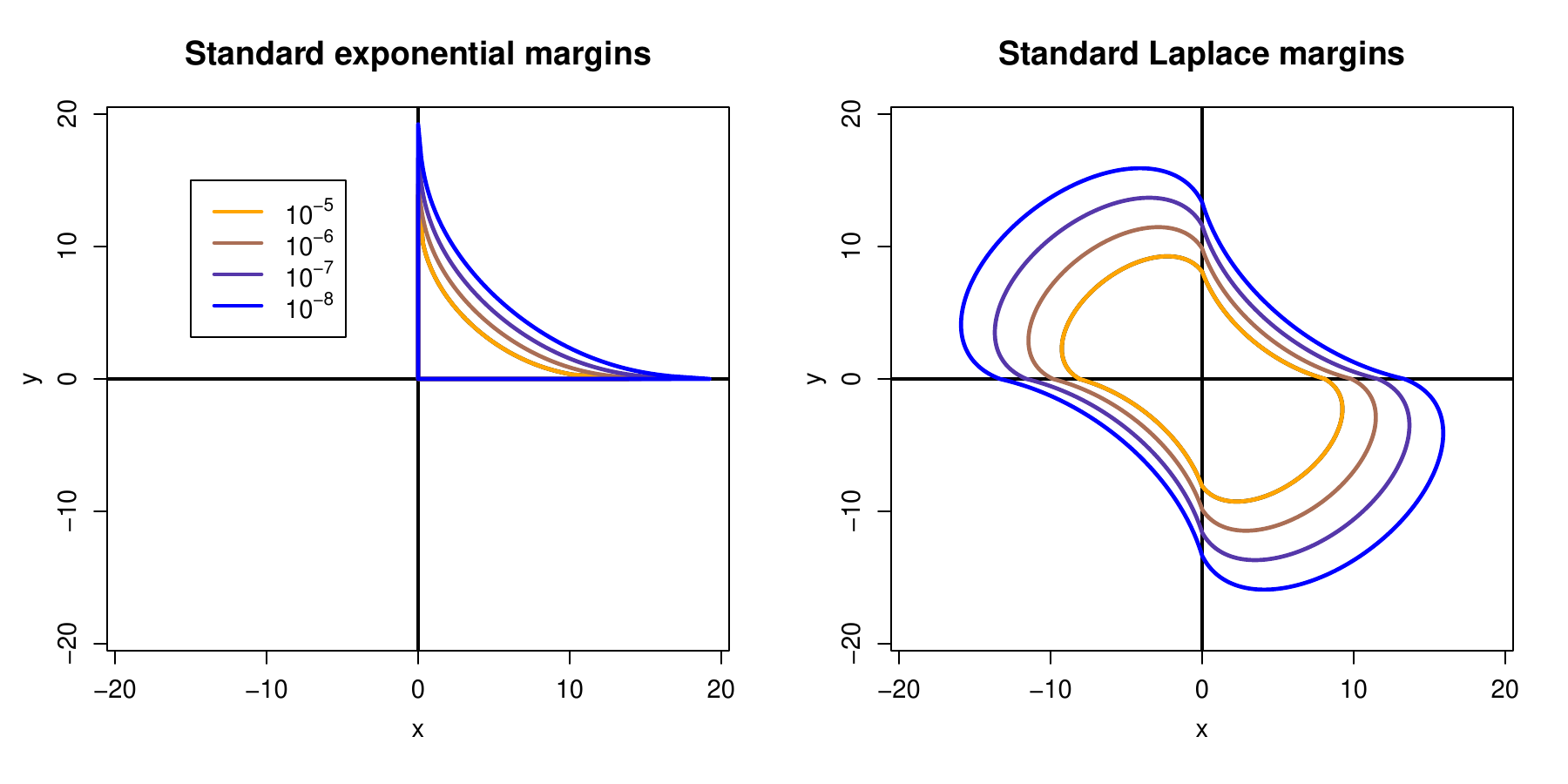}
    \caption{Isodensity contours for a bivariate random vector $\boldsymbol{X}$ with a Gaussian copula (with negative correlation) and with standard exponential (left) or standard Laplace (right) margins. The density levels for each colour are given in the legend in the left panel.}
    \label{fig:equidensity_contours}
\end{figure}

Recent theoretical developments for multivariate extremes have focused on \textit{geometric extremes}, whereby extremal dependence properties of $\boldsymbol{X}$ can be inferred directly from the deterministic limiting shapes of scaled sample clouds. Let $C_n := {\{ \boldsymbol{X}_i/r_n\}^n_{i=1}}$ denote $n$ independent copies of $\boldsymbol{X}$, scaled by a suitably chosen positive sequence $(r_n)_{n \in \NN}$ satisfying $r_n \to \infty$ as $n \to \infty$. Under mild conditions, $C_n$ converges in probability, with respect to the Hausdorff metric, onto the compact, star-shaped \textit{limit} set $\mathcal{G} := \{\boldsymbol{x}: g(\boldsymbol{x}) \leq 1 \} \subset [-1,1]^d$, where $g:\RR^d \mapsto \RR_+$ is the \textit{gauge function} of $\mathcal{G}$ \citep{Fisher1969,Davis1988,Kinoshita1991}. We formally define star-shaped sets in Section \ref{sec:the:overview} below. A sufficient condition for convergence onto $\mathcal{G}$ is that 
\begin{equation} \label{eqn:gauge_def}
    -\log f_{\boldsymbol{X}}(t\boldsymbol{x}) \sim tg(\boldsymbol{x}), \; \; \boldsymbol{x} \in \RR^d, \; \; t \to \infty,
\end{equation}
or, equivalently, $g(\boldsymbol{x}) = \lim_{t \to \infty}[-\log f_{\boldsymbol{X}}(t\boldsymbol{x})]/t$ \citep{Balkema2010}. Following \citet{Nolde2014}, we also define the \textit{unit-level}, or boundary, set $\partial \mathcal{G} := \{\boldsymbol{x}: g(\boldsymbol{x}) = 1 \} \subset \mathcal{G}$. For standard Laplace margins, it suffices to set $r_n = \log(n/2)$ to achieve the required convergence \citep{Papastathopoulos2024}. To demonstrate this convergence, Figure \ref{fig:true_gauges} illustrates the limit set $\mathcal{G}$ and unit-level set $\partial \mathcal{G}$ for three copulas, alongside $n = 10,000$ samples $\{\boldsymbol{x}_i/\log(n/2) \}_{i=1}^n$ from each copula; formal definitions of these copulas are given in Section \ref{sec:sim_study}. One can observe that the (finite) observed sample clouds lie approximately within the theoretical limit set. Hereafter, we implicitly assume that any $\boldsymbol{X}$ satisfies the conditions for convergence onto $\mathcal{G}$. 

\begin{figure}
    \centering
    \includegraphics[width=\textwidth]{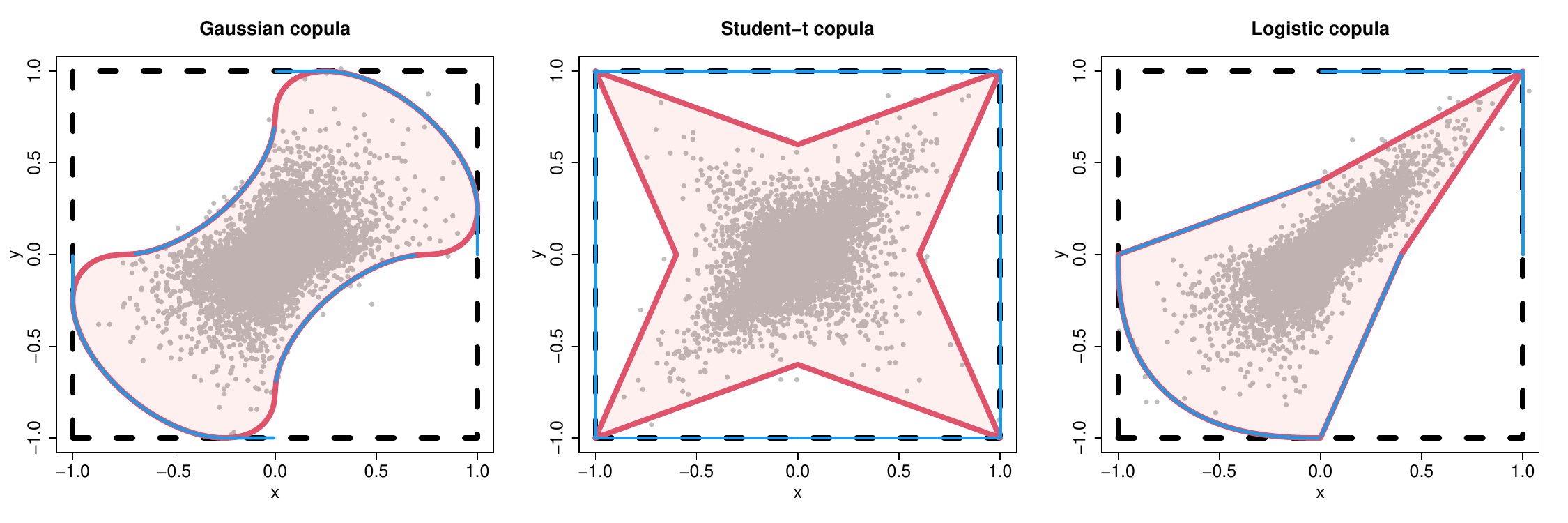}
    \caption{Scaled sample clouds of size $n=10,000$ from Gaussian (left), Student-t (middle), and logistic (right) copulas on standard Laplace margins. For each panel, the shaded regions and solid red lines give the limit set $\mathcal{G}$ and the unit-level set $\partial \mathcal{G}$, respectively, while the solid blue lines denote the set $\{\boldsymbol{w}/\Lambda(\boldsymbol{w}) : \boldsymbol{w} \in \mathcal{S}^{d-1}\setminus \mathcal{A} \}$; see Section \ref{sec:extended_ADF} for further details. The black dashed lines denote the unit square.}
    \label{fig:true_gauges}
\end{figure}


\citet{Nolde2014} and \citet{Nolde2022} show that the shape of $\partial \mathcal{G}$, or equivalently $\mathcal{G}$, is directly related to extremal dependence of $\boldsymbol{X}$. Specifically, $\partial \mathcal{G}$ links several representations for multivariate extremes: from $\partial \mathcal{G}$, we can immediately determine parameters of the models described in equations \eqref{eqn:led_tawn} and \eqref{eqn:wads_tawn}, as well as those proposed by \citet{Heffernan2004} and \citet{Simpson2020}. Taking, e.g., the ADF, we have that $\lambda(\boldsymbol{w}) = ||\boldsymbol{w}||_{\infty} \times \mathfrak{r}_{\boldsymbol{w}}^{-1}$ for all $\boldsymbol{w} \in \dsphere_+$, where $\mathfrak{r}_{\boldsymbol{w}} := \max \{ \mathfrak{r} \in [0,1] : \mathfrak{r}\mathcal{R}_{\boldsymbol{w}}\cap \partial \mathcal{G} \neq \emptyset \}$ is a $\boldsymbol{w}$-dependent coefficient used to scale the set $\mathcal{R}_{\boldsymbol{w}} := \bigotimes_{i = 1,\hdots,d} [w_i/||\boldsymbol{w}||_{\infty},\infty]$ to intersect with $\partial \mathcal{G}$. Such parameters provide an interpretable quantification of extremal dependence. Furthermore, from a practical perspective, estimates of $\partial \mathcal{G}$  can be used to estimate extreme statistics, for example, joint tail probabilities \citep{Wadsworth2024}, return curves \citep{Murphy-Barltrop2024c}, and return level sets \citep{Papastathopoulos2024}. The unit-level set $\partial \mathcal{G}$ therefore offers a high degree of practical utility for inference of multivariate extremes, and thus its accurate estimation is of particular importance. 

Owing to this perspective, recent works have introduced techniques for the estimation of $\partial \mathcal{G}$, on both standard exponential and standard Laplace margins. For the former margin, \citet{Simpson2022} and \citet{Majumder2024} proposed semi-parametric techniques using generalised additive models (GAMs) and B\'ezier splines, respectively, to approximate $\partial \mathcal{G}$, while \citet{Wadsworth2024} used a parametric copula-based model. For Laplace margins, \citet{Papastathopoulos2024} proposed a latent Gaussian model to approximate the shape of $\partial \mathcal{G}$, while \citet{Murphy-Barltrop2024} used GAMs to estimate $\partial \mathcal{G}$ using the model introduced by \citet{Mackay2023}. Within these approaches, a wide range of statistical techniques have been considered in both Bayesian and frequentist settings, and it is possible to obtain accurate estimates of $\partial \mathcal{G}$ for a large variety of dependence structures. 
However, current techniques for modelling and estimating $\partial \mathcal{G}$ have several shortcomings: i) all existing approaches are limited to a low-dimensional setting ($d\leq 4$), ii) the restriction of several approaches to standard exponential margins offers a limited perspective for evaluating joint tail properties (see Figure \ref{fig:equidensity_contours}), iii) one must always specify parametric or semi-parametric forms for quantities related to $\partial \mathcal{G}$, or select a large number of tuning parameters, and iv) the existing techniques for Laplace margins do not guarantee the theoretical properties of limit sets. Therefore, the existing estimation techniques for $\partial \mathcal{G}$ offer limited practical utility, motivating novel developments. 

Recent literature combining extreme value theory and deep learning has seen the construction of flexible, computationally-scalable models for univariate extremes \citep[see, e.g.,][]{pasche2022neural,richards2022regression, cisneros2023deep}, generative models for multivariate and spatial extremes \citep{boulaguiem2022modeling, lafon2023vae, zhang2023flexible, majumder_aoas_24}, and classifiers for extremal dependence \citep{Ahmed2022,Wixson2024}. While \cite{hasan2022modeling} used neural networks to build flexible models for asymptotically-dependent multivariate extremes, deep learning is yet to be exploited in the construction of models that can capture non-asymptotically dependent data structures. Here, we propose the first deep learning-based approach for modelling $\partial \mathcal{G}$, referred to hereafter as the DeepGauge framework. The use of deep learning methods gives the DeepGauge framework a high degree of flexibility and, as we demonstrate in Sections \ref{sec:sim_study} and \ref{sec:app}, allows us to capture a wide variety of extremal dependence structures. Furthermore, it can be applied in higher dimensional settings ($d > 4$) than existing techniques for estimating $\partial \mathcal{G}$, and also requires fewer modelling assumptions.

The paper is organised as follows. Section \ref{sec:theory} outlines the theory underpinning the DeepGauge framework, alongside novel theoretical results pertaining to the unit-level set $\partial \mathcal{G}$ and its estimation for standard Laplace margins. Section~\ref{sec:methods} outlines our methodology for estimating $\partial \mathcal{G}$ and related quantities, with Section~\ref{sec:method:neuralnets} detailing our neural network-based representation for geometric extremes and Section~\ref{sec:inference:gof} introducing diagnostics for validating model fits in high dimensional settings. Section~\ref{sec:sim_study} provides a simulation study showcasing the efficacy of our framework for inferring the extremal dependence of random vectors. Section~\ref{sec:app} provides an application to the NORA10 hindcast dataset of meteorological and oceanographic (metocean) variables in the North Sea that exhibit complex dependence structures. We conclude in Section~\ref{sec:discussion} with a discussion and outlook on future work.

\section{Theoretical developments in Geometric Extremes} \label{sec:theory}


\subsection{Overview of the angular-radial decomposition} 
\label{sec:the:overview}

To estimate the unit-level set $\partial \mathcal{G}$ and its corresponding gauge function $g(\cdot)$, we first decompose $\boldsymbol{X}$ into angular and radial components, and then model the radii  conditional on a fixed angle. While one could select one of many radial-angular systems for this decomposition, we follow \cite{Murphy-Barltrop2024} and define angular and radial components via the Euclidean norm. For any $\boldsymbol{X} \in \RR^d\setminus  \boldsymbol{0}_d$, with $\boldsymbol{0}_d := (0,\hdots,0)^T$, define $(R,\boldsymbol{W})$ by $\boldsymbol{X}\mapsto(R,\boldsymbol{W}):=(\|\boldsymbol{X}\|,\boldsymbol{X}/\|\boldsymbol{X}\|)$ for $R >0$ and $\boldsymbol{W}\in\mathcal{S}^{d-1}$, where $\mathcal{S}^{d-1} := \{ \boldsymbol{x} \in \RR^{d}: ||\boldsymbol{x}|| = 1\}$ denotes the unit $(d-1)$-sphere. It follows that $\boldsymbol{X} = R\boldsymbol{W}$, implying that $\boldsymbol{X}$ is completely determined by the behaviour of $(R,\boldsymbol{W})$. It is trivial to show that the mapping $t:\RR^d\setminus \boldsymbol{0}_d \mapsto \RR_+ \times \mathcal{S}^{d-1}$, where $t(\boldsymbol{x}) := (|| \boldsymbol{x} ||, \boldsymbol{x}/|| \boldsymbol{x} ||)$, is bijective; thus, no information is lost through considering $(R,\boldsymbol{W})$. Loosely speaking, $R$ is the magnitude of an event, while $\boldsymbol{W}$ defines its direction, e.g., in which orthant of $\RR^d$ the event occurs. 

We now recall some theoretical properties of gauge functions and limit sets. The star-shapedness of the limit set $\mathcal{G}$ implies that, for any $\boldsymbol{x} \in \mathcal{G}$ and $t > 0$, we have $t\boldsymbol{x} \in \mathcal{G}$. Moreover, if $\boldsymbol{0}_d \in \mathcal{G}$ and $g(\boldsymbol{0}_d) <1$, we have that the line segment $\{\boldsymbol{0}_d + t\boldsymbol{x}: t \in [0,1] \} \subset \mathcal{G}$ for any $\boldsymbol{x} \in \mathcal{G}$. Furthermore, one can show that the componentwise maxima and minima of $\mathcal{G}$ equal $\boldsymbol{1}_d$ and $-\boldsymbol{1}_d$ respectively, implying that $\mathcal{G}$ (and, thus, the unit-level set $\partial \mathcal{G}$) must touch all boundaries of the unit hypercube $[-1,1]^d$ at least once. Finally, we note that the gauge function $g(\cdot)$ of $\mathcal{G}$ is 1-homogeneous, i.e., $g(t\boldsymbol{x}) = tg(\boldsymbol{x})$ for any $\boldsymbol{x} \in \RR^d, t \in \RR_+$.

Using the radial-angular decomposition of $\boldsymbol{X}$ and the star-shaped property of $\mathcal{G}$, we can reformulate the unit-level set as $\partial \mathcal{G} = \left\{ r\boldsymbol{w}: r>0, \boldsymbol{w} \in \mathcal{S}^{d-1}, g(r\boldsymbol{w}) = 1 \right\}.$ Homogeneity of $g(\cdot)$ implies that, for any $\boldsymbol{w} \in \mathcal{S}^{d-1}$, the radial value of the corresponding point on the unit-level set must be $1/g(\boldsymbol{w})$; hence, $\partial \mathcal{G} = \left\{  \boldsymbol{w}/g(\boldsymbol{w}) : \boldsymbol{w} \in \mathcal{S}^{d-1} \right\}.$ This reformulation has the powerful implication that, to determine $\partial \mathcal{G}$, we need only to evaluate $g(\cdot)$ on $\dsphere$. Illustrations of the radial-angular representations for $\mathcal{G}$ and $\partial \mathcal{G}$ are given in Figure~\ref{fig:limit_set_breakdown}. 

\begin{figure}[t!]
    \centering
    \includegraphics[width=.7\textwidth]{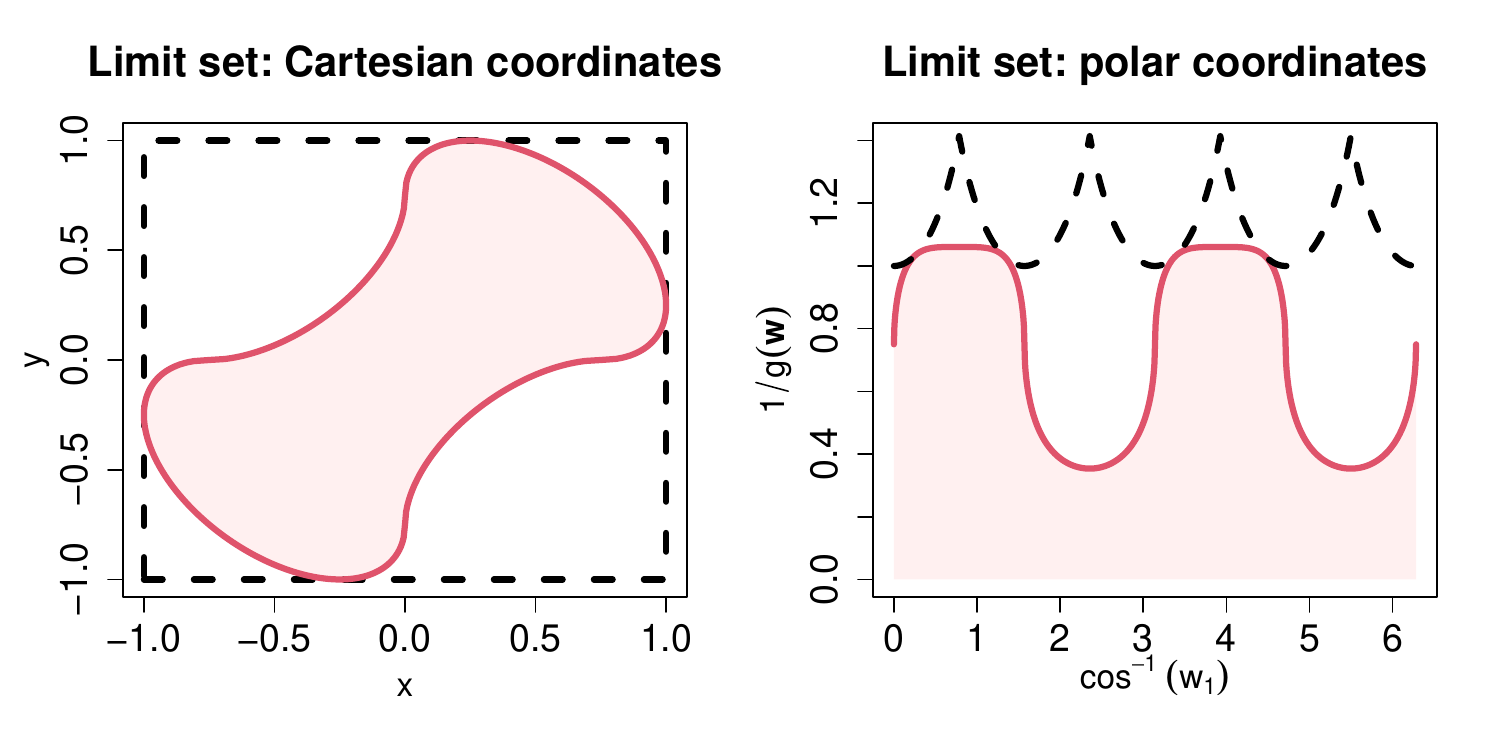}
    \caption{Gaussian limit set $\mathcal{G}$ (shaded regions) and unit-level set $\partial \mathcal{G}$ (solid red lines) in Cartesian (left) and polar (right) coordinates. The black dashed lines in both plots denote the unit square. For the right panel, we plot the radii of the unit-level set (i.e., $1/g(\boldsymbol{w})$) against the standard polar angle, $\cos^{-1}(w_1) = \sin^{-1}(w_2) \in [0,2\pi)$ for each $(w_1,w_2) \in \mathcal{S}^{1}$.}
    \label{fig:limit_set_breakdown}
\end{figure}

\subsection{Geometric extremes on Laplace margins}
As noted in Section \ref{sec:intro}, we consider data on standard Laplace margins as this permits a more detailed description of joint tail behaviour. However, the theoretical results provided by \citet{Nolde2022} linking the unit-level set $\partial \mathcal{G}$ to a variety of modelling frameworks are given only for random vectors on standard exponential margins. The next proposition illustrates that some of the same results also hold for data on Laplace margins.   

\begin{proposition} \label{prop:gauge_exp_laplace}
    Consider a random vector $\boldsymbol{X}\in\mathbb{R}^d \setminus\boldsymbol{0}_d$ with standard Laplace margins and gauge function $g(\cdot)$. Let $\boldsymbol{X}_E$ denote the same vector with unit exponential margins, with gauge function $g_E(\cdot)$. We have equality of the gauge functions for positive angles, that is, $g(\boldsymbol{w}) = g_E(\boldsymbol{w})$ for all $\boldsymbol{w} \in \mathcal{S}^{d-1}_+$.
\end{proposition}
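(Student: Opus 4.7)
The plan is to exploit the fact that the standard Laplace and standard exponential distributions differ only by the factor $1/2$ on the positive half-line, and that $\boldsymbol{X}$ and $\boldsymbol{X}_E$ share the same copula. These facts will combine to show that, on the positive orthant, the joint densities of $\boldsymbol{X}$ and $\boldsymbol{X}_E$ are related by a deterministic translation, which then propagates through the gauge-defining limit to yield equality of $g$ and $g_E$ on $\mathcal{S}^{d-1}_+$.

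First, I would compute the componentwise quantile transform. Since $F_L(x) = 1 - \tfrac{1}{2}e^{-x}$ for $x > 0$ and $F_E^{-1}(u) = -\log(1-u)$, we have
$$T(x) := (F_E^{-1} \circ F_L)(x) = -\log\bigl(\tfrac{1}{2}e^{-x}\bigr) = x + \log 2, \qquad x > 0.$$
Because $\boldsymbol{X}$ and $\boldsymbol{X}_E$ share the same copula, $\boldsymbol{X}_E = T(\boldsymbol{X})$ componentwise. On $\mathbb{R}^d_+$ this reduces to a pure translation by $(\log 2)\boldsymbol{1}_d$ with unit Jacobian, so by change of variables
$$f_{\boldsymbol{X}}(\boldsymbol{x}) = f_{\boldsymbol{X}_E}\bigl(\boldsymbol{x} + (\log 2)\boldsymbol{1}_d\bigr), \qquad \boldsymbol{x} \in \mathbb{R}^d_+.$$

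Next, I would fix $\boldsymbol{w} \in \mathcal{S}^{d-1}_+$, so that $t\boldsymbol{w} \in \mathbb{R}^d_+$ for every $t>0$, and apply the gauge definition \eqref{eqn:gauge_def}. Dividing by $t$ after taking logarithms of the density identity gives
$$g(\boldsymbol{w}) = \lim_{t \to \infty}\frac{-\log f_{\boldsymbol{X}}(t\boldsymbol{w})}{t} = \lim_{t \to \infty}\frac{-\log f_{\boldsymbol{X}_E}\bigl(t\boldsymbol{w} + (\log 2)\boldsymbol{1}_d\bigr)}{t}.$$
Writing the argument on the right as $t\bigl(\boldsymbol{w} + (\log 2/t)\boldsymbol{1}_d\bigr)$ and using the $1$-homogeneity and continuity of $g_E$, the point $\boldsymbol{w} + (\log 2/t)\boldsymbol{1}_d$ converges to $\boldsymbol{w}$ while the scale factor diverges, and the limit collapses to $g_E(\boldsymbol{w})$.

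The main obstacle lies in this last step: justifying the passage to the limit when the direction $\boldsymbol{w} + (\log 2/t)\boldsymbol{1}_d$ drifts with $t$, since the gauge definition \eqref{eqn:gauge_def} is a priori only pointwise in the direction. I would handle this by appealing to uniform convergence of $-\log f_{\boldsymbol{X}_E}(s\boldsymbol{u})/s$ to $g_E(\boldsymbol{u})$ over a compact neighborhood of $\boldsymbol{w}$, a property that follows from the standard regularity conditions underlying convergence of the scaled sample cloud to $\mathcal{G}$ in the Hausdorff metric (as invoked by \citet{Balkema2010} and \citet{Nolde2014}). Equivalently, one may decompose $t\boldsymbol{w} + (\log 2)\boldsymbol{1}_d = s_t \hat{\boldsymbol{w}}_t$ with $s_t/t \to 1$ and $\hat{\boldsymbol{w}}_t \to \boldsymbol{w}$, and combine the pointwise limit along the ray $\boldsymbol{w}$ with the continuity of $g_E$ to conclude that the right-hand side above equals $g_E(\boldsymbol{w})$, yielding $g(\boldsymbol{w}) = g_E(\boldsymbol{w})$ for all $\boldsymbol{w} \in \mathcal{S}^{d-1}_+$.
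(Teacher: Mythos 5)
Your proposal is correct and follows essentially the same route as the paper's proof: you use the fact that on the positive orthant the Laplace-to-exponential transformation is a translation by $\log 2$ (with unit Jacobian), write the density of one vector in terms of the other, and then absorb the shift into a drifting radius $t^* \sim t$ and angle $\boldsymbol{w}^* \to \boldsymbol{w}$ so that the gauge limit passes through. The one point worth noting is that you explicitly flag, and then patch, the delicate step — passing to the limit along a drifting direction — whereas the paper's proof states the corresponding asymptotic equivalence $-\log f_{\boldsymbol{X}}(t^*\boldsymbol{w}^*)/t \sim -\log f_{\boldsymbol{X}}(t\boldsymbol{w})/t$ without comment; your appeal to uniform convergence over a compact neighborhood of $\boldsymbol{w}$ (or, equivalently, continuity of $g_E$ together with the Hausdorff-convergence regularity) is exactly the justification the paper's proof implicitly relies on, so you have if anything been more scrupulous.
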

Proof of Proposition~\ref{prop:gauge_exp_laplace} is provided in Appendix~\ref{proof:prop:gauge_exp_laplace}. 
\begin{remark}
    From Proposition \ref{prop:gauge_exp_laplace}, we immediately see that the sets $\left\{ \boldsymbol{w}/g_E(\boldsymbol{w}) : \boldsymbol{w} \in \mathcal{S}_+^{d-1}  \right\}$ and $\left\{ \boldsymbol{w}/g(\boldsymbol{w}) : \boldsymbol{w} \in \mathcal{S}_+^{d-1} \right\}$ are identical, implying equality of the unit-level sets on the positive orthant. This has the strong implication that the results proposed by \citet{Nolde2022}, which link the frameworks described in equations \eqref{eqn:led_tawn} and \eqref{eqn:wads_tawn}, as well as the model proposed by \citet{Simpson2020}, to the unit-level set $\partial \mathcal{G}$, are also valid for random vectors with Laplace margins. 
\end{remark}

\subsection{Constructing valid unit-level sets}
\label{sec:valid}

We now consider estimation of the unit-level set $\partial \mathcal{G}$ or, equivalently, the gauge function $g(\cdot)$ on $\dsphere$. This requires estimates of the corresponding limit set, $\mathcal{G}$, to have certain properties; see Section~\ref{sec:the:overview}. In what follows, we show that such properties are easily satisfied via an appropriate estimator for $g(\cdot)$. 

We begin by noting that $g(\cdot)$ must satisfy the constraint described in the following Proposition \ref{prop:g_lowerbound}. 
\begin{proposition}\label{prop:g_lowerbound}
    For all $\boldsymbol{w} \in \mathcal{S}^{d-1}$, the gauge function $g(\cdot)$ satisfies the constraint that 
    \begin{equation*}
        g(\boldsymbol{w}) \geq ||\boldsymbol{w}||_{\infty},
    \end{equation*} 
    where $||\boldsymbol{x}||_{\infty} := \max\{|x_1|,\hdots,|x_d|\}$ denotes the infinity norm.
\end{proposition}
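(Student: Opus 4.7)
The plan is to reduce Proposition~\ref{prop:g_lowerbound} to the geometric containment $\mathcal{G}\subset[-1,1]^d$, which is recalled in Section~\ref{sec:the:overview} by stating that the componentwise maxima and minima of $\mathcal{G}$ equal $\boldsymbol{1}_d$ and $-\boldsymbol{1}_d$, respectively. Combined with the polar reformulation $\partial\mathcal{G}=\{\boldsymbol{w}/g(\boldsymbol{w}):\boldsymbol{w}\in\mathcal{S}^{d-1}\}$ obtained from the $1$-homogeneity of $g$, this turns the proposition into a one-line coordinate-wise inequality.

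Carrying out the steps, I would first fix an arbitrary $\boldsymbol{w}\in\mathcal{S}^{d-1}$ and observe that, by the homogeneity argument in Section~\ref{sec:the:overview}, the point $\boldsymbol{w}/g(\boldsymbol{w})$ lies on $\partial\mathcal{G}$ and hence in $\mathcal{G}\subset[-1,1]^d$. Second, I would read off the $i$-th coordinate to obtain $|w_i|/g(\boldsymbol{w})\leq 1$, i.e., $g(\boldsymbol{w})\geq|w_i|$, for every $i\in\{1,\dots,d\}$. Third, maximising over $i$ gives $g(\boldsymbol{w})\geq\max_i|w_i|=\|\boldsymbol{w}\|_\infty$, as claimed.

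The only substantive input is the background containment $\mathcal{G}\subset[-1,1]^d$, which the paper invokes from the standard geometric extremes literature (e.g., \citealt{Nolde2014,Nolde2022}). If one wished to make the proof fully self-contained for Laplace margins, the quickest route would be to apply the Balkema--Nolde identity $g(\boldsymbol{x})=\lim_{t\to\infty}-\log f_{\boldsymbol{X}}(t\boldsymbol{x})/t$ together with a Laplace-method asymptotic for the marginal integral $f_{X_i}(tw_i)=\int f_{\boldsymbol{X}}(\boldsymbol{x})\,d\boldsymbol{x}_{-i}=\tfrac{1}{2}e^{-t|w_i|}$: the saddlepoint evaluation pins down $\min_{\boldsymbol{x}:\,x_i=w_i}g(\boldsymbol{x})=|w_i|$, which already delivers the bound. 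Either way, this is not the real obstacle, and Proposition~\ref{prop:g_lowerbound} is best viewed as a clean geometric consequence of the Laplace marginal behaviour rather than a delicate technical result; its main interest, as Section~\ref{sec:valid} develops, is that it provides an explicit pointwise floor which any neural-network estimator of $g$ can be forced to respect by construction.
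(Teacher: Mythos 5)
Your proof is correct and uses essentially the same argument as the paper: both deduce from $\boldsymbol{w}/g(\boldsymbol{w})\in\partial\mathcal{G}\subset[-1,1]^d$ that each coordinate satisfies $|w_i|/g(\boldsymbol{w})\leq 1$ and then maximise over $i$. Your version is marginally more streamlined (bounding $|w_i|$ directly rather than treating maxima and minima separately, as the paper does), but the underlying idea is identical.
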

\begin{proof}
    Given $\boldsymbol{w} \in \mathcal{S}^{d-1}$, let $\boldsymbol{w}/g(\boldsymbol{w}) \in \partial \mathcal{G}$ denote the corresponding point on the unit-level set. Since $\partial \mathcal{G} \subset [-1,1]^d$, we must have $\max_{i=1,\dots,d} \{{w}_i\}/g(\boldsymbol{w}) = \max_{i=1,\dots,d} \{{w}_i/g(\boldsymbol{w})\} \leq 1,$ which implies that $\max_{i=1,\dots,d} \{{w}_i\} \leq g(\boldsymbol{w})$. Similarly for the vector minima, we obtain $\max_{i=1,\dots,d} \{-{w}_i\} = -\min_{i=1,\dots,d} \{{w}_i\} \leq g(\boldsymbol{w})$. Considering each component of $\boldsymbol{w}$ gives $g(\boldsymbol{w}) \geq \max(w_i,-w_i)$ for all $i = 1,\hdots, d$; hence,  $g(\boldsymbol{w}) \geq ||\boldsymbol{w}||_{\infty}$.
\end{proof}
Ignoring this constraint may lead to estimates of $\mathcal{G}$ not contained within the hypercube $[-1,1]^d$. Such estimates lack any theoretical interpretation and must undergo a user-specified rescaling to be of use; see, e.g., \cite{Papastathopoulos2024} or \citet{Murphy-Barltrop2024}. We exploit Proposition~\ref{prop:g_lowerbound} to ensure that our model does not suffer from this problem and can be designed to always provide valid estimates of limit sets. Now consider any continuous \textit{radial} function $h:\dsphere \mapsto \RR_+$. The following proposition holds.
\begin{proposition} \label{prop:limit_set_properties}
    Suppose $h(\cdot):\dsphere \mapsto \RR_+$ satisfies $1/h(\boldsymbol{w}) \geq ||\boldsymbol{w}||_{\infty}$ for all $\boldsymbol{w} \in \dsphere$, and define the set 
    \begin{equation*}
        \mathcal{H} := \left\{ \boldsymbol{x} \in \RR^d\setminus\{ \boldsymbol{0}_d\} \; \bigg\vert \; ||\boldsymbol{x}|| \leq h(\boldsymbol{x}/||\boldsymbol{x}||) \right\} \bigcup \bigg\{ \boldsymbol{0}_d\bigg\}.
    \end{equation*}
    Then $\mathcal{H}$ is star-shaped and satisfies $\mathcal{H} \subset [-1,1]^d$. Moreover, $\mathcal{H}$ is compact. 
\end{proposition}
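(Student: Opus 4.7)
The plan is to verify each of the three claims directly from the definition of $\mathcal{H}$, with the only mildly delicate point being the treatment of the origin (at which the angular coordinate $\boldsymbol{x}/\|\boldsymbol{x}\|$ is undefined). All three properties will essentially follow from the defining inequality $\|\boldsymbol{x}\| \leq h(\boldsymbol{x}/\|\boldsymbol{x}\|)$ combined with the hypothesis $h(\boldsymbol{w}) \leq 1/\|\boldsymbol{w}\|_\infty$ from Proposition~\ref{prop:g_lowerbound}.

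For star-shapedness with respect to $\boldsymbol{0}_d$ (the convention used for $\mathcal{G}$ in Section~\ref{sec:the:overview}), I would fix $\boldsymbol{x}\in\mathcal{H}$ and $t\in[0,1]$ and show $t\boldsymbol{x}\in\mathcal{H}$. The cases $t=0$ and $\boldsymbol{x}=\boldsymbol{0}_d$ are trivial; otherwise the angular component is invariant under positive scaling, so $t\boldsymbol{x}/\|t\boldsymbol{x}\|=\boldsymbol{x}/\|\boldsymbol{x}\|$, and the chain $\|t\boldsymbol{x}\|=t\|\boldsymbol{x}\|\leq\|\boldsymbol{x}\|\leq h(\boldsymbol{x}/\|\boldsymbol{x}\|)=h(t\boldsymbol{x}/\|t\boldsymbol{x}\|)$ gives the desired membership.

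For the containment $\mathcal{H}\subset[-1,1]^d$, I would take a nonzero $\boldsymbol{x}\in\mathcal{H}$ with angular coordinate $\boldsymbol{w}:=\boldsymbol{x}/\|\boldsymbol{x}\|$, use the identity $\|\boldsymbol{x}\|_\infty=\|\boldsymbol{x}\|\,\|\boldsymbol{w}\|_\infty$, and combine $\|\boldsymbol{x}\|\leq h(\boldsymbol{w})$ with $\|\boldsymbol{w}\|_\infty\leq 1/h(\boldsymbol{w})$ to obtain $\|\boldsymbol{x}\|_\infty\leq 1$. Compactness then reduces to closedness, since boundedness is automatic. For closedness I would run a standard sequential argument: given $(\boldsymbol{x}_n)\subset\mathcal{H}$ with $\boldsymbol{x}_n\to\boldsymbol{x}$, the case $\boldsymbol{x}=\boldsymbol{0}_d$ is immediate by the construction of $\mathcal{H}$, while in the other case $\|\boldsymbol{x}_n\|$ is eventually bounded away from zero, so $\boldsymbol{x}_n/\|\boldsymbol{x}_n\|\to\boldsymbol{x}/\|\boldsymbol{x}\|$; continuity of $h$ then allows me to take limits in $\|\boldsymbol{x}_n\|\leq h(\boldsymbol{x}_n/\|\boldsymbol{x}_n\|)$ and conclude $\boldsymbol{x}\in\mathcal{H}$.

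I expect no genuine obstacle: continuity of $h$ is precisely what is needed to pass the defining inequality to sequential limits, and the pointwise bound on $h$ is tailored so that containment in $[-1,1]^d$ follows immediately. The only care required is to isolate $\boldsymbol{0}_d$ as a separate case wherever the angular coordinate $\boldsymbol{x}/\|\boldsymbol{x}\|$ appears, which the definition of $\mathcal{H}$ (with $\{\boldsymbol{0}_d\}$ adjoined explicitly) already accommodates.
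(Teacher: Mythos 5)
Your proposal is correct and follows the same route as the paper's proof: the same scaling argument for star-shapedness, the same manipulation $\|\boldsymbol{x}\|_\infty=\|\boldsymbol{x}\|\,\|\boldsymbol{w}\|_\infty\leq h(\boldsymbol{w})/h(\boldsymbol{w})=1$ for containment, and the same boundedness-plus-closedness (Heine--Borel) argument via a sequential limit and continuity of $h$. Your version of the closedness step is if anything slightly cleaner -- you pass directly to the limit in $\|\boldsymbol{x}_n\|\leq h(\boldsymbol{x}_n/\|\boldsymbol{x}_n\|)$ after noting $\boldsymbol{x}_n/\|\boldsymbol{x}_n\|\to\boldsymbol{x}/\|\boldsymbol{x}\|$, whereas the paper routes through an explicit $\varepsilon$--triangle-inequality estimate; and you make explicit the point (left implicit in the paper) that $\|\boldsymbol{x}_n\|$ must eventually be bounded away from zero for the angular quotient to converge. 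One small attribution slip: the bound $1/h(\boldsymbol{w})\geq\|\boldsymbol{w}\|_\infty$ is the stated hypothesis of this proposition, not an import from Proposition~\ref{prop:g_lowerbound} (which concerns the gauge function $g$, not the arbitrary radial function $h$).
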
 
Proof of Proposition~\ref{prop:limit_set_properties} is provided in Appendix~\ref{proof:prop:limit_set_properties}. Proposition \ref{prop:limit_set_properties} implies that we can design models for $\mathcal{G}$ that satisfy (some of) the validity properties of limit sets by starting with any (potentially arbitrary) continuous radial function $h(\cdot)$ that satisfies $1/h(\boldsymbol{w}) \geq ||\boldsymbol{w}||_{\infty}$; the corresponding set $\mathcal{H}$ will be a subset of $[-1,1]^d$, star-shaped, and compact. Whilst $\mathcal{H}$ must be a subset of the unit hyper-cube, it is not guaranteed to intersect with its boundary in each component, or even at all; the same is true of the boundary set $\partial \mathcal{H}$. Since this is a requirement of valid limit and unit-level sets, we propose a rescaling procedure to ensure it is always satisfied for our model. 

 Observe that the boundary of $\mathcal{H}$ is given by $\partial \mathcal{H} = \left\{ \boldsymbol{w} h(\boldsymbol{w}) : \boldsymbol{w} \in \dsphere \right\}$. For each $i = 1,\hdots, d$, we define $b_i(w_i):= \mathbbm{1}(w_i \geq 0)b_i^U - \mathbbm{1}(w_i < 0)b_i^L > 0$, where
\begin{align}
\label{Eq:scaling_factors}
       b_i^U:= \max \left\{ w_ih(\boldsymbol{w})  \mid \boldsymbol{w} \in \dsphere  \right\} > 0, \quad \text{and}\quad 
        b_i^L := \min \left\{ w_ih(\boldsymbol{w}) \mid \boldsymbol{w} \in \dsphere \right\} < 0,
\end{align}
for all $i = 1,\hdots,d$. Using these scaling functions, we define the rescaled set 
\begin{equation*}
    \widetilde{\partial \mathcal{H}} := \left\{  h(\boldsymbol{w}) \left(\frac{w_1}{b_1(w_1)}, \hdots, \frac{w_d}{b_d(w_d)} \right) \bigg\vert \boldsymbol{w} \in \dsphere \right\},
\end{equation*}
which satisfies the following Proposition \ref{prop:rescaling}. 
\begin{proposition} \label{prop:rescaling}
    The rescaled set $\widetilde{\partial \mathcal{H}}$ is in one-to-one correspondence with $\partial \mathcal{H}$, satisfies $\widetilde{\partial \mathcal{H}} \subset [-1,1]^d$, and has componentwise maxima and minima of $ \boldsymbol{1}_d$ and $-\boldsymbol{1}_d$, respectively. 
\end{proposition}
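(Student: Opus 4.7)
The plan is to verify each of the three claims separately, exploiting the fact that both $\partial \mathcal{H}$ and $\widetilde{\partial \mathcal{H}}$ are naturally parametrized by $\boldsymbol{w} \in \dsphere$, and that the scalar factors $b_i(w_i)$ depend on $\boldsymbol{w}$ only through the sign of $w_i$, so that within each open orthant the rescaling reduces to a fixed positive diagonal transformation.

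For the one-to-one correspondence, I use the fact that $h > 0$ on $\dsphere$ to write each $\boldsymbol{x} \in \partial \mathcal{H}$ uniquely as $\boldsymbol{x} = \boldsymbol{w} h(\boldsymbol{w})$ with $\boldsymbol{w} = \boldsymbol{x}/\|\boldsymbol{x}\|$, and define $\Phi:\partial \mathcal{H} \to \widetilde{\partial \mathcal{H}}$ by $\Phi(\boldsymbol{x}) := h(\boldsymbol{w})(w_1/b_1(w_1), \ldots, w_d/b_d(w_d))$. Surjectivity is immediate from the definition of $\widetilde{\partial \mathcal{H}}$. For injectivity, note that each $b_i(w_i) > 0$, so $\mathrm{sgn}(\Phi(\boldsymbol{x})_i) = \mathrm{sgn}(w_i)$ for every $i$; hence the orthant of $\Phi(\boldsymbol{x})$ identifies that of $\boldsymbol{x}$, and within that orthant $\Phi$ acts as a fixed positive diagonal rescaling of the invertible map $\boldsymbol{w} \mapsto \boldsymbol{w} h(\boldsymbol{w})$. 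Inversion is then explicit: recover the orthant from signs, multiply coordinate-wise by the constants $b_i$, and normalise to obtain $\boldsymbol{w}$.

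For containment in $[-1,1]^d$ and the componentwise extrema, I proceed coordinate-wise via case analysis on $\mathrm{sgn}(w_i)$. When $w_i \geq 0$, the $i$-th entry of the rescaled point equals $h(\boldsymbol{w}) w_i / b_i^U$, which lies in $[0,1]$ by the definition of $b_i^U$ as a maximum over $\dsphere$; when $w_i < 0$, the entry equals $h(\boldsymbol{w}) w_i / (-b_i^L) \in [-1,0)$ by the definition of $b_i^L$. Compactness of $\dsphere$ and continuity of $h$ ensure that $b_i^U$ and $b_i^L$ are attained at angles $\boldsymbol{w}^*, \boldsymbol{w}^{**} \in \dsphere$, respectively. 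Since $b_i^U>0$ (witnessed by the $i$-th standard basis vector) and $h>0$, any such $\boldsymbol{w}^*$ must have $w_i^* > 0$, whence the rescaled $i$-th coordinate at $\boldsymbol{w}^*$ equals exactly $+1$; a symmetric argument at $\boldsymbol{w}^{**}$ yields $-1$. Running this simultaneously over $i = 1, \ldots, d$ gives both $\widetilde{\partial \mathcal{H}} \subset [-1,1]^d$ and the claimed componentwise extrema $\pm \boldsymbol{1}_d$.

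I expect the only subtlety to be the treatment of orthant boundaries, where $w_i = 0$ for some $i$. The indicator convention $b_i(w_i) = \mathbbm{1}(w_i \geq 0) b_i^U - \mathbbm{1}(w_i < 0) b_i^L$ assigns $b_i(0) = b_i^U > 0$, so the rescaling remains well-defined and continuous at $w_i = 0$ (both sign cases yield a zero $i$-th entry in the limit), and $\Phi$ is globally well-defined. Beyond this bookkeeping, the argument is elementary, relying only on strict positivity and continuity of $h$ and compactness of $\dsphere$.
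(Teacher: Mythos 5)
Your proof is correct, and it takes a mildly different route from the paper's on the one-to-one-correspondence claim. The paper first proves Lemma~\ref{lem:kappa_bijective} (with an explicit closed-form inverse $\kappa^{-1}$), rewrites $\widetilde{\partial\mathcal{H}}$ in terms of $\kappa^{-1}$, and then appeals to the lemma to conclude bijectivity; you instead construct $\Phi$ directly and argue injectivity by the sign-preservation observation (since each $b_i(w_i)>0$, the orthant is preserved, and within an orthant $\Phi$ is a constant positive diagonal rescaling, hence invertible). This bypasses the need for the explicit $\kappa^{-1}$ formula and is a cleaner standalone argument for this proposition; the paper's version of the lemma is still useful elsewhere, since $\kappa^{-1}$ is needed concretely to write down the rescaled gauge function $\tilde g$ in equation~\eqref{eqn:tilde_g}. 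Your treatment of the containment and componentwise-extrema claims is essentially identical to the paper's: coordinatewise case analysis on $\mathrm{sgn}(w_i)$ using the defining properties of $b_i^U$ and $b_i^L$, plus the observation that the extrema are attained (where you additionally spell out the compactness/continuity justification and that the maximiser $\boldsymbol{w}^{u,i}$ must have $w_i^{u,i}>0$ since $b_i^U>0$ — a detail the paper leaves implicit). The handling of $w_i=0$ via the convention $b_i(0)=b_i^U$ is the right bookkeeping point to flag and is dealt with correctly.
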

To map $\partial \mathcal{H}$ to $\widetilde{\partial \mathcal{H}}$, we require the transformation described in the following lemma.

\begin{lemma} \label{lem:kappa_bijective}
    Let $\kappa:\dsphere \mapsto \dsphere$ denote the following mapping: 
        \begin{equation*}
            \kappa(\boldsymbol{w}) =   \left(\frac{w_1}{b_1(w_1)}, \hdots, \frac{w_d}{b_d(w_d)} \right) \bigg/  \bigg\lVert  \left(\frac{w_1}{b_1(w_1)}, \hdots, \frac{w_d}{b_d(w_d)} \right) \bigg \rVert .
        \end{equation*}
    Then $\kappa$ is a bijective mapping.  
\end{lemma}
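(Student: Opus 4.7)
The plan is to reduce the claim to the elementary fact that scaling by a positive diagonal matrix followed by radial projection is a bijection on each orthant of $\mathcal{S}^{d-1}$. The key observation is that $\kappa$ preserves the orthant structure: since $b_i^U > 0$ and $-b_i^L > 0$, the denominator $b_i(w_i)$ is always strictly positive, so $w_i/b_i(w_i)$ has the same sign as $w_i$ (and equals zero if and only if $w_i = 0$). Normalising by a positive scalar does not change signs, so $\kappa$ sends each closed orthant of $\mathcal{S}^{d-1}$ into itself.

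Given this, I would fix a sign vector $s \in \{-,+\}^d$ and study $\kappa$ restricted to the corresponding closed orthant $Q_s \subset \mathcal{S}^{d-1}$. On $Q_s$, each function $w_i \mapsto b_i(w_i)$ is a positive constant (equal to $b_i^U$ or $-b_i^L$), so the restriction takes the form
\begin{equation*}
\kappa|_{Q_s}(\boldsymbol{w}) = \frac{D_s\boldsymbol{w}}{\|D_s\boldsymbol{w}\|},
\end{equation*}
where $D_s$ is a positive diagonal matrix with entries depending only on $s$. I would then exhibit an explicit inverse on $Q_s$: for any target $\boldsymbol{v} \in Q_s$, set $\boldsymbol{w} := D_s^{-1}\boldsymbol{v}/\|D_s^{-1}\boldsymbol{v}\|$. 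This $\boldsymbol{w}$ lies in $Q_s$ because $D_s^{-1}$ is positive diagonal and hence sign-preserving, substituting back confirms $\kappa(\boldsymbol{w}) = \boldsymbol{v}$, and uniqueness holds because any other preimage would have to be a positive multiple of $D_s^{-1}\boldsymbol{v}$ lying on the unit sphere.

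Since the orthants $\{Q_s\}$ cover $\mathcal{S}^{d-1}$ and the orthant-wise inverses coincide on the coordinate hyperplanes (any $w_i = 0$ forces $\kappa(\boldsymbol{w})_i = 0$, so the two adjacent formulas agree on shared boundary faces), the local bijections glue into a global bijection of $\mathcal{S}^{d-1}$. The principal bookkeeping obstacle is exactly this consistency at orthant boundaries, and it is resolved by the observation above: the piecewise indicator in the definition of $b_i(\cdot)$ controls only a factor multiplying $w_i$ itself, so vanishing coordinates remain fixed across neighbouring pieces. Beyond this, nothing in the argument is delicate — the heart of the proof is the one-line construction of the inverse via $D_s^{-1}$.
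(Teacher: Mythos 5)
Your proof is correct, but it takes a genuinely different route from the paper's. The paper proves injectivity directly (via the ratio-of-norms constant $c$, showing $c=1$ from $\|\boldsymbol{w}\|=\|\boldsymbol{w}^*\|=1$) and surjectivity by writing down an explicit formula for $\kappa^{-1}(\boldsymbol{w})$ under the WLOG assumption $w_d \neq 0$. You instead observe that $\kappa$ preserves the sign vector of its argument, restrict to each closed orthant $Q_s$ where the map collapses to the radial projection $\boldsymbol{w}\mapsto D_s\boldsymbol{w}/\|D_s\boldsymbol{w}\|$ of a fixed positive diagonal scaling, invert orthant-by-orthant via $D_s^{-1}$, and glue. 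Your approach is more structural and has the merit of making explicit the sign-preservation step that the paper's injectivity argument uses implicitly: the paper concludes $w_i = cw_i^*$, which tacitly requires $b_i(w_i)=b_i(w_i^*)$ and hence that $w_i, w_i^*$ share a sign. On the other hand, the paper's construction produces a closed-form expression for $\kappa^{-1}$, which is referenced in the definition of the rescaled gauge function $\tilde g$ in equation (5) and so has a concrete downstream use; your gluing argument establishes existence and uniqueness of the inverse without exhibiting a single formula valid on the full sphere. One small point worth tightening in your write-up: on a closed orthant with $s_i=-$, the scalar $b_i(w_i)$ is not literally constant (it switches to $b_i^U$ on the face $w_i=0$), but since $w_i/b_i(w_i)=0$ there regardless, the identity $\kappa|_{Q_s}(\boldsymbol{w})=D_s\boldsymbol{w}/\|D_s\boldsymbol{w}\|$ still holds with $D_s$ chosen from the orthant interior; you gesture at this when discussing the boundary consistency, but it would be cleaner to note it at the point where $D_s$ is introduced.
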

Lemma~\ref{lem:kappa_bijective} is used to prove Proposition~\ref{prop:rescaling} in Appendix~\ref{proof:prop:rescaling}. Proposition \ref{prop:rescaling} permits a new construction of a valid gauge function, denoted by $\Tilde{g}:\dsphere \mapsto \RR_+$, with 
\begin{equation} \label{eqn:tilde_g}
    \Tilde{g}(\boldsymbol{w}) := 1 \bigg/ \left\Vert h(\kappa^{-1}(\boldsymbol{w})) \left(\frac{\kappa^{-1}(\boldsymbol{w})_1}{b_1(\kappa^{-1}(\boldsymbol{w})_1)}, \hdots, \frac{\kappa^{-1}(\boldsymbol{w})_d}{b_d(\kappa^{-1}(\boldsymbol{w})_d)} \right) \right\Vert .
\end{equation}
Note that we term $\Tilde{g}(\cdot)$ the \textit{rescaled} gauge function. Since $h(\cdot)$ is continuous, we have that $\Tilde{g}(\cdot)$ is also continuous. 
Moreover, as we show in the following corollaries, $\Tilde{g}(\cdot)$ satisfies the theoretical properties required to produce valid limit sets. 
\begin{corollary}\label{corol:rescaled_gauge_estimate}
    The rescaled gauge function $\Tilde{g}(\cdot)$ satisfies $\Tilde{g}(\boldsymbol{w})\geq || \boldsymbol{w} ||_{\infty}$ for all $\boldsymbol{w} \in \dsphere$. Furthermore, letting $\boldsymbol{w}^{u,i} = \argmax_{\boldsymbol{w} \in \dsphere}\{w_ih(\boldsymbol{w}) \}$ and $\boldsymbol{w}^{l,i} = \argmin_{\boldsymbol{w} \in \dsphere}\{w_ih(\boldsymbol{w}) \}$ for $i=1,\dots,d$, we have $\Tilde{g}(\kappa(\boldsymbol{w}^{u,i})) = || \kappa(\boldsymbol{w}^{u,i}) ||_{\infty}$ and $\Tilde{g}(\kappa(\boldsymbol{w}^{l,i})) = || \kappa(\boldsymbol{w}^{l,i}) ||_{\infty}$ for each $i = 1,\hdots, d$.  
\end{corollary}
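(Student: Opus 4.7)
The plan is to treat the corollary as the natural analogue of Proposition~\ref{prop:g_lowerbound} for $\widetilde{\partial \mathcal{H}}$, exploiting the containment and boundary-touching properties already guaranteed by Proposition~\ref{prop:rescaling}. First I would establish the lower bound: by construction of $\tilde{g}$ in \eqref{eqn:tilde_g}, for any $\boldsymbol{w} \in \dsphere$ the point $\boldsymbol{w}/\tilde{g}(\boldsymbol{w})$ lies on $\widetilde{\partial \mathcal{H}}$, and Proposition~\ref{prop:rescaling} guarantees $\widetilde{\partial \mathcal{H}} \subset [-1,1]^d$. Replaying the argument of Proposition~\ref{prop:g_lowerbound} verbatim then yields $\tilde{g}(\boldsymbol{w}) \geq \|\boldsymbol{w}\|_\infty$.

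For the equality statements, I would work directly with the construction of the rescaling. Take any $i \in \{1,\dots,d\}$ and let $\boldsymbol{w}^{u,i} \in \dsphere$ be a maximiser defining $b_i^U$. Because $b_i^U > 0$, the maximiser satisfies $w_i^{u,i} > 0$, so $b_i(w_i^{u,i}) = b_i^U$. The corresponding point on $\partial \mathcal{H}$ is $\boldsymbol{w}^{u,i} h(\boldsymbol{w}^{u,i})$, and applying the rescaling in \eqref{eqn:tilde_g} shows that the $i$-th component of the image point in $\widetilde{\partial \mathcal{H}}$ is $h(\boldsymbol{w}^{u,i})\, w_i^{u,i}/b_i^U = 1$. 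Hence this image point has infinity norm at least $1$, but it also lies in $[-1,1]^d$ by Proposition~\ref{prop:rescaling}, so its infinity norm equals $1$ and is attained in coordinate $i$.

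To convert this into a statement about $\tilde{g}$, I would use Lemma~\ref{lem:kappa_bijective}: the angular coordinate of the image point is exactly $\kappa(\boldsymbol{w}^{u,i}) \in \dsphere$, and by definition of $\tilde{g}$, the image point equals $\kappa(\boldsymbol{w}^{u,i})/\tilde{g}(\kappa(\boldsymbol{w}^{u,i}))$. Dividing componentwise, the infinity norm of the image point is $\|\kappa(\boldsymbol{w}^{u,i})\|_\infty / \tilde{g}(\kappa(\boldsymbol{w}^{u,i}))$, and equating this to $1$ gives $\tilde{g}(\kappa(\boldsymbol{w}^{u,i})) = \|\kappa(\boldsymbol{w}^{u,i})\|_\infty$. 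The argument for $\boldsymbol{w}^{l,i}$ is identical, with the $i$-th component of the image point equal to $-1$ instead of $1$, and $b_i(w_i^{l,i}) = -b_i^L > 0$.

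The only real obstacle is bookkeeping: keeping track of which quantity is an angle on $\dsphere$ (pre- and post-$\kappa$), which is a radial value, and ensuring that the sign conventions for $b_i(w_i)$ via the indicator split give the correct positive denominators at the argmax and argmin. Once that is carefully unwound, both parts collapse into the same one-line observation that the rescaled boundary hits $\pm 1$ in coordinate $i$ at precisely the angles $\kappa(\boldsymbol{w}^{u,i})$ and $\kappa(\boldsymbol{w}^{l,i})$.
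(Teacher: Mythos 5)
Your proposal is correct and follows essentially the same route as the paper's proof: both establish the lower bound by noting $\boldsymbol{w}/\tilde{g}(\boldsymbol{w}) \in \widetilde{\partial \mathcal{H}} \subset [-1,1]^d$ and then replaying Proposition~\ref{prop:g_lowerbound}, and both establish the equalities by observing that the $i$-th component of the rescaled boundary point at $\boldsymbol{w}^{u,i}$ (respectively $\boldsymbol{w}^{l,i}$) equals $+1$ (respectively $-1$), so the infinity norm of the boundary point is exactly one. The only cosmetic difference is that the paper starts from $\|\kappa(\boldsymbol{w}^{u,i})\|_\infty$ and algebraically manipulates it into $\tilde{g}(\kappa(\boldsymbol{w}^{u,i}))$, whereas you argue from the boundary point in $\widetilde{\partial\mathcal{H}}$ and extract both sides; these are the same identity read in opposite directions.
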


\begin{corollary} \label{corol:all_properties}
    Let $h(\cdot)$ be any continuous radial function and define the corresponding rescaled gauge function by $\tilde{g}(\cdot)$ as in \eqref{eqn:tilde_g}. The set 
     \begin{equation*}
            \widetilde{\mathcal{H}} := \left\{ \boldsymbol{x} \in \RR^d\setminus\{ \boldsymbol{0}_d\} \bigg\vert ||\boldsymbol{x}|| \leq \frac{1}{ \tilde{g}(\boldsymbol{x}/||\boldsymbol{x}||)} \right\} \bigcup \bigg\{ \boldsymbol{0}_d\bigg\},
        \end{equation*}
    is star-shaped, compact, and satisfies $\widetilde{\mathcal{H}} \subset [-1,1]^d$. Furthermore, $\widetilde{\mathcal{H}}$ has componentwise maxima and minima $ \boldsymbol{1}_d$ and $-\boldsymbol{1}_d$, respectively.
\end{corollary}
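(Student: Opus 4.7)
The plan is to view $\widetilde{\mathcal{H}}$ as an instance of the construction in Proposition \ref{prop:limit_set_properties} with the radial function $h$ replaced by $1/\tilde g$, and then to extract the componentwise extrema claim from Proposition \ref{prop:rescaling}. All four assertions should follow by synthesis of the preceding results, with essentially no new estimates required.

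First I would verify that $1/\tilde g$ meets the hypotheses of Proposition \ref{prop:limit_set_properties}. Since $\tilde g$ is built by composition from the continuous maps $h$, $\kappa^{-1}$, and the $b_i$, with strictly positive denominator, it is continuous and strictly positive on $\dsphere$; hence so is $1/\tilde g : \dsphere \to \RR_+$. The required inequality $1/(1/\tilde g(\boldsymbol w)) = \tilde g(\boldsymbol w) \geq \|\boldsymbol w\|_\infty$ is precisely the content of Corollary \ref{corol:rescaled_gauge_estimate}. Applying Proposition \ref{prop:limit_set_properties} with this radial function immediately yields three of the four claims: $\widetilde{\mathcal{H}}$ is star-shaped, compact, and contained in $[-1,1]^d$.

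For the componentwise extrema I would identify the boundary locus $\{\boldsymbol w/\tilde g(\boldsymbol w) : \boldsymbol w \in \dsphere\}$ of $\widetilde{\mathcal{H}}$ with the set $\widetilde{\partial \mathcal{H}}$ defined before Lemma \ref{lem:kappa_bijective}. Write $\boldsymbol v := \kappa^{-1}(\boldsymbol w)$, well-defined by Lemma \ref{lem:kappa_bijective}, and let $N := \|(v_1/b_1(v_1),\ldots,v_d/b_d(v_d))\|$. The definition of $\kappa$ gives $N\boldsymbol w = (v_1/b_1(v_1),\ldots,v_d/b_d(v_d))$, while \eqref{eqn:tilde_g} gives $1/\tilde g(\boldsymbol w) = h(\boldsymbol v)\, N$. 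Multiplying these, $\boldsymbol w / \tilde g(\boldsymbol w) = h(\boldsymbol v)\,(v_1/b_1(v_1),\ldots,v_d/b_d(v_d))$, which is by definition an element of $\widetilde{\partial \mathcal{H}}$; and by bijectivity of $\kappa$, the assignment $\boldsymbol w \mapsto \boldsymbol w/\tilde g(\boldsymbol w)$ is a bijection between $\dsphere$ and $\widetilde{\partial \mathcal{H}}$. Proposition \ref{prop:rescaling} then supplies componentwise maxima $\boldsymbol 1_d$ and minima $-\boldsymbol 1_d$ for this boundary. Since $\widetilde{\mathcal{H}}$ is star-shaped with respect to $\boldsymbol 0_d$, each coordinate function attains its extremum over $\widetilde{\mathcal{H}}$ on that boundary, so the same extrema characterise $\widetilde{\mathcal{H}}$ itself.

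The main (and really only) obstacle is the bookkeeping in the second step: unpacking \eqref{eqn:tilde_g} carefully enough to express $\boldsymbol w/\tilde g(\boldsymbol w)$ in the closed form above and thereby identify the boundary locus of $\widetilde{\mathcal{H}}$ with $\widetilde{\partial \mathcal{H}}$. Once this identification is made, the corollary is an immediate consequence of Propositions \ref{prop:limit_set_properties} and \ref{prop:rescaling}, Lemma \ref{lem:kappa_bijective}, and Corollary \ref{corol:rescaled_gauge_estimate}.
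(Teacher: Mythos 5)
Your proposal is correct and takes essentially the same route as the paper, which simply remarks that the corollary "follows directly from Propositions \ref{prop:limit_set_properties} and \ref{prop:rescaling}, and Corollary \ref{corol:rescaled_gauge_estimate}." You have supplied exactly the bookkeeping the paper elides: verifying that $1/\tilde g$ satisfies the hypotheses of Proposition \ref{prop:limit_set_properties} (continuity and the bound from Corollary \ref{corol:rescaled_gauge_estimate}), and unpacking \eqref{eqn:tilde_g} to identify $\{\boldsymbol w/\tilde g(\boldsymbol w) : \boldsymbol w\in\dsphere\}$ with $\widetilde{\partial\mathcal{H}}$ so that Proposition \ref{prop:rescaling} gives the componentwise extrema.
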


   Proof of Corollary~\ref{corol:rescaled_gauge_estimate} is provided in Appendix~\ref{proof:corol:rescaled_gauge_estimate}. Proof of Corollary~\ref{corol:all_properties} follows directly from Propositions \ref{prop:limit_set_properties} and \ref{prop:rescaling}, and Corollary \ref{corol:rescaled_gauge_estimate}.

\begin{remark}
   Corollary \ref{corol:all_properties} also implies that the boundary $\widetilde{\partial \mathcal{H}}$ associated with $\widetilde{\mathcal{H}}$ satisfies all of the required theoretical properties of valid unit-level sets. 
\end{remark} 

For inference of the limit set, we exploit Corollary~\ref{corol:all_properties} to ensure our estimates of the limit (or unit-level) set have the required properties for validity; see Section~\ref{sec:estimation_gauge_adf}. We note that an alternative rescaling was proposed by \citet{Papastathopoulos2024}. However, this was applied post-hoc on an initial estimate of the gauge function via a two-step procedure, whereas our rescaling is performed during inference without the need for an additional step. Furthermore, incorporating the rescaling described in equation~\eqref{eqn:tilde_g} within many standard optimisation algorithms would likely result in slow model fitting procedures due to additional labor-heavy computations. However, the use of stochastic gradient descent (see Section~\ref{sec:methods:estimation:training}) within deep learning procedures avoids the bulk of this computational burden, allowing rescaling to be incorporated within the proposed DeepGauge framework.    

\subsection{Extended Angular Dependence Function} \label{sec:extended_ADF}

Consider now any gauge function $g(\cdot)$ with limit and unit-level sets $\mathcal{G}$ and $\partial \mathcal{G}$ respectively. Proposition \ref{prop:gauge_exp_laplace} implies that valid estimates of $\partial \mathcal{G}$ can be used to immediately obtain parameter estimates for several existing modelling frameworks, providing information about the extremal dependence structure in the positive orthant of $\RR^d$. However, as noted in Section~\ref{sec:intro}, our interest lies more generally in understanding the extremal dependence in all orthants of $\RR^d$. The next proposition shows that, with suitable scaling coefficients, the limit set in any orthant can be obtained by considering a rescaled random vector on $\dsphere_+$.  

\begin{proposition} \label{prop:trans_gauge}
    Given $\boldsymbol{w}  \in \mathcal{S}^{d-1}$, let $\boldsymbol{c}:=(\varepsilon(w_1),\hdots,\varepsilon(w_d))^T$ where $\varepsilon(x) = 1$ for $x\geq 0$ and $\varepsilon(x)=-1$, otherwise. For $\boldsymbol{c}\boldsymbol{X}:=(c_1X_1,\hdots,c_dX_d)^T$ and $\boldsymbol{c}\boldsymbol{w}:=(c_1w_1,\hdots,c_dw_d)^T \in\dsphere_+$, we have $g(\boldsymbol{w}) = g_{\boldsymbol{c}\boldsymbol{X}}(\boldsymbol{c}\boldsymbol{w})$, where $g_{\boldsymbol{c}\boldsymbol{X}}(\cdot)$ denotes the gauge function of $\boldsymbol{c}\boldsymbol{X}$. 
\end{proposition}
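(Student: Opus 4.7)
The plan is to prove the identity directly from the density-based characterisation of the gauge function given in equation \eqref{eqn:gauge_def}, namely $g(\boldsymbol{x}) = \lim_{t\to\infty}[-\log f_{\boldsymbol{X}}(t\boldsymbol{x})]/t$. The key observation is that the transformation $\boldsymbol{X} \mapsto \boldsymbol{c}\boldsymbol{X}$ is a componentwise reflection: since each $c_i \in \{-1,+1\}$, the map is its own inverse and has unit Jacobian. Consequently, the density of $\boldsymbol{c}\boldsymbol{X}$ can be written explicitly in terms of $f_{\boldsymbol{X}}$, and the limit defining $g_{\boldsymbol{c}\boldsymbol{X}}(\boldsymbol{c}\boldsymbol{w})$ collapses onto the limit defining $g(\boldsymbol{w})$ after a trivial cancellation of the $c_i$'s.

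First, I would verify that $\boldsymbol{c}\boldsymbol{w} \in \mathcal{S}^{d-1}_+$ so that the statement is well-posed: by construction $(\boldsymbol{c}\boldsymbol{w})_i = c_i w_i = |w_i| \geq 0$ for each $i$, and $\|\boldsymbol{c}\boldsymbol{w}\| = \|\boldsymbol{w}\| = 1$ since the transformation is an isometry. I would also note that $\boldsymbol{c}\boldsymbol{X}$ still has standard Laplace margins (the Laplace distribution is symmetric about zero), so the convergence assumption underlying \eqref{eqn:gauge_def} applies equally to $\boldsymbol{c}\boldsymbol{X}$ and the gauge function $g_{\boldsymbol{c}\boldsymbol{X}}(\cdot)$ is well-defined.

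The main computation is then straightforward. A change of variables gives $f_{\boldsymbol{c}\boldsymbol{X}}(\boldsymbol{y}) = f_{\boldsymbol{X}}(\boldsymbol{c}\boldsymbol{y})$ for all $\boldsymbol{y} \in \mathbb{R}^d$, where $\boldsymbol{c}\boldsymbol{y}$ denotes the componentwise product, because $\boldsymbol{c}$ is an involution with $|\det(\mathrm{diag}(\boldsymbol{c}))|=1$. Substituting this into the density-based definition yields
\begin{equation*}
    g_{\boldsymbol{c}\boldsymbol{X}}(\boldsymbol{c}\boldsymbol{w}) = \lim_{t\to\infty}\frac{-\log f_{\boldsymbol{c}\boldsymbol{X}}(t\boldsymbol{c}\boldsymbol{w})}{t} = \lim_{t\to\infty}\frac{-\log f_{\boldsymbol{X}}(t\,\boldsymbol{c}\boldsymbol{c}\boldsymbol{w})}{t} = \lim_{t\to\infty}\frac{-\log f_{\boldsymbol{X}}(t\boldsymbol{w})}{t} = g(\boldsymbol{w}),
\end{equation*}
where the penultimate equality uses $c_i^2 = 1$ for all $i$, so that the componentwise product $\boldsymbol{c}\boldsymbol{c}\boldsymbol{w}$ equals $\boldsymbol{w}$.

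There is no serious obstacle here; the result is essentially a bookkeeping identity about reflecting a symmetric multivariate distribution across axes. The only subtlety worth flagging is that one must pass the transformation through the definition of $g$ using the density, rather than trying to argue directly from convergence of scaled sample clouds, since the latter would require an additional (though minor) argument that reflections commute with the Hausdorff-metric convergence onto the limit set.
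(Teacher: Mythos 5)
Your proof is correct and takes essentially the same route as the paper: both derive the identity from the density-based definition \eqref{eqn:gauge_def} of the gauge function, using the unit absolute Jacobian of the componentwise reflection to write $f_{\boldsymbol{c}\boldsymbol{X}}(t\boldsymbol{c}\boldsymbol{w})=f_{\boldsymbol{X}}(t\boldsymbol{w})$ and then passing this equality through the limit. Your extra remarks on well-posedness of $\boldsymbol{c}\boldsymbol{w}\in\dsphere_+$ and on preservation of the Laplace margins are accurate clarifications of what the paper leaves implicit.
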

\begin{proof}
    Letting $f_{\boldsymbol{c}\boldsymbol{X}}(\cdot)$ denote the joint density function of $\boldsymbol{c}\boldsymbol{X}$, we have 
    \begin{equation*}
        g(\boldsymbol{w}) = \lim_{t \to \infty}[-\log f_{\boldsymbol{X}}(t\boldsymbol{w})]/t = \lim_{t \to \infty}[-\log f_{\boldsymbol{c}\boldsymbol{X}}(t \boldsymbol{c}\boldsymbol{w})]/t = g_{\boldsymbol{c}\boldsymbol{X}}(\boldsymbol{c}\boldsymbol{w}),
    \end{equation*}
    as the Jacobian of the transformation $\boldsymbol{c}\boldsymbol{X} \mapsto \boldsymbol{X}$ is equal to $1$. 
\end{proof}

\begin{remark}
    Proposition \ref{prop:trans_gauge} implies that, for any $\boldsymbol{c} \in \{-1,1 \}^d$ that leads to the partitioning $\dsphere_{\boldsymbol{c}} := \{ \boldsymbol{w} \in \mathcal{S}^{d-1} : \boldsymbol{c}:=(\varepsilon(w_1),\hdots,\varepsilon(w_d))^T \}$ of the unit $(d-1)$-sphere, the sets ${\{ \boldsymbol{w}/g(\boldsymbol{w}) : \boldsymbol{w} \in \dsphere_{\boldsymbol{c}} \}}$ and $\{ \boldsymbol{w}/g_{\boldsymbol{c}\boldsymbol{X}}(\boldsymbol{c}\boldsymbol{w}) : \boldsymbol{w} \in \dsphere_{\boldsymbol{c}} \}$ are equal. Therefore,  the unit-level set $\partial \mathcal{G}$ can be obtained by evaluating gauge functions for rescaled vectors on the set $\{ \boldsymbol{w} \in \mathcal{S}^{d-1} : \min_{i=1,\dots,d}\{w_{i}\} \geq 0\}$.  
\end{remark}
Propositions \ref{prop:gauge_exp_laplace} and \ref{prop:trans_gauge} have implications when using the geometric representation to infer the ADF from \eqref{eqn:wads_tawn}. As this framework is given for vectors on standard exponential margins, careful treatment is required to define an analogous model for Laplace margins.
\begin{proposition} \label{prop:wads_tawn_exp_laplace}
    Given any angle $\boldsymbol{w} \in \dsphere_+$, assume that equation \eqref{eqn:wads_tawn} holds for the random vector $\boldsymbol{X}_E$. Then 
    \begin{equation*} 
        \Pr\left(\min_{i \in \mathcal{D}}\{X_{i}/w_{i}\}>u\right) = L(e^u;\boldsymbol{w})e^{-\lambda(\boldsymbol{w})u}, \; \; u \to \infty. 
    \end{equation*}
\end{proposition}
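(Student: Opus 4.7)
The plan is to exploit the probability integral transform, which couples Laplace and exponential margins by a deterministic shift in the positive orthant, and then to sandwich the resulting joint event between two events of the form already controlled by \eqref{eqn:wads_tawn}.

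First I would observe that for $x \geq 0$ the standard Laplace survival function satisfies $\Pr(X_i > x) = \tfrac{1}{2}e^{-x} = e^{-(x + \log 2)} = \Pr(X_{E,i} > x + \log 2)$. Because $\boldsymbol{X}$ and $\boldsymbol{X}_E$ may be coupled through the common copula via the probability integral transform, with $F_{L}(X_i) = F_{E}(X_{E,i})$ componentwise, this marginal identity lifts to the joint identity
\begin{equation*}
\{X_i > w_i u \text{ for all } i\} = \{X_{E,i} > w_i u + \log 2 \text{ for all } i\},
\end{equation*}
valid whenever every threshold $w_i u$ is non-negative. Since $\boldsymbol{w} \in \dsphere_+$, this certainly holds for all sufficiently large $u$.

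Next I would sandwich the transformed event. Let $w_{\min} := \min_i w_i$ and $w_{\max} := \max_i w_i$. The elementary inequalities $w_i \log 2 / w_{\max} \leq \log 2 \leq w_i \log 2/w_{\min}$ yield the inclusions
\begin{equation*}
\{X_{E,i} > w_i(u + \log 2/w_{\min}) \; \forall i\} \subseteq \{X_{E,i} > w_i u + \log 2 \; \forall i\} \subseteq \{X_{E,i} > w_i(u + \log 2/w_{\max}) \; \forall i\},
\end{equation*}
and both extreme events are of the form $\{\min_i X_{E,i}/w_i > u + c\}$ for a constant $c > 0$ depending only on $\boldsymbol{w}$. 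Applying \eqref{eqn:wads_tawn} to each bound and using slow variation of $L(\cdot;\boldsymbol{w})$ (which asymptotically absorbs a constant multiplicative shift of its first argument) shows that both bounds are proportional to $L(e^u;\boldsymbol{w}) e^{-\lambda(\boldsymbol{w}) u}$ in the limit. Defining the Laplace-version slowly varying function implicitly by $L'(e^u;\boldsymbol{w}) := \Pr(\min_i X_i/w_i > u)\, e^{\lambda(\boldsymbol{w}) u}$ then gives the claimed representation, with the exponential decay rate $\lambda(\boldsymbol{w})$ inherited unchanged from the exponential case.

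The main delicate point will be the last step. The lower and upper sandwich bounds differ asymptotically by the factor $2^{\lambda(\boldsymbol{w})(1/w_{\min} - 1/w_{\max})}$, which is strictly greater than $1$ unless $\boldsymbol{w}$ has equal components, so the sandwich alone does not uniquely pin down the slowly varying function $L'$. The cleanest way to close the argument is to adopt the standard convention in extreme value theory that $L(\cdot;\boldsymbol{w})$ denotes an unspecified slowly varying function, possibly differing between occurrences; under this reading, the substantive content of the proposition — that switching from exponential to Laplace margins preserves the exponential rate $\lambda(\boldsymbol{w})$ of \eqref{eqn:wads_tawn} — follows immediately from the sandwich together with the slow variation of $L(\cdot;\boldsymbol{w})$.
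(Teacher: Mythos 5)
Your proof takes a genuinely different route from the paper's. The paper works from the exponential side: it substitutes the explicit coordinate transform $X_i = X_{E,i}-\log 2$ (valid in the upper tail) to rewrite $\Pr(\min_i X_{E,i}/w_i > u)$ as $\Pr(X_i > w_i u - \log 2,\, \forall i)$, and then asserts this is asymptotically equivalent to $\Pr(X_i > w_i u,\, \forall i)$. You work from the Laplace side, using the copula coupling to obtain $\Pr(X_i > w_i u,\, \forall i) = \Pr(X_{E,i} > w_i u + \log 2,\, \forall i)$, and then sandwich the anisotropic shift $+\log 2\,\boldsymbol{1}$ between the isotropic shifts $u + \log 2/w_{\max}$ and $u + \log 2/w_{\min}$ so that \eqref{eqn:wads_tawn} can be applied directly to both bounds. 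The "delicate point" you flag is exactly where the paper's own argument is vulnerable: its "$\sim$" is generally false as an asymptotic equivalence of probabilities (for independent standard Laplace margins the two sides differ by the fixed factor $2^d$), and is salvaged only by reading $L(\cdot;\boldsymbol{w})$ as an unspecified slowly varying function whose identity may change between displays -- precisely the convention you invoke. Your sandwich makes the discrepancy explicit rather than hiding it in a "$\sim$". One residual caveat applies to both arguments: the sandwich alone shows that $\Pr(\min_i X_i/w_i > u)\, e^{\lambda(\boldsymbol{w})u}$ lies asymptotically between two constant multiples of a slowly varying function, which pins down the decay rate $\lambda(\boldsymbol{w})$ but does not, on its own, prove that this quantity is itself slowly varying; a fully rigorous version would need something like a locally uniform form of \eqref{eqn:wads_tawn} in the angle. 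Since the paper's proof faces the same issue, your argument is no less rigorous, and is more transparent about the source of the difficulty.
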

Proof of Proposition~\ref{prop:wads_tawn_exp_laplace} is provided in Appendix~\ref{proof:prop:wads_tawn_exp_laplace}. Combining Proposition~\ref{prop:wads_tawn_exp_laplace} and model \eqref{eqn:wads_tawn} allows us to assess joint tail behaviour in the positive orthant $\RR^d_+$ through the ADF, $\lambda(\cdot)$. We further extend this model to $\mathbb{R}^d$ by considering an \textit{extended} ADF, denoted by $\Lambda(\cdot)$. Specifically, given any $\boldsymbol{w}  \in  \mathcal{S}^{d-1}\setminus \mathcal{A}$, where $ \mathcal{A} := \bigcup_{i=1}^d \{ \boldsymbol{w} \in \dsphere : w_i = 0 \}$ is the intersection of $\dsphere$ with each axis, we assume that 
\begin{equation} \label{eqn:wads_tawn_laplace} 
    \Pr\left(\min_{i \in \mathcal{D}}\{X_{i}/w_{i}\}>u\right) = \mathcal{L}(e^u;\boldsymbol{w})e^{-\Lambda(\boldsymbol{w})u}, \; \; u \to \infty,
\end{equation}
where $\mathcal{L}(\cdot;\boldsymbol{w})$ is a slowly varying function. A formal definition of $\Lambda(\cdot)$ is given in Appendix~\ref{proof:prop:extended_ADF}. Note that the extended ADF corresponds with the copula exponent function introduced by \citet{Mackay2023} for data on uniform margins. The next proposition illustrates that, under mild conditions, the convergence in \eqref{eqn:wads_tawn_laplace} is always achieved. 

\begin{proposition} \label{prop:extended_ADF}
    Assume that the conditions of Proposition \ref{prop:wads_tawn_exp_laplace} are satisfied for any random vector $\boldsymbol{c}\boldsymbol{X}$, where $\boldsymbol{c} \in \{-1,1\}^d$. Then equation \eqref{eqn:wads_tawn_laplace} holds for any $\boldsymbol{w}  \in \mathcal{S}^{d-1} \setminus \mathcal{A} $.  
\end{proposition}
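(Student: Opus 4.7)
The plan is to reduce the Laplace-margin convergence in each orthant of $\mathbb{R}^d$ to the corresponding positive-orthant convergence, via a sign-flip applied simultaneously to $\boldsymbol{X}$ and $\boldsymbol{w}$, and then invoke Proposition~\ref{prop:wads_tawn_exp_laplace} for the transformed vector $\boldsymbol{c}\boldsymbol{X}$. Fix any $\boldsymbol{w} \in \mathcal{S}^{d-1}\setminus\mathcal{A}$, so that $w_i \neq 0$ for every $i$, and set $\boldsymbol{c} := (\varepsilon(w_1),\hdots,\varepsilon(w_d))^T$ as in Proposition~\ref{prop:trans_gauge}. By construction, $c_i w_i = |w_i| > 0$, so $\boldsymbol{c}\boldsymbol{w} \in \dsphere_+$.

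The key observation is the event equivalence
\begin{equation*}
\left\{ \min_{i \in \mathcal{D}}\{X_i/w_i\} > u \right\} = \left\{ \min_{i \in \mathcal{D}}\{(c_i X_i)/(c_i w_i)\} > u \right\},
\end{equation*}
which follows immediately from the identity $X_i/w_i = (c_i X_i)/(c_i w_i)$ since $c_i^2 = 1$. Because the standard Laplace distribution is symmetric about the origin, each $c_i X_i$ is again standard Laplace, so $\boldsymbol{c}\boldsymbol{X}$ has standard Laplace margins and the hypothesis of Proposition~\ref{prop:wads_tawn_exp_laplace} applies to $\boldsymbol{c}\boldsymbol{X}$.

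Applying Proposition~\ref{prop:wads_tawn_exp_laplace} to $\boldsymbol{c}\boldsymbol{X}$ at the angle $\boldsymbol{c}\boldsymbol{w} \in \dsphere_+$, we obtain, as $u \to \infty$,
\begin{equation*}
\Pr\left( \min_{i \in \mathcal{D}}\{(c_i X_i)/(c_i w_i)\} > u \right) = L(e^u; \boldsymbol{c}\boldsymbol{w})\, e^{-\lambda_{\boldsymbol{c}\boldsymbol{X}}(\boldsymbol{c}\boldsymbol{w}) u},
\end{equation*}
where $\lambda_{\boldsymbol{c}\boldsymbol{X}}(\cdot)$ denotes the ADF of $\boldsymbol{c}\boldsymbol{X}$ and $L(\cdot; \boldsymbol{c}\boldsymbol{w})$ the corresponding slowly varying function. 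Combining this with the event equivalence above and defining $\Lambda(\boldsymbol{w}) := \lambda_{\boldsymbol{c}\boldsymbol{X}}(\boldsymbol{c}\boldsymbol{w})$ and $\mathcal{L}(\cdot; \boldsymbol{w}) := L(\cdot; \boldsymbol{c}\boldsymbol{w})$ yields exactly~\eqref{eqn:wads_tawn_laplace}. Since $\boldsymbol{w} \in \dsphere\setminus\mathcal{A}$ was arbitrary, the convergence holds on the entire domain of interest.

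I do not expect any real obstacle here, as the argument is essentially a change of variables combined with the hypothesis; the only point requiring care is to verify that the sign-flip genuinely preserves the event and that excluding the set $\mathcal{A}$ is exactly what is needed so that $\boldsymbol{c}$ is well-defined and $\boldsymbol{c}\boldsymbol{w} \in \dsphere_+$ (rather than on its boundary, where the positive-orthant framework does not directly apply).
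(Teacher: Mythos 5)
Your proposal is correct and follows essentially the same route as the paper: both proofs fix $\boldsymbol{w}\in\dsphere\setminus\mathcal{A}$, set $\boldsymbol{c}=\mathrm{sgn}(\boldsymbol{w})$, use the identity $X_i/w_i = (c_iX_i)/(c_iw_i)$ to rewrite the min-projection event, and then invoke Proposition~\ref{prop:wads_tawn_exp_laplace} for $\boldsymbol{c}\boldsymbol{X}$ at the positive-orthant angle $\boldsymbol{c}\boldsymbol{w}$, identifying $\Lambda(\boldsymbol{w})=\lambda_{\boldsymbol{c}\boldsymbol{X}}(\boldsymbol{c}\boldsymbol{w})$ and $\mathcal{L}(\cdot;\boldsymbol{w})=L_{\boldsymbol{c}\boldsymbol{X}}(\cdot;\boldsymbol{c}\boldsymbol{w})$. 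Your additional remarks on the preservation of Laplace margins and on why excluding $\mathcal{A}$ guarantees $\boldsymbol{c}\boldsymbol{w}\in\dsphere_+$ are welcome clarifications but do not change the argument.
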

Proof of Proposition~\ref{prop:extended_ADF} is provided in Appendix~\ref{proof:prop:extended_ADF}. Note that while the assumptions of Proposition \ref{prop:extended_ADF} may seem restrictive, \citet{Wadsworth2013} demonstrate through rigorous theoretical treatment that model \eqref{eqn:wads_tawn} captures the joint tail structure of $\boldsymbol{X}_E$ for a wide variety of theoretical examples. Therefore, it is reasonable to assume this framework can capture the tail structure in every orthant. 

\begin{remark} \label{remark:discont}
   In the definition of our model \eqref{eqn:wads_tawn_laplace}, we purposely exclude any angles in $\dsphere$ that intersect the axes, as the model is not well defined there. Take, e.g., $\boldsymbol{w}:=(1,0,\hdots,0)^T$; it is not clear whether one should consider the probability $\Pr(X_1 > u, X_i > 0, i = 2,\hdots,d)$ or $\Pr(X_1 > u, X_i < 0, i = 2,\hdots,d)$ as $u \to \infty$ under the modelling framework. This results in discontinuities for the extended ADF at the axes; see Figure \ref{fig:true_gauges}. 
\end{remark}

On the original exponential margins, \citet{Nolde2022} show that the unit-level set $\partial \mathcal{G}$ is linked to the ADF; see Section \ref{sec:intro}. As demonstrated by Proposition~\ref{prop:extended_ADF_boundary_link}, an analogous relationship holds for the extended ADF.
\begin{proposition} \label{prop:extended_ADF_boundary_link}
    Suppose equation \eqref{eqn:wads_tawn_laplace} holds for any angle $\boldsymbol{w}  \in \mathcal{S}^{d-1}  \setminus \mathcal{A}$. Then $\Lambda(\boldsymbol{w}) = ||\boldsymbol{w}||_{\infty} \times \Tilde{\mathfrak{r}}^{-1}_{\boldsymbol{w}}$, where $\Tilde{\mathfrak{r}}_{\boldsymbol{w}} = \max \{ \mathfrak{r} \in [0,1] : \mathfrak{r}\Tilde{\mathcal{R}}_{\boldsymbol{w}}\cap \partial \mathcal{G} \neq \emptyset \}$
    and $\Tilde{\mathcal{R}}_{\boldsymbol{w}} := \bigotimes_{i = 1,\hdots,d} \mathcal{U}_{w_i}$, with $\mathcal{U}_{w_i} := [w_i/||\boldsymbol{w}||_{\infty},\infty]$ for $w_i > 0$ and $[-\infty,w_i/||\boldsymbol{w}||_{\infty}]$ for $w_i < 0$. 
\end{proposition}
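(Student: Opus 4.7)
The plan is to reduce the extended ADF case on a general orthant to the standard ADF case on the positive orthant by applying the componentwise reflection from Proposition~\ref{prop:trans_gauge}, then invoking the Nolde--Wadsworth link $\lambda(\boldsymbol{w}) = \|\boldsymbol{w}\|_{\infty} \mathfrak{r}_{\boldsymbol{w}}^{-1}$ recalled in Section~\ref{sec:intro}, and finally pulling the resulting geometric characterisation back through the reflection.

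First, fix $\boldsymbol{w} \in \dsphere \setminus \mathcal{A}$ and define $\boldsymbol{c} := (\varepsilon(w_1),\dots,\varepsilon(w_d))^T \in \{-1,1\}^d$ as in Proposition~\ref{prop:trans_gauge}, so that $\boldsymbol{c}\boldsymbol{w} \in \dsphere_+ \setminus \mathcal{A}$. Rewriting the probability defining $\Lambda$, and using that the Laplace distribution is symmetric (so $\boldsymbol{c}\boldsymbol{X}$ also has standard Laplace margins), I would show
\begin{equation*}
    \Pr\left(\min_{i \in \mathcal{D}}\{X_i/w_i\} > u\right) = \Pr\left(\min_{i \in \mathcal{D}}\{(c_iX_i)/(c_iw_i)\} > u\right) = \mathcal{L}(e^u;\boldsymbol{w})e^{-\Lambda(\boldsymbol{w})u}.
\end{equation*}
By Proposition~\ref{prop:wads_tawn_exp_laplace} applied to $\boldsymbol{c}\boldsymbol{X}$ at the positive angle $\boldsymbol{c}\boldsymbol{w}$, the right-hand side has the form of the ADF model~\eqref{eqn:wads_tawn} for $\boldsymbol{c}\boldsymbol{X}$; hence $\Lambda(\boldsymbol{w}) = \lambda_{\boldsymbol{c}\boldsymbol{X}}(\boldsymbol{c}\boldsymbol{w})$, where $\lambda_{\boldsymbol{c}\boldsymbol{X}}$ denotes the ADF of $\boldsymbol{c}\boldsymbol{X}$.

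Next I would invoke the Nolde--Wadsworth identity recalled in Section~\ref{sec:intro}, applied to $\boldsymbol{c}\boldsymbol{X}$: for the positive angle $\boldsymbol{c}\boldsymbol{w}$,
\begin{equation*}
    \lambda_{\boldsymbol{c}\boldsymbol{X}}(\boldsymbol{c}\boldsymbol{w}) = \|\boldsymbol{c}\boldsymbol{w}\|_{\infty}\, \mathfrak{r}_{\boldsymbol{c}\boldsymbol{w}}^{-1}, \qquad \mathfrak{r}_{\boldsymbol{c}\boldsymbol{w}} = \max\left\{\mathfrak{r}\in[0,1] : \mathfrak{r}\mathcal{R}_{\boldsymbol{c}\boldsymbol{w}}\cap \partial\mathcal{G}_{\boldsymbol{c}\boldsymbol{X}} \neq \emptyset\right\},
\end{equation*}
where $\mathcal{G}_{\boldsymbol{c}\boldsymbol{X}}$ is the limit set of $\boldsymbol{c}\boldsymbol{X}$. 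Clearly $\|\boldsymbol{c}\boldsymbol{w}\|_\infty = \|\boldsymbol{w}\|_\infty$, so it only remains to identify $\mathfrak{r}_{\boldsymbol{c}\boldsymbol{w}}$ with $\tilde{\mathfrak{r}}_{\boldsymbol{w}}$.

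The key geometric step, and the one requiring the most care, is to push $\partial\mathcal{G}_{\boldsymbol{c}\boldsymbol{X}}$ and $\mathcal{R}_{\boldsymbol{c}\boldsymbol{w}}$ through the involution $\boldsymbol{y}\mapsto \boldsymbol{c}\boldsymbol{y}$ (understood componentwise). Proposition~\ref{prop:trans_gauge} yields $g_{\boldsymbol{c}\boldsymbol{X}}(\boldsymbol{c}\boldsymbol{y}) = g(\boldsymbol{y})$ for all $\boldsymbol{y}$, so $\partial\mathcal{G}_{\boldsymbol{c}\boldsymbol{X}} = \boldsymbol{c}\,\partial\mathcal{G}$. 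For the rectangles, I would verify directly that
\begin{equation*}
    \boldsymbol{c}\,\mathcal{R}_{\boldsymbol{c}\boldsymbol{w}} = \bigotimes_{i=1}^d c_i\!\left[\frac{c_iw_i}{\|\boldsymbol{w}\|_\infty},\infty\right] = \bigotimes_{i=1}^d \mathcal{U}_{w_i} = \tilde{\mathcal{R}}_{\boldsymbol{w}},
\end{equation*}
where the case split $c_i=1$ versus $c_i=-1$ gives precisely the two forms of $\mathcal{U}_{w_i}$. Since $\boldsymbol{y}\mapsto\boldsymbol{c}\boldsymbol{y}$ is a bijection of $\mathbb{R}^d$ commuting with scalar multiplication, $\mathfrak{r}\mathcal{R}_{\boldsymbol{c}\boldsymbol{w}} \cap \partial\mathcal{G}_{\boldsymbol{c}\boldsymbol{X}} \neq \emptyset$ if and only if $\mathfrak{r}\tilde{\mathcal{R}}_{\boldsymbol{w}}\cap\partial\mathcal{G}\neq\emptyset$, which gives $\mathfrak{r}_{\boldsymbol{c}\boldsymbol{w}} = \tilde{\mathfrak{r}}_{\boldsymbol{w}}$. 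Combining this with the identifications above yields $\Lambda(\boldsymbol{w}) = \|\boldsymbol{w}\|_\infty \tilde{\mathfrak{r}}_{\boldsymbol{w}}^{-1}$, completing the argument. The main obstacle is simply book-keeping the sign flips in the product set $\mathcal{R}_{\boldsymbol{c}\boldsymbol{w}}$ to match $\tilde{\mathcal{R}}_{\boldsymbol{w}}$ on the nose; everything else reduces to an application of results already established in the paper.
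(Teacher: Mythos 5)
Your proof is correct and follows the same high-level strategy as the paper: reflect via $\boldsymbol{c}=\text{sgn}(\boldsymbol{w})$ to reduce to the positive orthant, invoke the Nolde--Wadsworth link for $\lambda_{\boldsymbol{c}\boldsymbol{X}}(\boldsymbol{c}\boldsymbol{w})$, and then identify $\mathfrak{r}_{\boldsymbol{c}\boldsymbol{w}} = \tilde{\mathfrak{r}}_{\boldsymbol{w}}$. Where you differ is in how that last identification is organised. The paper argues at the level of individual points and components: it picks an arbitrary $\mathfrak{r}$ in the feasible set, locates a witness $\boldsymbol{w}^*\in\dsphere_+$ on $\partial\mathcal{G}_{\boldsymbol{c}\boldsymbol{X}}$, traces the active coordinate $j$ through Proposition~\ref{prop:trans_gauge}, and then repeats the argument in the opposite direction to establish set equality. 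You instead observe that the involution $\boldsymbol{y}\mapsto\boldsymbol{c}\boldsymbol{y}$ carries $\partial\mathcal{G}_{\boldsymbol{c}\boldsymbol{X}}$ onto $\partial\mathcal{G}$ and carries $\mathcal{R}_{\boldsymbol{c}\boldsymbol{w}}$ onto $\tilde{\mathcal{R}}_{\boldsymbol{w}}$, and that this bijection commutes with the scaling by $\mathfrak{r}$, so the two families of intersections are simultaneously nonempty or empty. This set-level argument is cleaner and sidesteps the coordinate bookkeeping and the separate forward/backward passes in the paper's version; it also makes the ``on the nose'' nature of the correspondence transparent. Two small remarks: first, the identity $\Lambda(\boldsymbol{w})=\lambda_{\boldsymbol{c}\boldsymbol{X}}(\boldsymbol{c}\boldsymbol{w})$ you state is really the content of Proposition~\ref{prop:extended_ADF}, not of Proposition~\ref{prop:wads_tawn_exp_laplace} which you cite; second, when you ``invoke the Nolde--Wadsworth identity'' you should note explicitly that it is stated for exponential margins, and that Proposition~\ref{prop:gauge_exp_laplace} is what lets you apply it to $\boldsymbol{c}\boldsymbol{X}$ on Laplace margins at the positive angle $\boldsymbol{c}\boldsymbol{w}$, since the unit-level sets on the positive orthant coincide. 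The paper makes this bridge explicit; yours leaves it implicit. Neither point affects correctness.
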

Proof of Proposition~\ref{prop:extended_ADF_boundary_link} is provided in Appendix~\ref{proof:prop:extended_ADF_boundary_link}, alongside the illustrative Figure~\ref{fig:adf_proof}. Proposition~\ref{prop:extended_ADF_boundary_link} illustrates that the extended ADF can be obtained directly from the unit-level set $\partial \mathcal{G}$. Consequently, the extended ADF is linked to the gauge function, as demonstrated by the following corollary. 

\begin{corollary} \label{corol:gauge_adf_bound}
    Suppose model \eqref{eqn:wads_tawn_laplace} holds for all $\boldsymbol{w} \in \mathcal{S}^{d-1}\setminus \mathcal{A}$. Then, we have that 
    \begin{equation*}
        g(\boldsymbol{w}) \geq \Lambda(\boldsymbol{w}) \geq ||\boldsymbol{w}||_{\infty}. 
    \end{equation*}
\end{corollary}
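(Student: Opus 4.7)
The plan is to prove the two inequalities separately, both by unpacking Proposition~\ref{prop:extended_ADF_boundary_link} which gives $\Lambda(\boldsymbol{w}) = ||\boldsymbol{w}||_{\infty} / \Tilde{\mathfrak{r}}_{\boldsymbol{w}}$, and then exploiting the geometry of $\Tilde{\mathcal{R}}_{\boldsymbol{w}}$ together with Proposition~\ref{prop:g_lowerbound}.

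For the right-hand inequality $\Lambda(\boldsymbol{w}) \geq ||\boldsymbol{w}||_{\infty}$, the argument is essentially immediate: by definition $\Tilde{\mathfrak{r}}_{\boldsymbol{w}} \in [0,1]$, so $\Tilde{\mathfrak{r}}_{\boldsymbol{w}}^{-1} \geq 1$, and the formula for $\Lambda(\boldsymbol{w})$ yields the bound at once.

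The left-hand inequality $g(\boldsymbol{w}) \geq \Lambda(\boldsymbol{w})$ is the substantive part. I would prove it by exhibiting an explicit admissible scaling factor $\mathfrak{r}$ in the definition of $\Tilde{\mathfrak{r}}_{\boldsymbol{w}}$ and concluding by maximality. The key geometric observation is that for every $\boldsymbol{w} \in \dsphere \setminus \mathcal{A}$, the point $\boldsymbol{w}/||\boldsymbol{w}||_{\infty}$ is the (unique) corner of the box $\Tilde{\mathcal{R}}_{\boldsymbol{w}} = \bigotimes_{i=1}^{d} \mathcal{U}_{w_i}$ nearest the origin: for each coordinate $i$ with $w_i > 0$, the interval $\mathcal{U}_{w_i} = [w_i/||\boldsymbol{w}||_{\infty}, \infty]$ has left endpoint $w_i/||\boldsymbol{w}||_{\infty}$, and symmetrically for $w_i < 0$ the interval has right endpoint $w_i/||\boldsymbol{w}||_{\infty}$. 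Hence the corner of the scaled box $\mathfrak{r}\Tilde{\mathcal{R}}_{\boldsymbol{w}}$ is $\mathfrak{r}\boldsymbol{w}/||\boldsymbol{w}||_{\infty}$, and this corner lies inside the box.

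Now set $\mathfrak{r}^{\star} := ||\boldsymbol{w}||_{\infty}/g(\boldsymbol{w})$. Proposition~\ref{prop:g_lowerbound} gives $g(\boldsymbol{w}) \geq ||\boldsymbol{w}||_{\infty}$, so $\mathfrak{r}^{\star} \in (0,1]$ and is therefore a candidate in the definition of $\Tilde{\mathfrak{r}}_{\boldsymbol{w}}$. With this choice the corner point of $\mathfrak{r}^{\star}\Tilde{\mathcal{R}}_{\boldsymbol{w}}$ is exactly $\boldsymbol{w}/g(\boldsymbol{w}) \in \partial \mathcal{G}$, so $\mathfrak{r}^{\star}\Tilde{\mathcal{R}}_{\boldsymbol{w}} \cap \partial \mathcal{G} \neq \emptyset$. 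Maximality of $\Tilde{\mathfrak{r}}_{\boldsymbol{w}}$ then yields $\Tilde{\mathfrak{r}}_{\boldsymbol{w}} \geq \mathfrak{r}^{\star} = ||\boldsymbol{w}||_{\infty}/g(\boldsymbol{w})$, equivalently $\Lambda(\boldsymbol{w}) = ||\boldsymbol{w}||_{\infty}\Tilde{\mathfrak{r}}_{\boldsymbol{w}}^{-1} \leq g(\boldsymbol{w})$, completing the proof.

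There is no real obstacle here: the entire argument is a direct geometric consequence of the characterisation in Proposition~\ref{prop:extended_ADF_boundary_link} combined with the bound in Proposition~\ref{prop:g_lowerbound}. The only point requiring a moment's care is to confirm that the identification of the nearest corner of $\Tilde{\mathcal{R}}_{\boldsymbol{w}}$ as $\boldsymbol{w}/||\boldsymbol{w}||_{\infty}$ remains valid in every orthant, which is immediate from inspecting the definition of $\mathcal{U}_{w_i}$ in each sign case.
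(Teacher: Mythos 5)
Your proof is correct and rests on the same two ingredients as the paper's — the characterisation $\Lambda(\boldsymbol{w}) = \|\boldsymbol{w}\|_{\infty}/\Tilde{\mathfrak{r}}_{\boldsymbol{w}}$ from Proposition~\ref{prop:extended_ADF_boundary_link} together with the fact that $\boldsymbol{w}/g(\boldsymbol{w}) \in \partial\mathcal{G}$ sits at the corner of the box $\mathfrak{r}^{\star}\Tilde{\mathcal{R}}_{\boldsymbol{w}}$ with $\mathfrak{r}^{\star} = \|\boldsymbol{w}\|_{\infty}/g(\boldsymbol{w})$. The paper runs this as a two-case comparison of the point $\boldsymbol{w}/g(\boldsymbol{w})$ against the maximally-scaled box $\Tilde{\mathfrak{r}}_{\boldsymbol{w}}\Tilde{\mathcal{R}}_{\boldsymbol{w}}$, which is terse and a little opaque as written; you instead exhibit $\mathfrak{r}^{\star}$ as an explicit admissible value and conclude by maximality of $\Tilde{\mathfrak{r}}_{\boldsymbol{w}}$, which is the cleaner and more transparent route. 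You also make explicit, via Proposition~\ref{prop:g_lowerbound}, that $\mathfrak{r}^{\star} \in (0,1]$ and therefore genuinely lies in the feasible set in the definition of $\Tilde{\mathfrak{r}}_{\boldsymbol{w}}$; this is a small but necessary check that the paper leaves implicit. The right-hand inequality $\Lambda(\boldsymbol{w}) \geq \|\boldsymbol{w}\|_{\infty}$ is handled identically in both.
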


Proof of Corollary~\ref{corol:gauge_adf_bound} is provided in Appendix~\ref{proof:corol:gauge_adf_bound}, and visualised in Figure~\ref{fig:true_gauges}. The blue lines in each panel of Figure \ref{fig:true_gauges} denote the sets $\{\boldsymbol{w}/\Lambda(\boldsymbol{w}) : \boldsymbol{w} \in \mathcal{S}^{d-1}\setminus \mathcal{A} \}$, corresponding to the intersection points $\{\Tilde{\mathfrak{r}}_{\boldsymbol{w}}\Tilde{\mathcal{R}}_{\boldsymbol{w}} \cap \partial \mathcal{G}, \boldsymbol{w} \in \mathcal{S}^{d-1}\setminus \mathcal{A}$\}. Furthermore, from the first and third panels of Figure \ref{fig:true_gauges}, one can clearly observe the discontinuities of the extended ADF at the axes, as discussed in Remark \ref{remark:discont}.

\section{Inference}\label{sec:methods}
\subsection{Overview}
Here we describe our DeepGauge framework for modelling and estimating limit sets. Section~\ref{sec:estimation_gauge_adf} describes model assumptions for the conditional radii $R \mid (\boldsymbol{W}=\boldsymbol{w})$, $\boldsymbol{w}\in\dsphere$, through which we obtain estimates of the unit-level set $\partial G$. Section~\ref{sec:method:neuralnets} describes our DeepGauge representation of gauge functions using neural networks. Section \ref{sec:method:est_ADF} covers estimation of the extended ADF and its use in probability estimation. Section \ref{sec:inference:gof} concludes with diagnostic tools for assessing goodness-of-fit. 

\subsection{Modelling assumptions for the conditional radii} \label{sec:estimation_gauge_adf}

\citet{Wadsworth2024} demonstrate that, for large radial values $r$, we have that $f_{R\mid \boldsymbol{W}}(r\mid\boldsymbol{w}) \propto r^{d-1}\exp\{-rg(\boldsymbol{w})\}$, where $f_{R\mid \boldsymbol{W}}(\cdot\mid\cdot)$ denotes the density function of $R\mid( \boldsymbol{W} = \boldsymbol{w})$. This implies that the upper tail of $R\mid( \boldsymbol{W} = \boldsymbol{w})$ follows a gamma kernel form.
Note that this holds for $\boldsymbol{X}$ on both exponential and Laplace margins. To accommodate this form, \citet{Wadsworth2024} propose the modelling assumption: 
\begin{equation} \label{eqn:trunc_gamma_assum}
    R\mid (\boldsymbol{W} =\boldsymbol{w}, R>r_\tau(\boldsymbol{w}) ) \sim \text{truncGamma}(\alpha,g(\boldsymbol{w})),
\end{equation}
where `truncGamma' denotes a \textit{truncated Gamma} distribution with shape and rate parameters $\alpha>0$ and $g(\boldsymbol{w})>0$, respectively, and $r_\tau(\boldsymbol{w})$ denotes the $\tau$-quantile of $R\mid( \boldsymbol{W} = \boldsymbol{w})$ for some $\tau \in (0,1)$ close to 1, that is, $\Pr \{R \leq r_\tau(\boldsymbol{w}) \mid \boldsymbol{W} = \boldsymbol{w}\} = \tau$ for all $\boldsymbol{w} \in \mathcal{S}^{d-1}$. Under the distributional assumption of \eqref{eqn:trunc_gamma_assum}, we can use maximum likelihood techniques to estimate the gauge function $g(\cdot)$, which can be interpreted as a non-stationary rate parameter of the truncated gamma distribution. 

Through rigorous theoretical treatment, \citet{Wadsworth2024} show that equation \eqref{eqn:trunc_gamma_assum} is a valid modelling assumption for a wide range of parametric copulas, including the three illustrated in Figure~\ref{fig:true_gauges}. For most examples, the true shape parameter is $\alpha = d$. However, to increase the flexibility of their model, \citet{Wadsworth2024} permit estimation of this parameter; the DeepGauge framework follows suit.

\subsection{Estimation of the gauge function using neural networks}\label{sec:method:neuralnets}
To overcome the limited flexibility of existing geometric modelling approaches, we model the rescaled gauge function ($\tilde{g}$; see Equation~\eqref{eqn:tilde_g}) using neural networks. Full inference requires the construction of two models: one for the radial threshold, $r_\tau(\boldsymbol{w})$ in \eqref{eqn:trunc_gamma_assum}, and one for an (unscaled) gauge function ${g}(\boldsymbol{w})$ satisfying ${g}(\boldsymbol{w}) \geq ||\boldsymbol{w} ||_{\infty}$ for all $\boldsymbol{w} \in \dsphere$ (see Proposition~\ref{prop:g_lowerbound}). With the latter, we can employ the transformation described in Equation~\eqref{eqn:tilde_g}, resulting in valid unit-level sets; see Corollary \ref{corol:all_properties}. 

We model the radial threshold $r_\tau:\mathcal{S}^{d-1}\mapsto \RR_+$ using a multi-layer perceptron (MLP) with rectified linear unit (ReLU) activation functions, denoted by $m_{\boldsymbol{\psi}}:\mathcal{S}^{d-1}\mapsto \RR_+$ and parameterised by the set $\boldsymbol{\psi}$. For brevity, we provide details of the construction in Appendix~\ref{appendix:NN1}; see, also, \cite{Richards2024} for a review of these models and their inference. We similarly represent the (unscaled) gauge function ${g}(\boldsymbol{w})$ using an MLP. To ensure that ${g}(\cdot)$ satisfies ${g}(\boldsymbol{w}) \geq \|\boldsymbol{w}\|_\infty$ for all $\boldsymbol{w}\in\mathcal{S}^{d-1}$, we let ${g}(\boldsymbol{w})=\text{ReLU}\{m_{\boldsymbol{\psi}}(\boldsymbol{w})\}+\|\boldsymbol{w}\|_\infty$, where ReLU($\boldsymbol{x})=(\max\{x_1,0\},\max\{x_2,0\},\dots)^T$ for a vector $\boldsymbol{x}=(x_1,x_2,\dots)^T$ of finite (unspecified) length. Note that we share neither parameters nor architectures between the two neural networks determining $r_\tau(\boldsymbol{w})$ and ${g}(\boldsymbol{w})$.

To transform $g(\cdot)$ to $\tilde{g}(\cdot)$ requires evaluation of all scaling factors $b_i^U$ and $b_i^L, i=1,\dots,d,$ in \eqref{Eq:scaling_factors}. We do so numerically using a sample of angles, denoted by $\mathcal{W}$, that provides a dense coverage of the $(d-1)$-sphere. In practice, we simulate $|\mathcal{W}|=10^6$ points using the rejection sampling algorithm of \citet{neumann1951various}.
 
We fit or \textit{train} the neural network $r_\tau(\boldsymbol{w})$ via minimisation of some loss function, denoted by $\ell(r, r_\tau)$. Consider the set $\{(r_j,\boldsymbol{w}_j)\}_{j=1}^n$, where $r_j$ and $\boldsymbol{w}_j$ are observations of $R$ and $\boldsymbol{W}$, respectively. Optimal estimates of the neural network parameters, denoted $\widehat{\boldsymbol{\psi}}$, can be found by solving the minimisation problem
\begin{equation}
\label{eq:emp_loss}
\widehat{\boldsymbol{\psi}}\in \argmin\limits_{\boldsymbol{\psi}}\frac{1}{n}\sum^n_{j=1}\ell(r_j,r_{\tau}(\boldsymbol{w}_j)),
\end{equation}
where we have suppressed the dependency of $r_\tau(\cdot)$ on $\boldsymbol{\psi}$ in the notation. As $r_\tau(\boldsymbol{w})$ denotes the $\tau$-quantile of $R\mid (\boldsymbol{W}=\boldsymbol{w})$, the most appropriate choice of loss function is the tilted loss, given by 
$l(r,r_\tau)=\rho_\tau(r-r_\tau)$ for $\rho_\tau(z):=z(\tau-\mathbbm{1}\{z < 0\})$ \citep{Koenker2017}. We can also define a suitable loss function and minimisation problem to train the deep representation of the rescaled gauge function, $\tilde{g}(\boldsymbol{w}).$ This model can be described as a conditional density network estimation \citep[see, e.g.,][]{rothfuss2019conditional}, where the loss function is the negative log-likelihood associated with the truncated gamma model defined in \eqref{eqn:trunc_gamma_assum}; note that this is dependent on both the exceedance threshold $r_\tau(\boldsymbol{w}_j)$ and a scaling parameter, $\alpha > 0$. To fit the gauge function model, we replace $\ell(r_j,r_{\tau}(\boldsymbol{w}_j))$ in \eqref{eq:emp_loss} with 
\begin{align}
\label{eq:nll}
\ell\{r_j, \tilde{g}(\boldsymbol{w}_j), \alpha; r_\tau(\boldsymbol{w}_j)\}=-\mathbbm{1}\{r_j > r_\tau(\boldsymbol{w}_j)\}&\big[\alpha\log\{\tilde{g}(\boldsymbol{w}_j)\} + (\alpha-1)\log(r_j) - r_j\tilde{g}(\boldsymbol{w}_j) \nonumber\\
&- \log\{\Gamma(\alpha)\}-\log\{Q(\alpha,\tilde{g}(\boldsymbol{w}_j)r_\tau(\boldsymbol{w}_j))\}\big],
\end{align}
where $Q(\alpha,z)=\Gamma(\alpha,z)/\Gamma(z)$ for $\Gamma(\alpha,z)=\int^\infty_zt^{\alpha-1}\exp(-t)\, dt$. Note that, through an abuse of notation, we have suppressed dependency of $\boldsymbol{\psi}$ on $\alpha$, but this parameter is estimated concurrently with the parameters that comprise the neural network. 

Full inference for our framework is performed in a two-stage fashion. We first train a model for $r_\tau(\boldsymbol{w})$ and derive its estimate, $\hat{r}_\tau(\boldsymbol{w})$. This estimate is then used in a subsequent training step for the rescaled gauge function $\tilde{g}(\cdot)$ and $\alpha$, by replacing $
r_\tau(\boldsymbol{w})$ in \eqref{eq:nll} with its estimated counterpart. For details of the algorithms and regularisation techniques used for training of both models, as well as practical advice for pre-training \citep[e.g.,][]{goodfellow2016deep} of the models using initial estimates for $r_\tau(\cdot)$, see Appendix~\ref{sec:methods:estimation:training}. 

\subsection{Estimating the extended ADF}\label{sec:method:est_ADF}

We now adapt the approach of \citet{Simpson2022}, for estimation of the classical ADF, to permit estimation of our extended ADF, $\Lambda(\cdot)$.  First, given a sample of angles $\{\boldsymbol{w}_j\}_{j=1}^n$, we define the corresponding point estimates for the unit-level set by $\{\tilde{\boldsymbol{x}}_j\}^n_{i=1}$ where ${\tilde{\boldsymbol{x}}_j := (\tilde{x}_{j,1},\hdots,\tilde{x}_{j,d})^T=\boldsymbol{w}_j/\tilde{g}(\boldsymbol{w}_j) \in \widetilde{\partial \mathcal{G}}}$ for $j = 1,\hdots, n$. Then, for any angle $\boldsymbol{w} \in \dsphere \setminus \mathcal{A} $ at which we wish to estimate $\Lambda(\boldsymbol{w})$, we consider the sample $\{\tilde{\boldsymbol{x}}_j\}^n_{i=1}$ to be candidates for the intersection of $\widetilde{\partial \mathcal{G}}$ and the scaled-back set $\Tilde{\mathcal{R}}_{\boldsymbol{w}}$ (see Proposition~\ref{prop:extended_ADF_boundary_link}). The corresponding scaling coefficient $\Tilde{\mathfrak{r}}_{\boldsymbol{w}}$ must satisfy $\Tilde{\mathfrak{r}}_{\boldsymbol{w}} \geq \tilde{\mathfrak{r}}_j$ for all $j=1,\dots,n$, where $\tilde{\mathfrak{r}}_j := ||\boldsymbol{w}||_{\infty}\min_i\{\tilde{x}_{j,i}/w_i \}$; hence, we approximate $\Tilde{\mathfrak{r}}_{\boldsymbol{w}}$ as $\Tilde{\mathfrak{r}}_{\boldsymbol{w}} \approx\max_{j=1,\dots,n}\{\tilde{\mathfrak{r}}_j\}$. Recalling Proposition \ref{prop:extended_ADF_boundary_link}, it follows that an estimate $\Hat{\Lambda}(\boldsymbol{w})$ of the extended ADF  at $\boldsymbol{w}$ is  $\Hat{\Lambda}(\boldsymbol{w}) = ||\boldsymbol{w}||_{\infty}/\max_j\{\tilde{\mathfrak{r}}_j \}$; see \citet{Simpson2022} for further details. Furthermore, since the estimated unit-level set $\widetilde{\partial \mathcal{G}}$ satisfies all of the required theoretical properties for $\partial \mathcal{G}$, the resulting estimate $\Hat{\Lambda}(\cdot)$ must also satisfy the properties of the extended ADF. 

 Estimates $\Hat{\Lambda}(\boldsymbol{w}), \boldsymbol{w} \in \dsphere \setminus \mathcal{A},$ can be used to obtain probability estimates for a wide variety of joint tail regions of $\boldsymbol{X}$. The estimation scheme outlined below avoids the need to sample from, or model, the distribution of the angles $\boldsymbol{W}$, as has been considered in other probability estimation schemes using the geometric representation \citep{Papastathopoulos2024,Wadsworth2024}. 
To begin, let $\boldsymbol{x} \in \RR^d$ be such that $r := ||\boldsymbol{x}||$ is large and define the corresponding angle $\boldsymbol{w} := \boldsymbol{x}/r$. Further define the structure variable $T_{\boldsymbol{w}} := \min_{i=1,\dots,d}\{X_i/w_i\}$, and let $u$ denote a quantile of $T_{\boldsymbol{w}}$ satisfying $q := \Pr(T_{\boldsymbol{w}} \leq u) < \Pr(T_{\boldsymbol{w}} \leq r)$, with $q$ close to 1 and $u>0$ large. Equation \eqref{eqn:wads_tawn_laplace} implies that 
\begin{align}
    &\Pr(\text{sgn}(x_i)X_i > \text{sgn}(x_i)x_i, i = 1,\hdots,d) = \Pr(T_{\boldsymbol{w}} > r) \label{eq:ADF_prob_form}\\
     &= \Pr(T_{\boldsymbol{w}} > r \mid T_{\boldsymbol{w}} > u)\Pr(T_{\boldsymbol{w}} > u) 
     \approx \exp\{ -\Hat{\Lambda}(\boldsymbol{w})(r-u) \} (1-q) \label{eq:ADF_emp},
\end{align}
where $\text{sgn}(\cdot)$ denotes the regular signum function. Figure \ref{fig:wt_model_probs} illustrates examples of joint tail regions in the two dimensional setting that can be estimated using the framework described in equation \eqref{eqn:wads_tawn_laplace}. Observe that these regions are more general than many inference procedures for multivariate extremes, which just focus on the joint survivor or cumulative distribution function \citep[e.g.,][]{Ledford1996,Ramos2009}. 

\begin{figure}[t!]
    \centering
    \includegraphics[width=0.4\textwidth]{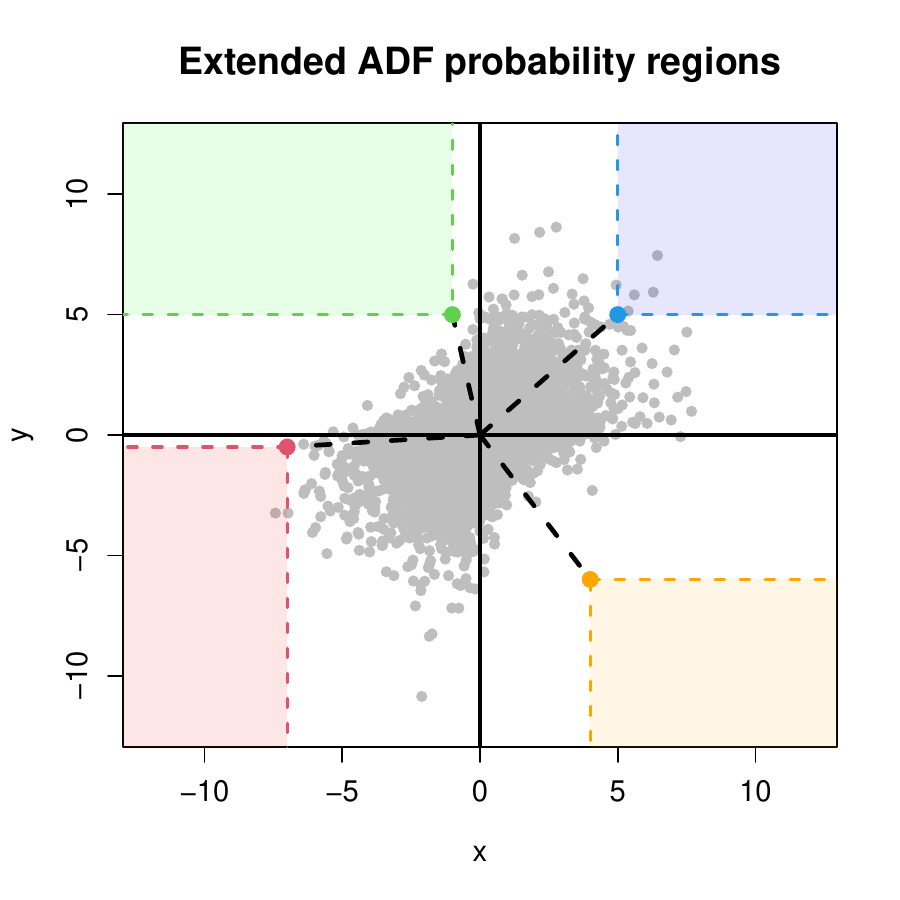}
    \caption{Example probability regions that can be evaluated using the model described in equation \eqref{eqn:wads_tawn_laplace}, with data simulated from a Gaussian copula on Laplace margins. The blue, green, red, and orange regions denote the sets $[5,\infty) \times [5,\infty)$, $(-\infty,-1] \times [5,\infty)$, $(-\infty,-7] \times (-\infty,-0.5]$, and $[4,\infty) \times (-\infty,-6]$, respectively.} 
    \label{fig:wt_model_probs}
\end{figure}

\subsection{Assessing goodness-of-fit for DeepGauge model fits}\label{sec:inference:gof}
We now discuss several diagnostic tools for assessing goodness-of-fit under the DeepGauge modelling framework, which we utilise for the case study in Section \ref{sec:app}. The first three tools correspond to quantile-quantile (QQ) plots on a unit exponential scale. Comparing quantiles on this scale is common within extreme value analyses, as it allows one to assess how well a given model captures the most extreme observations \citep[see, e.g.,][]{Coles2001,heffernan2001extreme}. The latter two tools described below are for visualisation of unit-level sets and the extended ADF in low ($d\leq3$) and high ($d>3$) dimensional settings.

\paragraph{Truncated gamma QQ plot:} 
We consider the diagnostic proposed by \citet{Wadsworth2024} to assess the validity of the modelling assumption described in equation \eqref{eqn:trunc_gamma_assum}. Here, the fitted truncated gamma model is used to transform all of the threshold-exceeding observations, i.e., $R\mid (\boldsymbol{W} =\boldsymbol{w}, R>r_\tau(\boldsymbol{w}) )$, to a unit exponential scale, and we compare the observed and theoretical quantiles via a QQ plot. This provides a global diagnostic for the model fit over the entire angular domain. 

\paragraph{Extended ADF diagnostic:} 
We adapt the diagnostic proposed by \citet{Murphy-Barltrop2024c} to assess goodness-of-fit with respect to the estimated extended ADF. Note that equation \eqref{eqn:wads_tawn_laplace} can be reformulated as follows: given any angle $\boldsymbol{w} \in \dsphere \setminus \mathcal{A}$, we have that ${T^u_{\boldsymbol{w}}  \sim \text{Exp}(\Lambda(\boldsymbol{w}))}$ as $u \to \infty$ for exceedances $T^u_{\boldsymbol{w}} := (T_{\boldsymbol{w}} - u )\mid (T_{\boldsymbol{w}} > u)$, with $T_{\boldsymbol{w}}$ defined in Section \ref{sec:estimation_gauge_adf}. Therefore, for sufficiently large $u$, the random variable $E := -\log[ 1 - (1 - \exp\{-\Lambda(\boldsymbol{w})T^u_{\boldsymbol{w}} \}) ] = \Lambda(\boldsymbol{w})T^u_{\boldsymbol{w}}$ approximately follows a unit exponential distribution, independent of the choice of $\boldsymbol{w}$. \citet{Murphy-Barltrop2024c} exploit this fact and compute min-projection exceedances for a fixed angle, before transforming these exceedances to the standard exponential scale. One can then evaluate the alignment of the exponential quantiles on a QQ plot. This results in a local diagnostic and allows us to test the validity of the modelling assumption described in equation \eqref{eqn:wads_tawn_laplace}. The corresponding algorithm for computing the diagnostic is given in Appendix~\ref{appen:extend_ADF}. 

\paragraph{Return level sets and probabilities:} 
Using our fitted model, we can obtain estimates of return level sets, which provide a summary of the joint extremal behaviour of random vectors and are used extensively in practice for design sensitivity analysis \citep[see, e.g.,][]{Mackay2021,Papastathopoulos2024,Simpson2024}. For a probability $p$, a return level set is defined as any region $\mathcal{B}\subset \RR^d$ satisfying $\Pr(\boldsymbol{X} \in \mathcal{B}) = p$. When $\mathcal{B}$ is centred at the origin $\boldsymbol{0}_d$, it can be obtained directly from the modelling framework outlined in Section \ref{sec:estimation_gauge_adf}; for each angle $\boldsymbol{w} \in \mathcal{S}^{d-1}$, set 
\begin{equation*}
    r_p(\boldsymbol{w}) = F^{-1}_\Gamma\left[\left( \frac{p - \tau}{1-\tau}\right) \times  \bar{F}_\Gamma\left(r_{\tau}\left(\boldsymbol{w}\right) ; \alpha, g\left(\boldsymbol{w}\right)\right) + F_\Gamma\left(r_{\tau}\left(\boldsymbol{w}\right) ; \alpha, g\left(\boldsymbol{w}\right)\right) ;\alpha, g\left(\boldsymbol{w}\right)\right],
\end{equation*}
where $F_\Gamma(\cdot)$ denotes the gamma distribution function. This corresponds to the estimated $p$-quantile of the conditional distribution $R \mid (\boldsymbol{W} = \boldsymbol{w})$ under model \eqref{eqn:trunc_gamma_assum}. Using these radial quantiles, define the Cartesian set  
\begin{equation*}
    \mathcal{B}_p := \left\{ \boldsymbol{x} \in \RR^d \setminus \{ \boldsymbol{0}_d\} : ||\boldsymbol{x}|| \leq r_p\left( \boldsymbol{x}/||\boldsymbol{x}|| \right) \right\} \bigcup \big\{ \boldsymbol{0}_d\big\}.
\end{equation*}
Assuming unbiased estimation, the total law of probability implies that $\Pr(\boldsymbol{X} \in \mathcal{B}_p) = p$ \citep{Papastathopoulos2024}. To assess the accuracy of return level set estimates, we propose the following diagnostic. First, observe that for any observation $\boldsymbol{x}_j \neq \boldsymbol{0}_d$, we have $\boldsymbol{x}_j \in \mathcal{B}_p \iff ||\boldsymbol{x}_j|| \leq r_p(\boldsymbol{x}_j/||\boldsymbol{x}_j||)$. Consequently, for a sample of non-zero observations $\{\boldsymbol{x}_j\}^n_{j=1}$, an empirical estimate $\hat{p}$ of $\Pr(\boldsymbol{X} \in \mathcal{B}_p)$ is $\hat{p} := (1/n)\sum_{j=1}^n \mathbbm{1}(||\boldsymbol{x}_j|| \leq r_p(\boldsymbol{x}_j/||\boldsymbol{x}_j||))$. Our diagnostic then plots the pairs $( -\log(1-p), -\log(1-\hat{p}))$, alongside tolerance bounds, for a subset of increasing probabilities close to 1. This approach for assessing goodness-of-fit provides a multivariate extension of `return level plots'; see, e.g., \citet{Coles2001}.   

\paragraph{Three-dimensional unit-level and extended ADF sets:} For $d\leq 3$, we can plot the scaled sample clouds $\{\boldsymbol{x}_j/\log(n/2) : j = 1,\hdots,n \}$ against the estimated unit-level and extended ADF sets; see, e.g., Figure \ref{fig:true_gauges}. For large enough $n$, we would expect the scaled sample clouds to lie approximately within the interiors of the estimated unit-level and extended ADF sets. Plotting scaled observations against the estimated unit-level sets has also been used for validation by, e.g., \citet{Majumder2024, Papastathopoulos2024}. In addition to being a visual indicator of how well the estimated shapes capture the complex extremal dependence features of data, one can also use these plots to verify that the estimated unit-level and extended ADF sets satisfy all of the theoretical properties discussed in Sections~\ref{sec:intro} and \ref{sec:theory}.

\paragraph{Bivariate unit-level set slices:} 
For $d \geq 4$, we cannot visualise unit-level sets. Furthermore, interpretation is challenging for $d = 3$ unless one can freely alter the perspective angle, e.g., using computational software. To account for this shortcoming, we propose considering bivariate slices of the estimated unit-level sets.  Specifically, given indices $ (\mathfrak{i},\mathfrak{j}),$ with $1\leq \mathfrak{i} < \mathfrak{j} \leq d$, consider the set of points 
\begin{equation*}
    \widetilde{\partial \mathcal{G}}_{\mathfrak{i},\mathfrak{j}} := \left\{ (w_\mathfrak{i},w_\mathfrak{j})/\Tilde{g}(\boldsymbol{w}) : \boldsymbol{w} \in \dsphere, w_k = 0, k \in \mathcal{D}\setminus \{\mathfrak{i},\mathfrak{j}\} \right\} \subset [-1,1]^2.
\end{equation*}
 It is important to note that $\widetilde{\partial \mathcal{G}}_{\mathfrak{i},\mathfrak{j}}$ is not the bivariate unit-level set for the vector $(X_{\mathfrak{i}},X_{\mathfrak{j}})$; rather, $\widetilde{\partial \mathcal{G}}_{\mathfrak{i},\mathfrak{j}}$ is a bivariate projection from the subset of $\widetilde{\partial \mathcal{G}}$ for which all angles indexed by $\mathcal{D}\setminus \{\mathfrak{i},\mathfrak{j}\}$ are equal to 0. This projection is illustrated in Figure \ref{fig:bivariate_slice} for a three-dimensional unit-level set with $(\mathfrak{i},\mathfrak{j})=(1,2)$. 

\begin{figure}
    \centering
    \begin{subfigure}{0.45\textwidth}
        \centering
        \includegraphics[width=\textwidth]{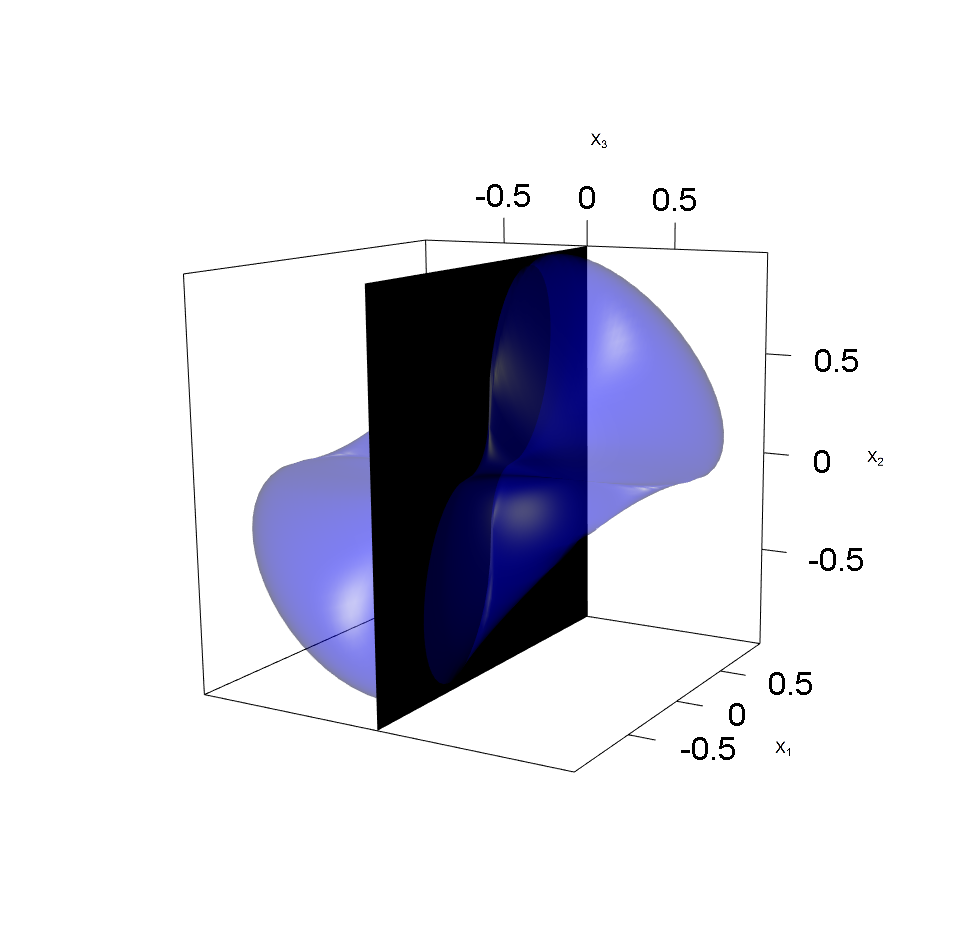}
    \end{subfigure}
    \quad
    \quad
    \begin{subfigure}{0.4\textwidth}
        \quad
        \quad
        \centering 
        \includegraphics[width=\textwidth]{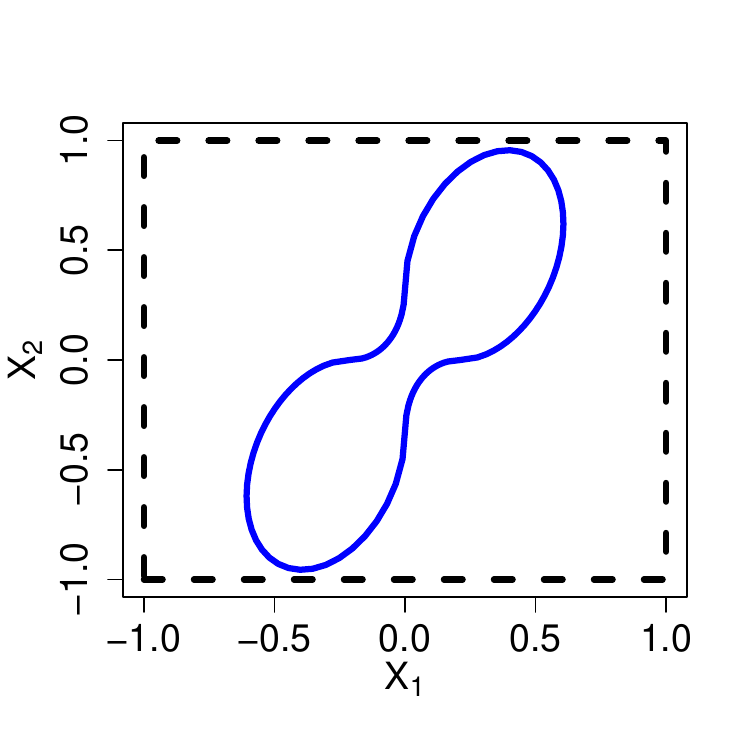}
    \end{subfigure}
    \caption{Pedagogical example of a bivariate slice from a three-dimensional unit-level set.}
    \label{fig:bivariate_slice}
\end{figure}

In practice, we observe very few angles in the region $\{ \boldsymbol{w} \in \dsphere : w_k = 0, k \in \mathcal{D}\setminus \{\mathfrak{i},\mathfrak{j}\}\}$; however, we will typically observe a significant number of angular observations that lie close to this region. Moreover, note that for any observation $\boldsymbol{x}$ with corresponding angular observation $\boldsymbol{w}$, we have $w_k = 0 \iff x_k/\log(n/2) = 0 $ for any $k \in \mathcal{D}$. Therefore, we plot, alongside  $\widetilde{\partial \mathcal{G}}_{\mathfrak{i},\mathfrak{j}}$, all observations of the the scaled bivariate sample clouds for which $||\boldsymbol{x}_{-\{\mathfrak{i},\mathfrak{j}\}}||/\log(n/2)  \leq \epsilon $, where $\boldsymbol{x}_{-\{\mathfrak{i},\mathfrak{j}\}}$ denotes the observation with its $\mathfrak{i}$-th and $\mathfrak{j}$-th components removed and $\epsilon >0$ denotes some small value. For sufficiently small $\epsilon$ and large enough $n$, one would expect the scaled bivariate sample cloud to lie approximately within the interior of the estimated bivariate slice. Selection of $\epsilon$ is considered in Section~\ref{sec:app}.

\section{Simulation study}\label{sec:sim_study}
\subsection{Overview}
We conduct two simulation studies to investigate the efficacy of the DeepGauge framework for estimating the gauge functions on $\dsphere$ (equivalently, the unit-level sets) of three known copulas: Gaussian, Student-t, and logistic. Details of the copulas and their theoretical gauge functions are provided in Section~\ref{sec:sim_study:models}.

Two studies are performed and their results presented in Section~\ref{sec:sim_study:results}. Efficacy in both studies is quantified using the validation diagnostics described in Section~\ref{sec:sim_study:models}. The first study considers, for a fixed quantile level $\tau$ and architecture, the effect of varying $d$ and $n$ on the accuracy of DeepGauge estimates of the gauge functions on $\dsphere$ and exceedance probabilities. The second study considers, for a fixed dimension $d$ and sample size $n$, the effect of hyper-parameter choice. In both studies, we perform $100$ experiments for every model specification. That is, for a single choice of $n$, $d$, copula, and hyper-parameter configuration, we simulate 100 data sets and apply the methodology, separately, to each. Performance metrics are then presented as the median and $95\%$ confidence intervals over all experiments. All models for both $r_\tau(\cdot)$ and $\tilde{g}(\cdot)$ are trained over 500 iterations with a mini-batch size of 1024, and the ADF function is estimated with $u$ in \eqref{eq:ADF_emp} taken to be the $q=0.9995$ empirical quantile of the structure variable $T_{\boldsymbol{w}}$. Early-stopping is used with a patience of $\Delta=5$; see Appendix~\ref{sec:methods:estimation:training} for details. 
\subsection{Models and performance metrics}\label{sec:sim_study:models}
We consider three copulas with known gauge functions and limit sets: Gaussian, Student-t, and logistic; the theoretical unit-level sets and gauge functions associated with them are detailed in \cite{Papastathopoulos2024}. If the random vector $\boldsymbol{X}$ also has a $d$-variate Gaussian copula with positive-definite precision matrix $Q\in \mathbb{R}^{d \times d}$, its gauge function is $g(\boldsymbol{x})=\left\{\text{sgn}(\boldsymbol{x})|\boldsymbol{x}|^{1/2}\right\}^TQ\left\{\text{sgn}(\boldsymbol{x})|\boldsymbol{x}|^{-1/2}\right\}$, where $\text{sgn}(\cdot)$ is applied component-wise. If $\boldsymbol{X}$ follows a $d$-variate Student-t copula with $Q$ as before and degrees of freedom $\nu>0$, its gauge function is $g(\boldsymbol{x})=-\nu^{-1}\sum^d_{i=1}|x_i|+(1+d\nu^{-1})\max_{i=1,\dots,d}|x_i|$. Finally, let $\boldsymbol{X}$ follow a logistic copula, i.e., a Gumbel copula with dependence parameter $1/\theta$ for $\theta \in (0,1]$. In this case, the form of the theoretical gauge function is cumbersome; see Appendix A.4 of \cite{Papastathopoulos2024}. For the Gaussian and Student-t copulas, we randomly generate, for each $d$, a valid precision matrix. However, we impose that the corresponding correlation matrices are ordered with respect to $d$. That is, if we consider two copulas with dimension $d_1$ and $d_2$, satisfying $d_1<d_2$, then the $d_1 \times d_1$ correlation matrix is a submatrix of the $d_2\times d_2$ correlation matrix. Note that the generated correlation matrices are kept fixed throughout all studies, and we further set $\nu=1$ and $\theta=0.3$ 

We simulate data from these copulas and apply the methodology described in Section~\ref{sec:methods} to estimate the gauge function on $\dsphere$. To quantify the accuracy of the estimated unit-level set, we measure the integrated squared error (ISE) of the form 
\begin{equation}\label{eq:ISE}
\mbox{ISE}:=\int_{\dsphere}\{1/g_0(\boldsymbol{w})-1/\tilde{g}(\boldsymbol{w})\}^2\mathrm{d}\boldsymbol{w},
\end{equation}
where $g_0$ and $\tilde{g}$ are, respectively, the theoretical and estimated gauge functions. We evaluate the integral in \eqref{eq:ISE} using Monte-Carlo methods; specifically, we use the estimator $\widehat{\mbox{ISE}}:=(A_d/|\mathcal{W}|)\sum_{\boldsymbol{w} \in \mathcal{W}}\{1/g_0(\boldsymbol{w})-1/\tilde{g}(\boldsymbol{w})\}^2$, where $A_d:=2\pi^{d/2}/\Gamma(d/2)$ is the surface area of $\mathcal{S}^{d-1}$, $\mathcal{W}$ is the set of points described in Section~\ref{sec:method:neuralnets}, and $\Gamma(\cdot)$ is the gamma function.

We validate estimates of the extended angular dependence function (ADF) by evaluating, for each considered copula, four joint probabilities. The first two are exceedance probabilities of the form $\Pr\{X_i > u, i=1,\dots,d\},$ with $u$ taken to be both the $0.99$ and $0.999$ standard Laplace quantiles. For the joint lower tail, we also evaluate $\Pr\{X_i < u, i=1,\dots,d\},$ with $u$ taken as the $0.01$ and $0.001$ standard Laplace quantiles. One of the strengths of the DeepGauge framework is that it is not limited to estimation of probabilities on hyper-cubes. To show this, we also estimate
\[
\Pr\{ u_{1,i}< X_i < u_{2,i}, i=1,\dots,d\},
\]
where $u_{1,i}=u_1, u_{2,i}=\infty $ if $i$ is odd and, otherwise, $u_{1,i}=- \infty, u_{2,i}=u_2$, where $u_1$ corresponds to the $0.999$ standard Laplace quantile. For $u_2$, we consider two cases: $u_2$ as either the $0.2$ or $0.4$ standard Laplace quantile. We consider estimation of the two latter probabilities for only the Gaussian and Student-t copulas, as evaluation is permissible using standard computational software. This is not the case for the logistic copula, and we found that Monte-Carlo methods perform poorly. To quantify efficacy, we provide the mean absolute log error (MALE), taking the average over all probabilities. The order-of-magnitude for the probabilities ranges from approximately $10^{-3}$ to $10^{-37}$. 
\subsection{Results}\label{sec:sim_study:results}
 We first consider a setting with the quantile level fixed to $\tau=0.75$ and the architecture for $g(\cdot)$  (equivalently, $\tilde{g}$) taken to be to a neural network with $N=3$ hidden layers of width 64 (see Appendix \ref{appendix:NN1} for details), but allow the sample size $n$ and dimension $d$ to vary: we consider $d=3$, $d=5$, and $d=8$, as well as $n=10,000$, $n=50,000$, $n=100,000$, and $n=250,000$. For $r_\tau(\cdot)$, we use a neural network with $N=3$ hidden layers of width $32$ and keep this fixed throughout. Results for this setting are provided in Figure~\ref{fig:sim_study3} of Appendix~\ref{appendx:supfigs}. As expected, we generally observe decreasing ISE and MALE with increasing sample size $n$ across all dimensions and copulas. 
 
 We then consider a setting with the sample size $n$ and dimension $d$ fixed to investigate how the choice of hyper-parameters impact the model fits. We fix $n=100,000$ and $d=5$ or $d=8$, but, across two scenarios, we allow the quantile level and architecture of $g(\cdot)$ to vary. For the first scenario, we consider a sequence of quantile levels $\tau\in\{0.1,0.3,\dots,0.9\}$ and take $g(\cdot)$ to be a neural network with $N=3$ hidden layers and with $32$ nodes per layer.  For the second scenario, we fix $\tau=0.75$ and vary the architecture for $g(\cdot)$. We consider eight neural networks with $N=1$ to $N=4$ hidden layers, and with consistent width across layers; we take this to be either 16 or 64. Results for the first and second scenarios are presented in Figure~\ref{fig:sim_study1} and Table~\ref{tab:simstudy2}, respectively.

Figure~\ref{fig:sim_study1} illustrates the simulation results for varying quantile level, $\tau$. We find that the optimal value of $\tau$ (in terms of minimising the performance metrics) differs across the copulas and dimension. For the Gaussian and Student-t copulas, we find that higher values of $\tau$ are preferable when estimating the unit-level set, as the ISE is reduced; the converse holds for the logistic copula. When considering the MALE, we find that the Gaussian and Student-t copulas tend to favour a lower value of $\tau$, particularly for the high dimensional case where $d=8$. Conversely, for the logistic copula, the MALE is minimised with the largest value of $\tau$. We note that the logistic copula has the quickest rate of convergence to the truncated gamma model described in \eqref{eqn:trunc_gamma_assum} \citep{Wadsworth2024}, which may explain why this copula benefits from a smaller $\tau$.

Table~\ref{tab:simstudy2} in Appendix~\ref{appendx:supfigs} presents the results for a fixed quantile level ${\tau=0.75}$, but with varying architecture for $g(\cdot)$. We find that, for the Gaussian and Student-t copulas, the ISE is minimised when using the more complicated architecture, e.g., 3--4 layers of width 64; the converse holds for the logistic copula. Note that the logistic copula is determined by a single parameter, and so its corresponding gauge function has a much simpler shape that the other copulas (see Figure \ref{fig:true_gauges} for the case when $d=2$). Hence, it is unsurprising that the logistic copulas favours a simpler neural network when considering the ISE. For the MALE, the results are less clear: all three copulas generally favour an architecture with fewer parameters (and, hence, a larger effective sample size), except the Gaussian copula with $d=5$. In applications, the shape of the gauge function that arises from the data generating process is likely to be quite complex. Figure~\ref{fig:sim_study1} and Table~\ref{tab:simstudy2} suggest that, if the goal is to accurately estimate the unit-level set $\partial \mathcal{G}$, then one should consider using a more complicated neural network architecture for $g(\cdot)$. However, a more conservative approach should be taken if we wish to accurately estimate tail probabilities; reducing the quantile level $\tau$ and using a simpler neural network architecture will increase the effective sample size used in inference for model \eqref{eqn:trunc_gamma_assum}, and this improve estimation of tail probabilities. 

 \begin{figure}[t!]
\centering
    \includegraphics[width=\linewidth]{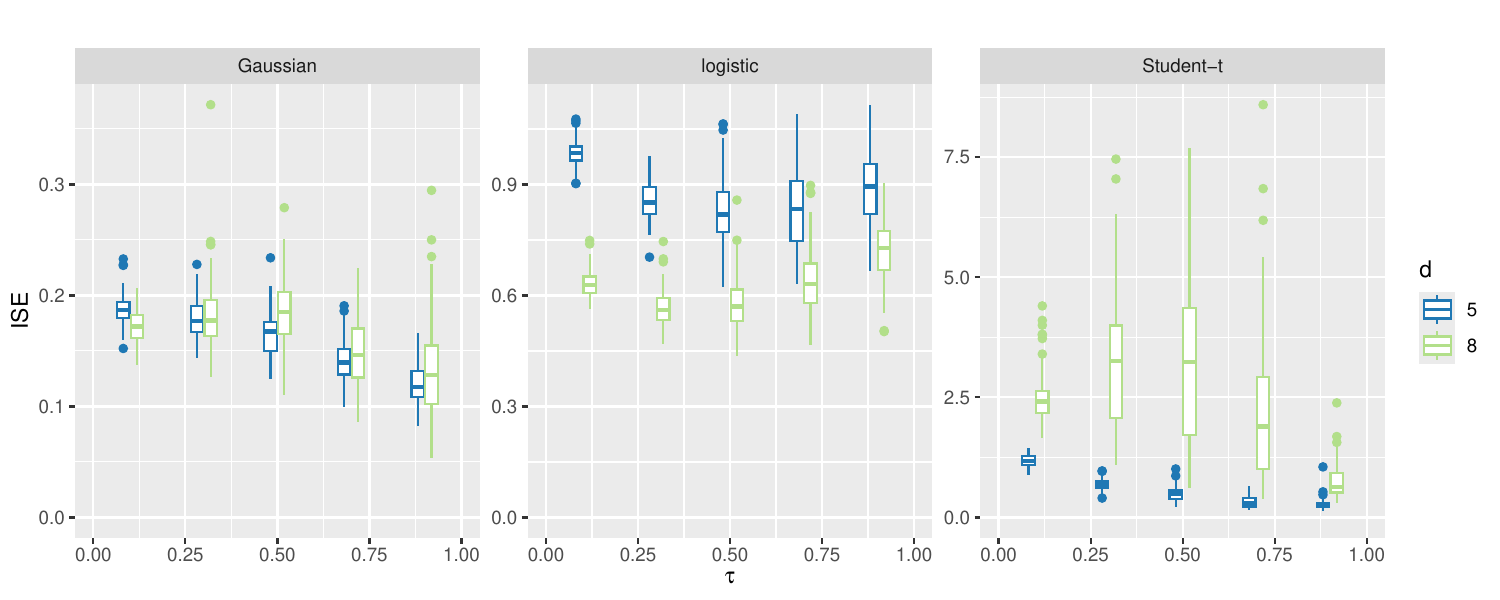}
    \includegraphics[width=\linewidth]{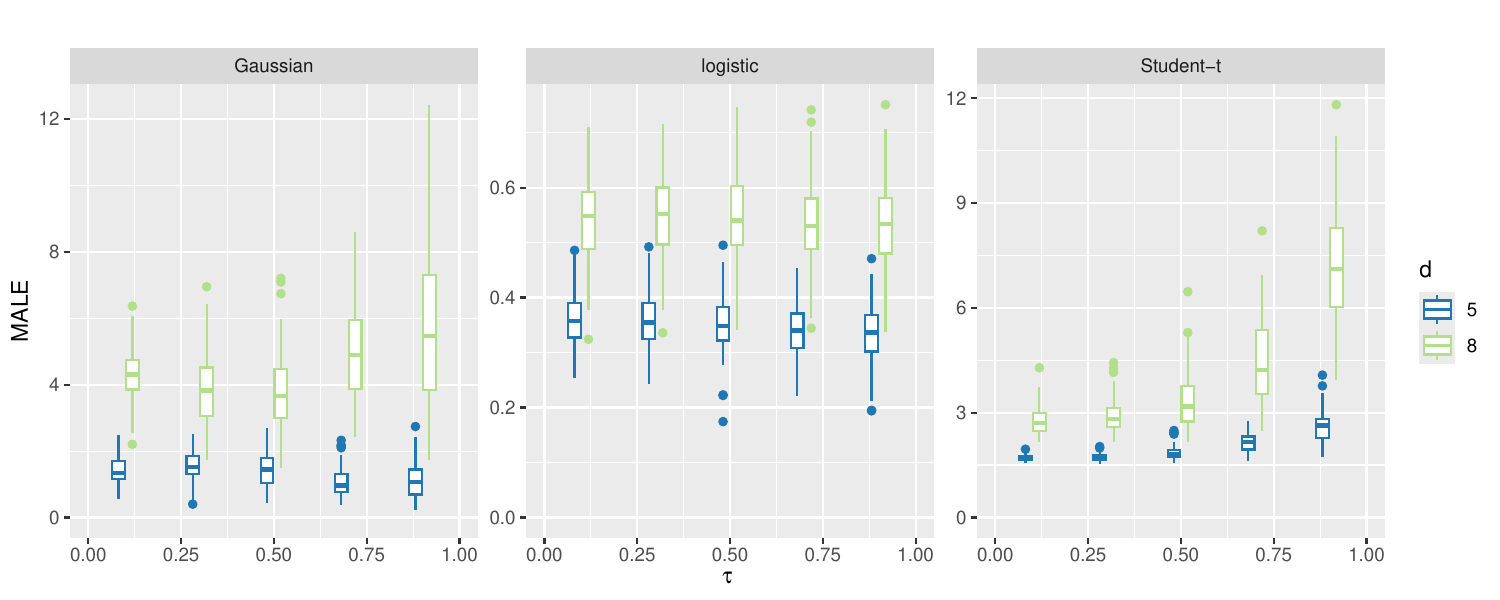}
    \caption{Boxplots of the estimated ISE and MALE for the quantile-level, $\tau$, simulation study described in Section~\ref{sec:sim_study:models}. Here  the sample size is $n=100,000$, the architecture for the rescaled gauge function has $N=3$ hidden layers, each of width 32, and the considered dimensions are $d=5$ and $d=8$.}
    \label{fig:sim_study1}
\end{figure}

Generally, we find some variation in model fits across different choices of architecture and quantile level $\tau$, but the results are broadly similar. From this study, we can conclude that there is no one-size-fits-all approach to fully optimising hyper-parameters of the the DeepGauge framework. In practice, where the true gauge function and underlying data generating process are unknown, we advocate that practitioners fit a collection of DeepGauge models with various quantile levels and architectures, and choose the best model fit using the goodness-of-fit metrics detailed in Section~\ref{sec:inference:gof}.


\section{Analysis of the NORA10 Metocean Data}\label{sec:app}
\subsection{Overview} \label{subsec:case:overview}
\begin{figure}[t!]
\centering
    \includegraphics[scale=0.6]{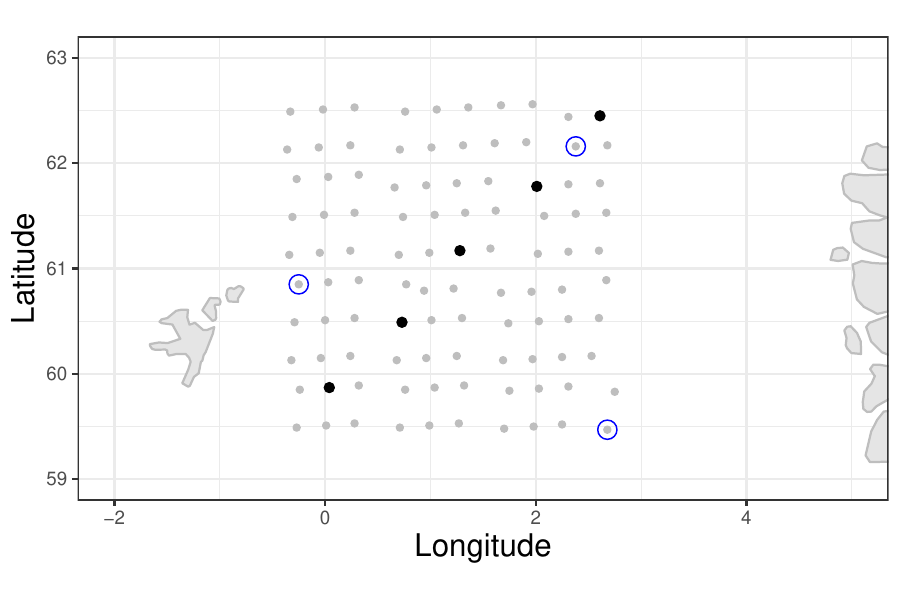}
        \caption{Locations used to study severe ocean events. Locations where we study the joint extremal dependence of \texttt{ws}, \texttt{hs}, and \texttt{mslp} are circled in blue. Locations for studying the joint extremal dependence in \texttt{hs} are highlighted in black.}
    \label{fig:sites_all}
\end{figure}
We demonstrate the efficacy of the DeepGauge framework by modelling extremal dependence among metocean variables associated with severe ocean events. Such events typically occur when multiple metocean variables are simultaneously extreme, and pose a risk to offshore and coastal structures, such as wind farms and oil platforms \citep{Shooter2021,Shooter2022}. Robust analyses of joint extremes is therefore crucial for informed decision making.  

Our study uses the NORA10 hindcast gridded data set \citep[NOrwegian ReAnalysis 10km,][]{Reistad2011}, which provides three-hourly wave fields over the Norwegian Sea, the North Sea, and the Barents Sea, at a spatial resolution of 10km. We focus on an area between the British Isles and Iceland for the period between September 1957 and December 2009 (inclusive), which amounts to $n=152,917$ observations. These data have also previously been analysed by \citet{Shooter2022}. We run two analyses to illustrate the efficacy of the DeepGauge framework for dimensions $d=3$ and $d=5$. For both analyses, hyper-parameters (e.g., model architecture and quantile level $\tau$) were optimised with respect to the goodness-of-fit diagnostics described in Section \ref{sec:inference:gof}. As our interest is in modelling extremal dependence and not marginal extremes, the data are apriori transformed to standard Laplace margins using a standard rank transform, applied independently at each spatial location and for each variable. All study locations are plotted in Figure~\ref{fig:sites_all}.

First, we consider the joint behaviour of wind speed (\texttt{ws}; m/s), significant wave height (\texttt{hs}; m), and mean sea level pressure (\texttt{mslp}; hPa) separately at three locations (01, 46, and 85, from south to north), outlined in blue in Figure \ref{fig:sites_all}. These three variables are associated with severe ocean events \citep{ewans2014recent}, and we aim to model their joint tail behaviour separately at each of the three locations. Of the $d=3$ variables, \texttt{ws} and \texttt{hs} exhibit positive dependence, while the other two pairs of variables exhibit negative dependence; see Figure~\ref{fig:subgauges_1}. Given the complex dependence structure of these data, it is likely that standard parametric extremal dependence models will perform poorly here; hence, these data provide the perfect candidate for illustrating the flexibility of our DeepGauge model.

Our second analysis models the joint tail behaviour of significant wave height (\texttt{hs}) across five locations (16, 33, 58, 72, and 92; highlighted in black, Figure~\ref{fig:sites_all}) along a transect that runs approximately south-west to north-east, representing storm fronts that move along that direction. Simultaneous extremes at multiple locations pose higher risks than extremes at individual locations \citep{Simpson2021a}, and so quantification of joint extreme risk across these $d=5$ locations represents an important area for investigation. 

For the $d=3$ and $d=5$ cases, we use $\tau = 0.9$ and $\tau = 0.4$, respectively. The neural networks for $r_{\tau}(\cdot)$ and $\tilde{g}(\cdot)$ in the $d=3$ setting have identical architectures: three hidden layers, each with 64 neurons. For $d=5$, both neural networks have two hidden layers; the neural networks for $r_{\tau}(\cdot)$ and $\tilde{g}(\cdot)$ have all widths 16 and 64, respectively. Since the data exhibit strong temporal dependence, uncertainty is quantified via block bootstrapping: this sampling scheme retains temporal dependence in the resampled data set, ensuring the additional uncertainty that arises due to lower effective sample sizes is accounted for \citep{Politis1994}. Moreover, the block size set to $4$ days, since this length is appropriate for capturing the temporal dependence typically observed in metocean data sets \citep{Murphy-Barltrop2024}. 

\subsection{Results}
\subsubsection{Joint distribution of \texttt{hs}, \texttt{ws}, and \texttt{mslp}: $d=3$}
\begin{figure}
\centering
    \includegraphics[scale=.4]{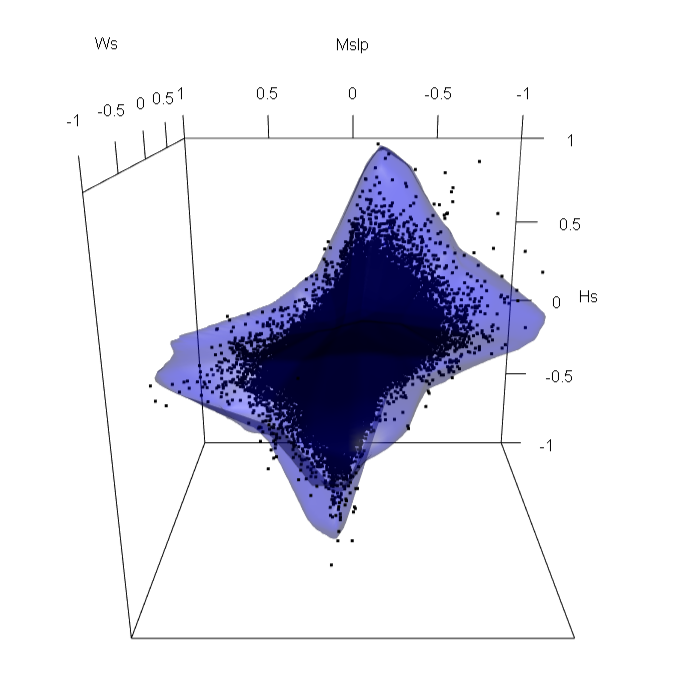}
    \includegraphics[scale=.4]{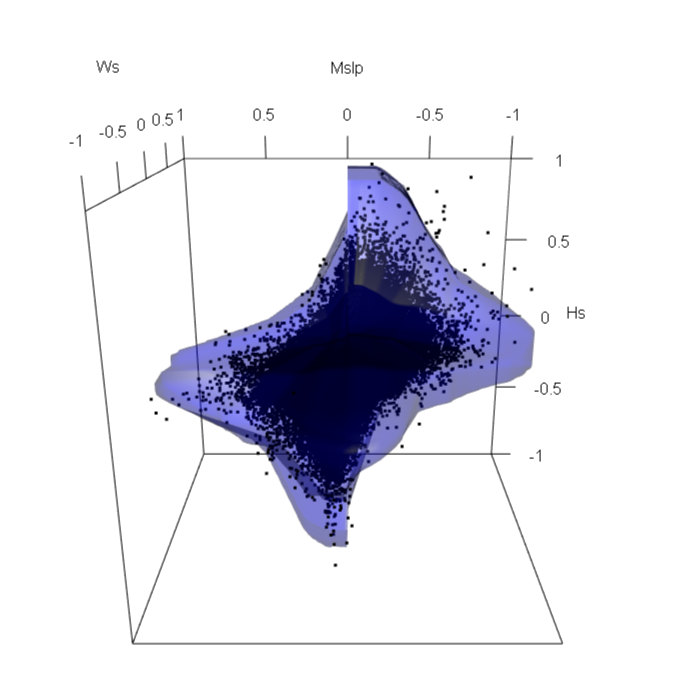}
         \caption{The estimated unit-level (left) and extended ADF (right) sets for \texttt{hs}, \texttt{ws}, and \texttt{mslp} at location 01 (on standard Laplace margins).}
    \label{fig:estimates_1}
\end{figure}

Here we present results for the three metocean variables observed at location 01; results for the remaining two locations are presented in Appendix~\ref{appendx:supfigs}. Figure~\ref{fig:estimates_1} plots the estimates of the three-dimensional unit-level set and the extended ADF for the three variables, with the sample clouds transformed to standard Laplace margins. We observe that the estimated unit-level set captures well the shape of the scaled sample cloud. We also plot bivariate unit-level set slices for each of the three pairs of variables in Figure~\ref{fig:subgauges_1}.
For these, we set $\epsilon = 0.01$ (see Section \ref{sec:inference:gof}), resulting in a reasonable amount of validation data for each unit-level set slice. We observe that the estimated slices closely match the shapes of the corresponding bivariate sample clouds, again indicating accurate model fits. Corresponding plots for the remaining two locations are provided in Figures~\ref{fig:estimates_46}--\ref{fig:subgauges_85} (Appendix~\ref{appendx:supfigs}). 

Goodness-of-fit is further verified using the truncated gamma QQ plot, the extended ADF diagnostic evaluated at $\boldsymbol{w} = \boldsymbol{1}_d/||\boldsymbol{1}_d||$, and the estimated return level set probabilities; see Section \ref{sec:inference:gof}. The plots, provided in Figure~\ref{fig:qqplot_1} of Appendix~\ref{appendx:supfigs}, illustrate strong agreement between model and empirical quantiles across all three diagnostics. The goodness-of-fit analyses for the remaining two locations also indicate good model fits over all diagnostics (Figures~\ref{fig:qqplot_46}--\ref{fig:qqplot_85} of Appendix~\ref{appendx:supfigs}). For all locations, we also provide animated three dimensional plots for the unit-level sets and extended ADFs in the Supplementary Material. 

\begin{figure}
\centering
    \includegraphics[width=\linewidth]{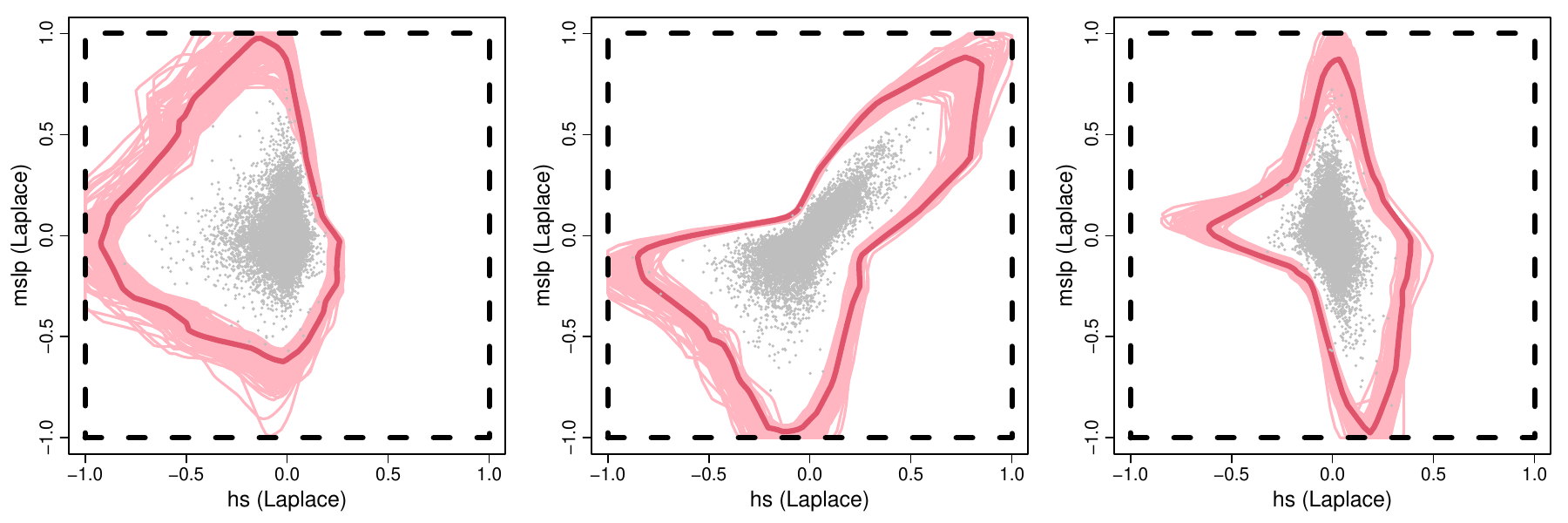}
         \caption{Scaled bivariate sample clouds with $\epsilon = 0.01$ for pairwise combinations of \texttt{ws}, \texttt{hs}, and \texttt{mslp} at location 01 on standard Laplace margins; the red lines describe the estimated bivariate unit-level set slices.}
    \label{fig:subgauges_1}
\end{figure}

Finally, to demonstrate a practical use case of the three dimensional model fits, we use the estimated extended ADFs, obtained from the estimated unit-level sets, to estimate probabilities for joint tail regions. These tail regions, or hypercubes, are illustrated in Figure \ref{fig:hypercubes} for metocean data of location $01$ (on Laplace margins), noting that the same hypercubes are considered across all three locations. The orange, blue and purple hypercubes correspond to the tail regions $\{\texttt{hs}_L > u_{0.99}, \texttt{ws}_L > u_{0.99}, \texttt{mslp}_L < -u_{0.99} \}$, $\{\texttt{hs}_L > u_{0.99}, \texttt{ws}_L < -u_{0.99}, \texttt{mslp}_L < -u_{0.99} \}$, and $\{\texttt{hs}_L < -u_{0.99}, \texttt{ws}_L > u_{0.99}, \texttt{mslp}_L < -u_{0.99} \}$ respectively, where $u_{0.99}$ denotes the 0.99 quantile of the standard Laplace distribution and $\cdot_L$ denote each metocean variable on Laplace margins. As noted previously, high wind speeds and wave heights combined with low sea level pressure are typical conditions for the most extreme storm events \citep{Ewans2014,Lobeto2024}; this corresponds with the orange hypercube in Figure \ref{fig:hypercubes}. Conversely, owing to the underlying physics of metocean phenomena, one would expect to observe very little data in the remaining hypercubes. Bootstrapped probability estimates for each hypercube and location are given in Appendix~\ref{appendx:supfigs}; the results agree with what one would expect, in the sense that the estimates for the orange hypercubes are many orders of magnitude greater than the blue and purple regions. For example, at location $01$, the bootstrapped probability estimates for the orange region exist within the interval $[2.9\times 10^{-4},8.2\times 10^{-4}]$ (i.e., rare but feasible within a reasonable time window), whereas the corresponding estimates for the remaining regions are all smaller than $2.73\times 10^{-8}$. This indicates the estimated probabilities are accurately representing the observed tail structure. 
\begin{figure}
\centering
    \includegraphics[width=.5\linewidth]{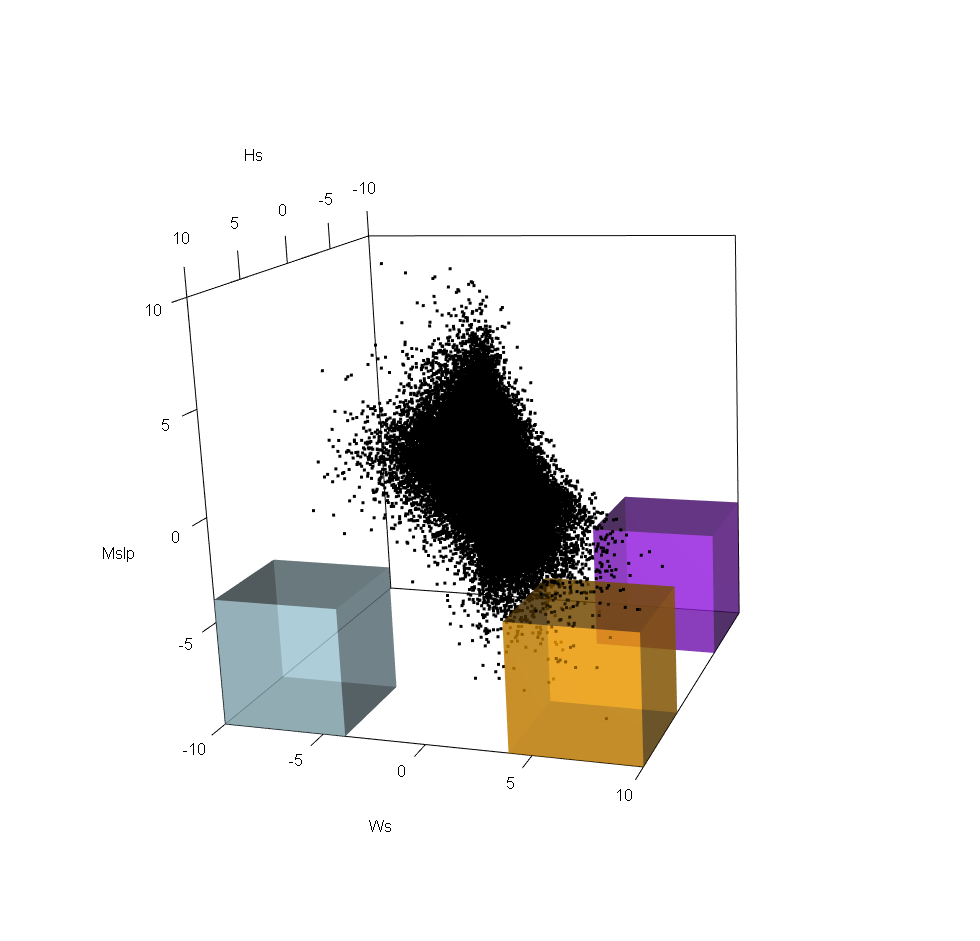}
         \caption{Hypercube regions considered for the three metocean variables. }
    \label{fig:hypercubes}
\end{figure}

\subsubsection{Joint distribution of \texttt{hs} along transect: $d=5$}
For the $d=5$ joint distribution of significant wave height \texttt{hs}, we cannot visualise the unit-level set estimate. However, there are 10 possible bivariate unit-level slices corresponding to different pairs of locations; three of these are illustrated in Figure~\ref{fig:subgauges_combined}. We again set $\epsilon = 0.01$ as this results in a reasonable amount of validation data for each slice. As previously, we observe that the estimated slices closely match the shapes of the corresponding bivariate sample cloud, indicating that the DeepGauge model can capture extremal dependence of \texttt{hs} across all five locations. Goodness-of-fit diagnostics, as illustrated in Figure~\ref{fig:qqplot_d5} of Appendix~\ref{appendx:supfigs}, show generally good agreement between model and empirical quantiles, albeit with slightly more divergence at extremely high quantiles relatively to the $d=3$ cases. This is especially true for the ADF diagnostic; however, this is not surprising considering the fact observations in each orthant are more sparse in higher dimensions (relative to $d=3$), and more data is generally required in such cases to ensure convergence to asymptotic forms. 

We also test the DeepGauge framework on a transect with $d=8$ locations; these results are detailed in the Supplementary Material. For this case, diagnostics again indicate generally good performance, albeit with a slightly reduction in quality compared to the lower dimensional cases. 

\begin{figure}[h]
\centering
    \includegraphics[width = \linewidth]{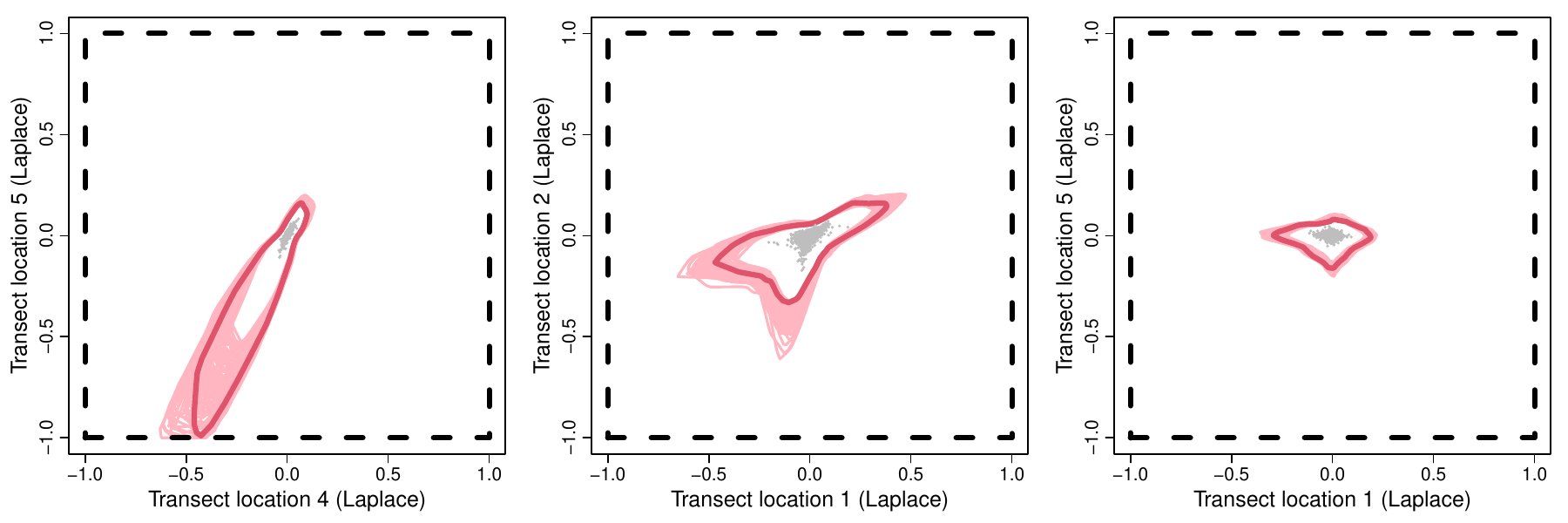}
         \caption{Scaled bivariate sample clouds with $\epsilon = 0.01$ for three pairwise combinations of the \texttt{hs} variable over the five locations on standard Laplace margins; the red lines describe the estimated bivariate unit-level set slices.}
    \label{fig:subgauges_combined}
\end{figure}

Overall, the proposed diagnostics indicate that the DeepGauge framework provides accurate model fits for metocean data sets of $d=3$, $d=5$, and $d=8$ dimensions. The estimated unit-level sets facilitate inference of the entire extremal dependence structure of the data, not only dependence in a single orthant of $\RR^d$, without the need for restrictive parametric models or low-order summary measures (see Section~\ref{sec:intro}). We also remark that our results for the cases of $d=5$ and $d=8$ are particularly encouraging: flexible models for extremal dependence are typically limited to $d\leq 3$, but we have shown that this limitation does not hold for the DeepGauge framework.

\section{Discussion}\label{sec:discussion}

We have introduced novel theoretical contributions for the geometric representation of multivariate extremes, which include extensions of the \citet{Wadsworth2013} model and a demonstration that many of the results introduced by \citet{Nolde2022} can be extended to data on Laplace margins. We further proposed the DeepGauge modelling framework and, through rigorous theoretical treatment and in contrast to existing models, we proved that our framework gives estimates that satisfy all of the required theoretical properties of unit-level sets. In practice, this results in consistent inference of extremal dependence and provides a means to estimate joint tail properties; we demonstrate this by evaluating the performance of the DeepGauge framework on simulated and observed data in Sections~\ref{sec:sim_study} and \ref{sec:app}, respectively. Unlike the majority of existing models for multivariate extremes, the DeepGauge framework is not limited to asymptotically dependent data nor low dimensions ($d \leq 4$). Thus, our approach represents a significant step towards flexible, non-parametric, and scalable models for multivariate extremes. 

We acknowledge that there is no theoretical guarantee that our proposed estimator for $\partial \mathcal{G}$ will converge to the true unit-level set. One could, for instance, try to prove consistency of the rescaled gauge function estimator, $\Tilde{g}(\cdot)$, for all angles in the set  $\dsphere$. However, theoretical results of this nature usually necessitate strict assumptions, which themselves can be difficult to verify. In practice, however, one can only ever look at diagnostics obtained from the data; we have therefore opted for a more practical treatment of our proposed estimator, and leave proofs of theoretical convergence for future work.

One noticeable omission from the DeepGauge framework is a model for the angular density $f_{\boldsymbol{W}}(\boldsymbol{w}), \boldsymbol{w} \in \dsphere$; see Section \ref{sec:method:est_ADF} for further discussion. Several existing approaches propose non-parametric techniques for this estimation, albeit in low dimensional ($d\leq 3$) settings \citep[e.g.,][]{Papastathopoulos2024,Murphy-Barltrop2024,Simpson2024}. Estimates of the angular density allow one to simulate from the model described in equation \eqref{eqn:trunc_gamma_assum}, from which one can apply Monte Carlo techniques to obtain probability estimates for joint tail regions of any form, i.e., more general than those illustrated in Figure \ref{fig:wt_model_probs}; see \citet{Wadsworth2024} for instance. The estimation of, and simulation from, the angular density function $f_{\boldsymbol{W}}(\boldsymbol{w})$ therefore represents an important area for future work in the context of multivariate extremes. 

Finally, given an estimate of $\partial \mathcal{G}$ for some $d \geq 3$, we remark that our modelling framework provides no means of obtaining lower dimensional unit-level sets, e.g., the unit-level set of $(X_{\mathfrak{i}},X_{\mathfrak{j}})$ for indices $ (\mathfrak{i},\mathfrak{j}),$ with $1\leq \mathfrak{i} < \mathfrak{j} \leq d$. This is because we only estimate the gauge function $g(\cdot)$ over $\dsphere$, and to obtain a lower-dimensional gauge function, such as $g_{(X_{\mathfrak{i}},X_{\mathfrak{j}})}(\cdot)$, would require us to minimise $g(\cdot)$ over all components in $\mathcal{D} \setminus \{\mathfrak{i},\mathfrak{j} \} $ \citep{Nolde2022}; this is only possible if we can evaluate $g(\cdot)$ on the whole of $\RR^d$. Future work could investigate techniques for extracting lower dimensional unit-level sets from higher dimensional estimates, which would avoid the need for refitting. 

\section*{Declarations}
\subsection*{Ethical Approval}
Not Applicable
\subsection*{Availability of supporting data}
The data sets analysed in Section \ref{sec:app} are available from the corresponding author upon reasonable request. 
\subsection*{Competing interests}
The authors have no relevant financial or non-financial interests to disclose.

\subsection*{Funding details} 
This work was supported by grants from the United States Geological Survey’s National Climate Adaptation Science Center (G24AC00197), and the National Science Foundation (DMS2152887).
\subsection*{Acknowledgments}
The authors would like to thank Phil Jonathan for providing the data, Ryan Campbell for access to code, and Lambert de Monte for helpful discussions.

\renewcommand{\theequation}{A.\arabic{equation}}
\renewcommand{\thefigure}{A\arabic{figure}}
\renewcommand{\thetable}{A\arabic{table}}
\renewcommand{\thesection}{A\arabic{section}}
\renewcommand\theHtable{Appendix.\thetable}
\renewcommand\theHfigure{Figure.\thefigure}

\setcounter{figure}{0}
\setcounter{table}{0}
\setcounter{equation}{0}
\setcounter{theorem}{0}
\begin{appendix}
\section*{Appendix}
\section{Neural network construction and inference}
\subsection{
Neural network representation
}
\label{appendix:NN1}
We here describe the construction of $r_\tau(\cdot)$ and $g(\cdot)$ used multi-layer perceptrons. Let $r_\tau(\boldsymbol{w}):=
\exp\{m_{\boldsymbol{\psi}}(\boldsymbol{w})\},$ where $m_{\boldsymbol{\psi}}(\boldsymbol{w})$ is a neural network, parameterised by $\boldsymbol{\psi}$, with $N\in\mathbb{N}$ hidden layers. Then $m_{\boldsymbol{\psi}}$ is the composition ${m_{\boldsymbol{\psi}}(\cdot)=m_{\boldsymbol{\psi}}^{(N+1)}\circ\dots\circ \mathbf{m}^{(2)}_{\boldsymbol{\psi}}\circ \mathbf{m}^{(1)}_{\boldsymbol{\psi}}(\cdot)}$, where, for $j=1,\dots,N$, the output from the $j$-th hidden layer, $\mathbf{m}^{(j)}_{\boldsymbol{\psi}}:\mathbb{R}^{n_{j-1}}\mapsto \mathbb{R}^{n_{j}}$, can be written as 
\begin{equation}
\label{Eq:MLP}
\boldsymbol{w}^{(j)}:=\mathbf{m}^{(j)}(\boldsymbol{w}^{(j-1)})=\text{ReLU}\left( \mathbf{M}^{(j)}\boldsymbol{w}^{(j-1)}+\mathbf{b}^{(j)}\right),
\end{equation}
where $\boldsymbol{w}^{(0)}:=\boldsymbol{w}$ is the input angles with dimension $n_{0}=d$ and the final layer is $m^{(N+1)}(\boldsymbol{w}^{(N)})=\mathbf{M}^{(N+1)}\boldsymbol{w}^{(N)}+\mathbf{b}^{(N)} \in \RR$, i.e., $n_{N+1} = 1$. The ReLU function evaluates the component-wise maxima of its input and a vector of zeroes (of suitable length), i.e., ReLU($\boldsymbol{x})=(\max\{x_1,0\},\max\{x_2,0\},\dots)^T$, where $\boldsymbol{x}=(x_1,x_2,\dots)^T$ is a vector of finite (unspecified) length. Each layer $j$, for $j=1,\dots,N+1,$ is parameterised by estimable weights matrices and vectors ${\mathbf{M}^{(j)}\in\mathbb{R}^{n_j\times n_{j-1}}}$ and $\mathbf{b}^{(j)}\in\mathbb{R}^{n_j}$, respectively. Thus, the neural network is fully-parameterised by the collection of parameters $\boldsymbol{\psi}=\{\{\mathbf{M}^{(j)},\mathbf{b}^{(j)}\}_{j=1}^{N+1}\}$. The dimension $n_j$ of the output of layer $\mathbf{m}^{(j)}$, as well as the ``depth'' $N$ of $m_{\boldsymbol{\psi}}(\cdot)$, are referred to as the neural network's architecture and are tunable hyper-parameters.

\subsection{Training and pre-training}\label{sec:methods:estimation:training}
The minimisation in \eqref{eq:emp_loss} is performed using the Adaptive Moment Estimation (Adam) algorithm \citep{kingma2014adam}, which is a variant of mini-batch stochastic gradient descent. The Adam algorithm is applied with its hyper-parameters at their default settings, and the models are trained using the \texttt{R} interface to \texttt{keras} \citep{kerasforR}. Prior to training, we uniformly-at-random partition the data into $80\%$ training and $20\%$ validation, with this partition consistent across the training stage for both $r_\tau(\boldsymbol{w})$ and $\tilde{g}(\boldsymbol{w})$. To mitigate overfitting of the neural network model, we train with checkpoints and early-stopping \citep{prechelt2002early}. Neural network models are optimised for a finite number of pre-specified iterations, say $M$, by minimising the loss function evaluated on the training data. The validation data are not used to optimise the neural network parameters. Instead, at every iteration of the training algorithm, we evaluate the loss function on the validation data. Then, the ``best fitting'' model is determined to be that which which minimises the loss function, evaluated on the validation data, across all epochs. Early-stopping ensures that the training scheme does not necessarily run for all $M$ iterations; training stops early if the validation loss has not decreased in the last $\Delta\in\mathbb{N}$ iterations. We also add $L_1$ and $L_2$ penalties to the estimable neural network parameters, $\psi$, to provide further model regularisation; throughout we adopt these penalties with equal shrinkage weight of $1\times10^{-4}$.  For further details on fitting and regularising deep neural network models, see \cite{goodfellow2016deep}.

Typically, without prior knowledge of the function that one seeks to approximate, the estimable parameter set, $\boldsymbol{\psi},$ of a neural network model is randomly initialised prior to training; thus when training our neural network model for the exceedance threshold $r_\tau(\cdot)$ we always choose to take this approach, as we have no prior knowledge of its structure. However, for training of the neural network that comprises the gauge function $g(\cdot)$, we can exploit some of the theoretical structure of gauge functions to perform pre-training, which is a popular technique used in applications of neural networks to reduce computation times. Pre-training uses parameter estimates of a neural network trained for one task as initial parameter estimates for a neural network model designed for a different task (with equivalent architecture; see \cite{goodfellow2016deep}, Ch. 8). Following Corollary~\ref{corol:all_properties}, we can use an estimate of the radial exceedance threshold, say $\hat{r}_\tau(\boldsymbol{w}),$ to obtain an initial estimate for the gauge function, say $\hat{g}_{\tau}(\boldsymbol{w})$; this is achieved by replacing the radial function $h(\cdot)$ in Corollary~\ref{corol:all_properties} with $\hat{r}_\tau(\cdot)$ and deriving the corresponding rescaled gauge function.
In the two dimensional setting, \citet{Wadsworth2024} use a similar idea to get an initial estimate for the unit-level set $\partial \mathcal{G}$. The authors also show that, for $\tau$ close to 1, the rescaled quantile set is a valid approximation of $\partial \mathcal{G}$. We use this idea to pre-train the neural network of the gauge function model $\tilde{g}(\cdot)$; we perform an initial optimisation of its weights subject to minimisation of the squared-error loss between the initial gauge function $\hat{g}_{\tau}(\boldsymbol{w})$ and the model output, $\tilde{g}(\boldsymbol{w})$. For $\alpha$, we always set its initial estimate to $d$, as this is the theoretical value attained by gauge functions for many popular copula models \citep{Wadsworth2024}. In unreported experiments, we found that this pre-training helps to mitigate numerical instability during training and improves accuracy of gauge function estimation. 

With the initial pre-trained weights, we estimate the rescaled gauge function, from which we obtain estimates of the unit-level set $\partial \mathcal{G}$. Selection of tuning parameters is discussed in Section~\ref{sec:sim_study} of the main text.

\section{Additional proofs}
\subsection{Proof of Proposition~\ref{prop:gauge_exp_laplace}}
\label{proof:prop:gauge_exp_laplace}
\begin{proof}
    First, observe that for any $X_E \sim \text{Exp}(1)$, we have that the variable given by
    \begin{equation*}
        X :=
            \begin{cases}
            \log(1 - e^{-X_E}) + \log(2), \hspace{1em} X_E \leq \log(2),\\
            X_E - \log(2), \hspace{5.57em} X_E > \log(2),
            \end{cases}
    \end{equation*}
    is standard Laplace distributed, where the $\log(2)$ here denotes the median of $X_E$. Given $\boldsymbol{w} \in \mathcal{S}^{d-1}_+$ we have, by definition, $g_E(\boldsymbol{w})=\lim_{t \to \infty} (-\log f_{\boldsymbol{X}_E}(t\boldsymbol{w})/t) $, where $f_{\boldsymbol{X}_E}(\cdot)$ denotes the continuous joint density function for $\boldsymbol{X}_E$ \citep{Nolde2022}. For any large $t$ satisfying $t > \max_{i=1,\dots,d} \{\log(2)/w_i\} > 0$, we have that 
    \begin{align*}
        \frac{-\log f_{\boldsymbol{X}_E}(t\boldsymbol{w})}{t} = \frac{-\log f_{\boldsymbol{X}}(t\boldsymbol{w} - {\log 2})}{t},
    \end{align*}
    where $f_{\boldsymbol{X}}(\cdot)$ is the continuous joint density function for $\boldsymbol{X}$ and the Jacobian of the transformation $\boldsymbol{X} \mapsto \boldsymbol{X}_E$ equals $1$ for the considered coordinates. For a fixed $\boldsymbol{w}$, set $t^* := || t\boldsymbol{w} - {\log 2} || = t|| \boldsymbol{w} - {\log 2}/t ||$ and {$\boldsymbol{w}^* := (t\boldsymbol{w} - {\log 2})/t^* = (\boldsymbol{w} - {\log 2}/t)/|| \boldsymbol{w} - {\log 2}/t ||$}. We then observe that $t^* \sim t$ and $\boldsymbol{w}^* \sim \boldsymbol{w}$ as $t \to \infty$. Therefore, it follows that
    \begin{equation*}
        \frac{-\log f_{\boldsymbol{X}}(t\boldsymbol{w} - {\log 2})}{t} = \frac{-\log f_{\boldsymbol{X}}(t^*\boldsymbol{w}^*)}{t} \sim \frac{-\log f_{\boldsymbol{X}}(t\boldsymbol{w})}{t}.
    \end{equation*}
    Taking the limit as $t \to \infty$, the result follows. 
\end{proof}

\subsection{Proof of Proposition~\ref{prop:limit_set_properties}}
\label{proof:prop:limit_set_properties}
\begin{proof}
    To prove the star-shaped property, it suffices to show that for any $\boldsymbol{x} \in \mathcal{H}$, we have $t\boldsymbol{x} \in \mathcal{H}$ for all $t \in [0,1]$. This is trivial for $\boldsymbol{x} = \boldsymbol{0}_d$. Taking $\boldsymbol{x} \in \mathcal{H}\setminus \{ \boldsymbol{0}_d\}$, we see that 
    \begin{equation*}
       ||t\boldsymbol{x}|| = t||\boldsymbol{x}|| \leq th(\boldsymbol{x}/||\boldsymbol{x}||) \leq h(\boldsymbol{x}/||\boldsymbol{x}||) = h(t\boldsymbol{x}/||t\boldsymbol{x}||),
    \end{equation*}
    and thus $t\boldsymbol{x} \in {G}$. Moreover, given $\boldsymbol{x} \in \mathcal{H} \setminus \{ \boldsymbol{0}_d\}$, we have 
    \begin{equation*}
        ||\boldsymbol{x}|| \leq h(\boldsymbol{x}/||\boldsymbol{x}||) \leq \frac{1}{\big\| \boldsymbol{x}/||\boldsymbol{x}|| \big\|_{\infty}} = \frac{||\boldsymbol{x}||}{|| \boldsymbol{x}||_{\infty}},
    \end{equation*}
    so $|| \boldsymbol{x}||_{\infty} \leq 1$, implying $-1 \leq x_i \leq 1$ for all $i = 1,\hdots,d$; thus, $\mathcal{H} \subset [-1,1]^d$. This also implies that $\mathcal{H}$ is bounded; thus, to prove $\mathcal{H}$ is compact, we just need to show that $\mathcal{H}$ is closed, owing to the Heine–Borel theorem \citep[see, e.g.,][]{Hayes1956}. Let $(\boldsymbol{x}_n)_{n \in \NN}$ denote a sequence in $\mathcal{H}$ such that $\boldsymbol{x}_n \to \boldsymbol{x}$ as $n \to \infty$, i.e., $||\boldsymbol{x}_n - \boldsymbol{x}|| \to 0$. We must show that $\boldsymbol{x} \in \mathcal{H}$. If $\boldsymbol{x} = \boldsymbol{0}_d$, the proof is trivial. Thus, we assume $\boldsymbol{x} \neq \boldsymbol{0}_d$. Given $\epsilon > 0$, there exists $N_1 \in \NN$ such that $||\boldsymbol{x}_n - \boldsymbol{x}|| < \epsilon$ for all $n \geq N_1$. Furthermore, since ${h}(\cdot)$ is continuous, there must exist $N_2 \in \NN$ such that $|{h}(\boldsymbol{x}_n/||\boldsymbol{x}_n||) - {h}(\boldsymbol{x}/||\boldsymbol{x}||) | < \epsilon$ for all $n \geq N_2$. Setting $N:= \max\{N_1,N_2 \}$, we have that 
    \begin{align*}
        ||\boldsymbol{x}|| &= ||\boldsymbol{x} - \boldsymbol{x}_n + \boldsymbol{x}_n ||\leq ||\boldsymbol{x} - \boldsymbol{x}_n|| + ||\boldsymbol{x}_n ||< \epsilon + {h}(\boldsymbol{x}_n/||\boldsymbol{x}_n||) < 2\epsilon + {h}(\boldsymbol{x}/||\boldsymbol{x}||).
    \end{align*}
    Taking the limit as $\epsilon \to 0$, we have $\boldsymbol{x} \in \mathcal{H}$; thus, $\mathcal{H}$ is compact.
\end{proof}
\subsection{Proof of Proposition~\ref{prop:rescaling}}
\label{proof:prop:rescaling}
To prove Proposition~\ref{prop:rescaling}, we require Lemma~\ref{lem:kappa_bijective}, which we first prove below.
\begin{proof}[Proof of Lemma \ref{lem:kappa_bijective}]
    To prove bijectivity, it suffices to show that $\kappa$ is injective and surjective. 

    \noindent Injectivity: suppose $\kappa(\boldsymbol{w}) = \kappa(\boldsymbol{w}^*)$ for $\boldsymbol{w},\boldsymbol{w}^* \in \dsphere$, and define the constant
    \begin{equation*}
        c:= \frac{ \Big\lVert  \left(\frac{w_1}{b_1(w_1)}, \hdots, \frac{w_d}{b_d(w_d)} \right) \Big \rVert }{ \Big\lVert  \left(\frac{w^*_1}{b_1(w^*_1)}, \hdots, \frac{w^*_d}{b_d(w^*_d)} \right) \Big \rVert } \in \RR_+.
    \end{equation*}
    If $\kappa(\boldsymbol{w}) = \kappa(\boldsymbol{w}^*)$, then for each $i = 1,\hdots,d$, we have $w_i = cw_i^*$. Furthermore, since $\boldsymbol{w},\boldsymbol{w}^* \in \dsphere$, we see that $c=c|| \boldsymbol{w}^*|| = || c\boldsymbol{w}^*|| = || \boldsymbol{w}|| = 1$; thus, $\boldsymbol{w} = \boldsymbol{w}^*$. 

    \noindent Surjectivity: let $\boldsymbol{w} \in \dsphere$ and without loss of generality, assume that $w_d \neq 0$. Setting
    \begin{equation*}
        \kappa^{-1}(\boldsymbol{w}) := \left( \frac{a(\boldsymbol{w})b_1(w_1)w_1}{b_d(w_d)|w_d|}, \hdots,\frac{a(\boldsymbol{w})b_{d-1}(w_{d-1})w_{d-1}}{b_d(w_d)|w_d|}, a(\boldsymbol{w}) \text{sgn}(w_d) \right), 
    \end{equation*}
    where $a(\boldsymbol{w}) := 1/\sqrt{1 + \sum_{j=1}^{d-1} ( b_j(w_j)w_j/b_d(w_d)w_d )^2 }$, it is straightforward to verify that $\kappa(\kappa^{-1}(\boldsymbol{w})) = \boldsymbol{w}$, completing the proof. 
\end{proof}

\begin{proof}[Proof of Proposition \ref{prop:rescaling}]
    First, observe that we can rewrite $\widetilde{\partial \mathcal{H}}$ as 
    \begin{equation*}
        \widetilde{\partial \mathcal{H}} = \left\{ \boldsymbol{w} h(\kappa^{-1}(\boldsymbol{w}))  \left\lVert  \left(\frac{\kappa^{-1}(\boldsymbol{w})_1}{b_1(\kappa^{-1}(\boldsymbol{w})_1)}, \hdots, \frac{\kappa^{-1}(\boldsymbol{w})_d}{b_d(\kappa^{-1}(\boldsymbol{w})_d)} \right) \right\rVert \bigg\vert \boldsymbol{w} \in \dsphere \right\},
    \end{equation*}
    where $\kappa^{-1}(\boldsymbol{w})_i$ denotes the $i$-th component of $\kappa^{-1}(\boldsymbol{w})$ for each $i = 1,\hdots,d$. Since $\kappa(\cdot)$ is bijective, there exists a unique, bijective mapping between $\partial \mathcal{H}$ and $\widetilde{\partial \mathcal{H}}$; thus, the sets are in one-to-one correspondence. Furthermore, by definition, we have $-1 \leq w_ih(\boldsymbol{w})/b_i(w_i) \leq 1$ for each $i = 1,\hdots, d$ and $\boldsymbol{w} \in \dsphere$; thus $\widetilde{\partial \mathcal{H}} \subset [-1,1]^d$. Finally, considering $i =1,\dots,d$ and setting $\boldsymbol{w}^{u,i} = \argmax_{\boldsymbol{w} \in \dsphere}\{w_ih(\boldsymbol{w}) \}$ and $\boldsymbol{w}^{l,i} = \argmin_{\boldsymbol{w} \in \dsphere}\{w_ih(\boldsymbol{w}) \}$, we have $w^{u,i}_ih(\boldsymbol{w}^{u,i})/b_i(w^{u,i}_i) = 1$ and  $w^{l,i}_ih(\boldsymbol{w}^{l,i})/b_i(w^{l,i}_i) = -1$, implying the componentwise maxima and minima equal $\boldsymbol{1}_d$ and $-\boldsymbol{1}_d$, respectively.
\end{proof}
\subsection{Proof of Corollary~\ref{corol:rescaled_gauge_estimate}}\label{proof:corol:rescaled_gauge_estimate}

\begin{proof}
    By definition, we have that $\boldsymbol{w}/\Tilde{g}(\boldsymbol{w}) \in \widetilde{\partial \mathcal{H}} \subset [-1,1]^d$ for any $\boldsymbol{w} \in \dsphere$. Applying similar reasoning to the proof of Proposition~\ref{prop:g_lowerbound}, we immediately see that $\Tilde{g}(\boldsymbol{w})\geq || \boldsymbol{w} ||_{\infty}$. Furthermore, taking any $i =1,\dots,d$, we have
    \begin{align*}
        ||\kappa(\boldsymbol{w}^{u,i})||_{\infty} &= \left\lVert \frac{  \left(\frac{w^{u,i}_1}{b_1(w^{u,i}_1)}, \hdots, \frac{w^{u,i}_d}{b_d(w^{u,i}_d)} \right) }{ \Big\lVert  \left(\frac{w^{u,i}_1}{b_1(w^{u,i}_1)}, \hdots, \frac{w^{u,i}_d}{b_d(w^{u,i}_d)} \right) \Big \rVert } \right\rVert_{\infty} = \frac{ \left\lVert h(\boldsymbol{w}^{u,i})\left(\frac{w^{u,i}_1}{b_1(w^{u,i}_1)}, \hdots, \frac{w^{u,i}_d}{b_d(w^{u,i}_d)} \right) \right\rVert_{\infty} }{ \left\lVert  h(\boldsymbol{w}^{u,i})\left(\frac{w^{u,i}_1}{b_1(w^{u,i}_1)}, \hdots, \frac{w^{u,i}_d}{b_d(w^{u,i}_d)} \right)\right \rVert \; \;} \\
        &= \frac{ 1 }{ \Big\lVert  h(\boldsymbol{w}^{u,i})\left(\frac{w^{u,i}_1}{b_1(w^{u,i}_1)}, \hdots, \frac{w^{u,i}_d}{b_d(w^{u,i}_d)} \right) \Big \rVert } = \Tilde{g}(\kappa(\boldsymbol{w}^{u,i})). 
    \end{align*}
    The same reasoning holds for $\boldsymbol{w}^{l,i}$ with $i = 1,\hdots,d$.
\end{proof}

\subsection{Proof of Proposition~\ref{prop:wads_tawn_exp_laplace}}
\begin{proof}
\label{proof:prop:wads_tawn_exp_laplace}
    Given any large $u$ satisfying $u > \max_{i=1,\dots,d} \{\log(2)/w_i\} > 0$, and applying similar reasoning as in the proof of Proposition~\ref{prop:gauge_exp_laplace}, we have that 
    \begin{align*}
        \Pr\left(\min_{i \in \mathcal{D}}\{X_{E,i}/w_{i}\}>u\right) &= \Pr\left(X_{E,i}> w_{i}u, i = 1,\hdots,d\right) \\
        &= \Pr\left(X_{i} > w_{i}u - \log 2, i = 1,\hdots,d\right) \\
        &\sim \Pr\left(X_{i} > w_{i}u, i = 1,\hdots,d\right) \\
        &= \Pr\left(\min_{i \in \mathcal{D}}\{X_{i}/w_{i}\}>u\right).
    \end{align*}
    Taking $u \to \infty$, we have 
    \begin{equation*}
        \Pr\left(\min_{i \in \mathcal{D}}\{X_{i}/w_{i}\}>u\right) = \Pr\left(\min_{i \in \mathcal{D}}\{X_{E,i}/w_{i}\}>u\right) = L(e^u;\boldsymbol{w})e^{-\lambda(\boldsymbol{w})u},
    \end{equation*}
    proving the statement. 
\end{proof}

\subsection{Proof of Proposition~\ref{prop:extended_ADF}}\label{proof:prop:extended_ADF}
\begin{proof}
    Given $\boldsymbol{w}  \in \mathcal{S}^{d-1}\setminus \mathcal{A} $, set $\boldsymbol{c}:=\text{sgn}(\boldsymbol{w}) = (\text{sgn}(w_1),\hdots,\text{sgn}(w_d))^T$. Then, we have that 
    \begin{align*}
        \Pr\left(\min_{i \in \mathcal{D}}\{X_{i}/w_{i}\}>u\right) &= \Pr\left(\min_{i \in \mathcal{D}}\{c_iX_{i}/c_iw_{i}\}>u\right) \to L_{\boldsymbol{c}\boldsymbol{X}}(e^u;\boldsymbol{c}\boldsymbol{w})e^{-\lambda_{\boldsymbol{c}\boldsymbol{X}}(\boldsymbol{c}\boldsymbol{w})u},
    \end{align*}
    as $u \to \infty$, where $L_{\boldsymbol{c}\boldsymbol{X}}(\cdot ;\boldsymbol{w})$ and $\lambda_{\boldsymbol{c}\boldsymbol{X}}(\cdot)$ denote the slowly varying function and ADF for $\boldsymbol{c}\boldsymbol{X}$ as in equation \eqref{eqn:wads_tawn}, respectively. The result follows with $\mathcal{L}(\cdot;\boldsymbol{w}) = L_{\boldsymbol{c}\boldsymbol{X}}(\cdot;\boldsymbol{c}\boldsymbol{w})$ and $\Lambda(\boldsymbol{w}) = \lambda_{\boldsymbol{c}\boldsymbol{X}}(\boldsymbol{c}\boldsymbol{w})$. 
\end{proof}

\subsection{Proof of Proposition~\ref{prop:extended_ADF_boundary_link}}
\label{proof:prop:extended_ADF_boundary_link}
\begin{proof}
    Let $\boldsymbol{c}=\text{sgn}(\boldsymbol{w}) \in \{-1,1\}^d$, and define $\boldsymbol{c}\boldsymbol{X}$ and $\boldsymbol{c}\boldsymbol{w} \in \dsphere_+$ as in Proposition \ref{prop:trans_gauge}, noting that $||\boldsymbol{c}\boldsymbol{w} ||_{\infty} = ||\boldsymbol{w} ||_{\infty}$. Further, let $\partial \mathcal{G}_{\boldsymbol{c}\boldsymbol{X}}$ and $g_{\boldsymbol{c}\boldsymbol{X}}(\cdot)$ denote the unit-level set and gauge function, respectively, for $\boldsymbol{c}\boldsymbol{X}$. For any $\mathfrak{r} \in \{ \mathfrak{r} \in [0,1] : \mathfrak{r}\mathcal{R}_{\boldsymbol{c}\boldsymbol{w}} \cap \partial \mathcal{G}_{\boldsymbol{c}\boldsymbol{X}} \neq \emptyset \}$, there must exist some $j \in \{1,\dots,d\}$ and $\boldsymbol{w}^* \in \dsphere_+$ such that 
    \begin{equation*}
        \mathfrak{r} \frac{c_jw_j}{||\boldsymbol{c}\boldsymbol{w}||_{\infty}} = \frac{w^*_j}{g_{\boldsymbol{c}\boldsymbol{X}}(\boldsymbol{w}^*)}.
    \end{equation*}
    By Proposition \ref{prop:trans_gauge}, $g_{\boldsymbol{c}\boldsymbol{X}}(\boldsymbol{w}^*) = g(\boldsymbol{w}^*/\boldsymbol{c})$, with $\boldsymbol{w}^*/\boldsymbol{c} := (w^*_1/c_1,\hdots,w^*_d/c_d)^T \in \mathcal{S}^{d-1}\setminus \mathcal{A}$ and $\text{sgn}(\boldsymbol{w}) = \text{sgn}(\boldsymbol{w}^*/\boldsymbol{c})$. Thus, we have 
    \begin{equation*}
        \mathfrak{r} \frac{w_j}{||\boldsymbol{w}||_{\infty}} = \frac{w^*_j/c_j}{g(\boldsymbol{w}^*/\boldsymbol{c})}, 
    \end{equation*}
    implying $\mathfrak{r} \in \{ \mathfrak{r} \in [0,1] : \mathfrak{r}\Tilde{\mathcal{R}}_{\boldsymbol{w}}\cap \partial \mathcal{G} \neq \emptyset \}$. Analogous reasoning shows that \[\mathfrak{r} \in \{ \mathfrak{r} \in [0,1] : \mathfrak{r}\Tilde{\mathcal{R}}_{\boldsymbol{w}}\cap \partial \mathcal{G} \neq \emptyset \} \Rightarrow \mathfrak{r} \in \{ \mathfrak{r} \in [0,1] : \mathfrak{r}\mathcal{R}_{\boldsymbol{c}\boldsymbol{w}} \cap \partial \mathcal{G}_{\boldsymbol{c}\boldsymbol{X}} \neq \emptyset \},\] giving $\{ \mathfrak{r} \in [0,1] : \mathfrak{r}\Tilde{\mathcal{R}}_{\boldsymbol{w}}\cap \partial \mathcal{G} \neq \emptyset \} = \{ \mathfrak{r} \in [0,1] : \mathfrak{r}\mathcal{R}_{\boldsymbol{c}\boldsymbol{w}} \cap \partial \mathcal{G}_{\boldsymbol{c}\boldsymbol{X}} \neq \emptyset \}$ and $\Tilde{\mathfrak{r}}_{\boldsymbol{w}} = \max \{ \mathfrak{r} \in [0,1] : \mathfrak{r}\Tilde{\mathcal{R}}_{\boldsymbol{w}}\cap \partial \mathcal{G} \neq \emptyset \} = \mathfrak{r}_{\boldsymbol{c}\boldsymbol{w}}$. Applying Proposition 3.3 of \citet{Nolde2022}, alongside Proposition \ref{prop:gauge_exp_laplace}, we see that 
    \begin{align*}
        \Lambda(\boldsymbol{w}) = \lambda_{\boldsymbol{c}\boldsymbol{X}}(\boldsymbol{c}\boldsymbol{w})= ||\boldsymbol{c}\boldsymbol{w}||_{\infty} \times \mathfrak{r}_{\boldsymbol{c}\boldsymbol{w}}^{-1} = ||\boldsymbol{w}||_{\infty} \times \Tilde{\mathfrak{r}}_{\boldsymbol{w}}^{-1},
    \end{align*}
    completing the proof. 
\end{proof}
An illustration of Proposition \ref{prop:extended_ADF_boundary_link} is given in Figure \ref{fig:adf_proof}. 

\subsection{Proof of Corollary~\ref{corol:gauge_adf_bound}}\label{proof:corol:gauge_adf_bound}
\begin{proof}
    Let $\boldsymbol{w}/g(\boldsymbol{w}) \in \partial \mathcal{G}$ denote coordinates on the unit-level set. Recalling the definitions of $\Tilde{\mathfrak{r}}_{\boldsymbol{w}}$ and $\Tilde{\mathcal{R}}_{\boldsymbol{w}}$, we must have that $\Tilde{\mathfrak{r}}_{\boldsymbol{w}}|w_i|/||\boldsymbol{w}||_{\infty} \geq |w_i|/g(\boldsymbol{w})$ for all $i = 1,\hdots,d$; to see this, consider the two possibilities $\boldsymbol{w}/g(\boldsymbol{w})\in \{\Tilde{\mathfrak{r}}_{\boldsymbol{w}}\Tilde{\mathcal{R}}_{\boldsymbol{w}} \cap \partial \mathcal{G}\}$ and $\boldsymbol{w}/g(\boldsymbol{w})\notin \{\Tilde{\mathfrak{r}}_{\boldsymbol{w}}\Tilde{\mathcal{R}}_{\boldsymbol{w}} \cap \partial \mathcal{G}\}$ in turn. 
    Consequently, 
    \begin{equation*}       
         \frac{\Tilde{\mathfrak{r}}_{\boldsymbol{w}}}{||\boldsymbol{w}||_{\infty}} \geq \min_{i=1,\dots,d}\left\{ \frac{|w_i|}{g(\boldsymbol{w}) |w_i|} \right\} = \frac{1}{g(\boldsymbol{w})}.
    \end{equation*}
    Recalling from Proposition \ref{prop:extended_ADF_boundary_link} that $||\boldsymbol{w}||_{\infty} \Lambda(\boldsymbol{w})^{-1} = \Tilde{\mathfrak{r}}_{\boldsymbol{w}}$, the first inequality follows. The second inequality follows directly from Proposition \ref{prop:extended_ADF_boundary_link}.
\end{proof}

\subsection{Algorithm for computing the extended ADF diagnostic} \label{appen:extend_ADF}

Consider a sample $\{\boldsymbol{x}_j\}_{j=1}^n$ of independent copies of  $\boldsymbol{X}$, and let $q$ denote quantile level close to 1. Given any angle $\boldsymbol{w} \in \dsphere \setminus \mathcal{A}$, the extended ADF diagnostic is computed using Algorithm \ref{alg:adf}. We then compare the resulting quantiles, $(e_1,\hdots,e_b)^T$, to the corresponding theoretical quantiles using a QQ plot.
\begin{algorithm}
    \caption{Computing the extended ADF diagnostic.}
    \label{alg:adf}
    \begin{algorithmic}
        \For{$k \gets 1$ to $n$}
            \State $t_{\boldsymbol{w}}^k \gets \min_{i = 1,\hdots,d} \{ x_{k,i}/w_{i} \}$
        \EndFor
        \State $\mathbf{t}_{\boldsymbol{w}} \gets (t_{\boldsymbol{w}}^1,\hdots,t_{\boldsymbol{w}}^n)$
        \State $\hat{u}_{\boldsymbol{w}} \gets \texttt{quantile}(\mathbf{t}_{\boldsymbol{w}},q)$

        \State $b \gets 1$ 
        \For{$k \gets 1$ to $n$}
            \If{$t_{\boldsymbol{w}}^k > \hat{u}_{\boldsymbol{w}}$}
                \State $e_b \gets \Hat{\Lambda}(\boldsymbol{w})(t_{\boldsymbol{w}}^k - \hat{u}_{\boldsymbol{w}})$
                \State $b \gets b+1$
            \EndIf
        \EndFor
    \end{algorithmic}
\end{algorithm}

\section{Supplementary figures and tables} \label{appendx:supfigs}

  \begin{figure}
    \centering
    \includegraphics[width=0.6\textwidth]{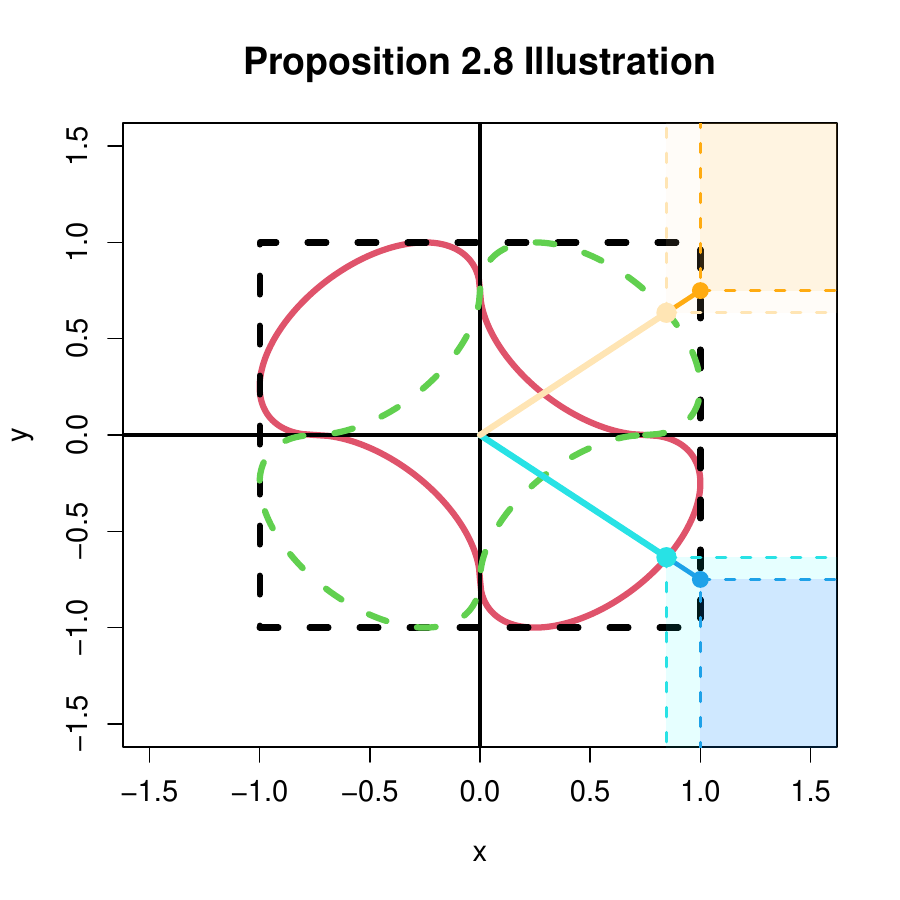}
    \caption{Plot illustrating the result described in Proposition \ref{prop:extended_ADF_boundary_link}. Here, the red line denotes the unit-level set $\partial \mathcal{G}$ of a Gaussian copula with $\rho = -0.5$. Moreover, the blue and cyan points, and regions, illustrate the re-scaling to obtain $\Tilde{\mathfrak{r}}_{\boldsymbol{w}}$ for the angle $\boldsymbol{w} = (0.8,-\sqrt{1-0.8^2})$. Setting $\boldsymbol{c} = \text{sgn}(\boldsymbol{w}) = (1,-1)$, the green dashed line denotes the unit-level set $\partial \mathcal{G}$ of $\boldsymbol{c}\boldsymbol{X}$, with the corresponding scaling procedure for the angle $\boldsymbol{c}\boldsymbol{w} = (0.8,\sqrt{1-0.8^2})$ illustrated by the orange points and regions. One can observe that the rescaling procedures are analogous in both orthants.} 
    \label{fig:adf_proof}
\end{figure}

  \begin{figure}
\centering
    \includegraphics[width=\linewidth]{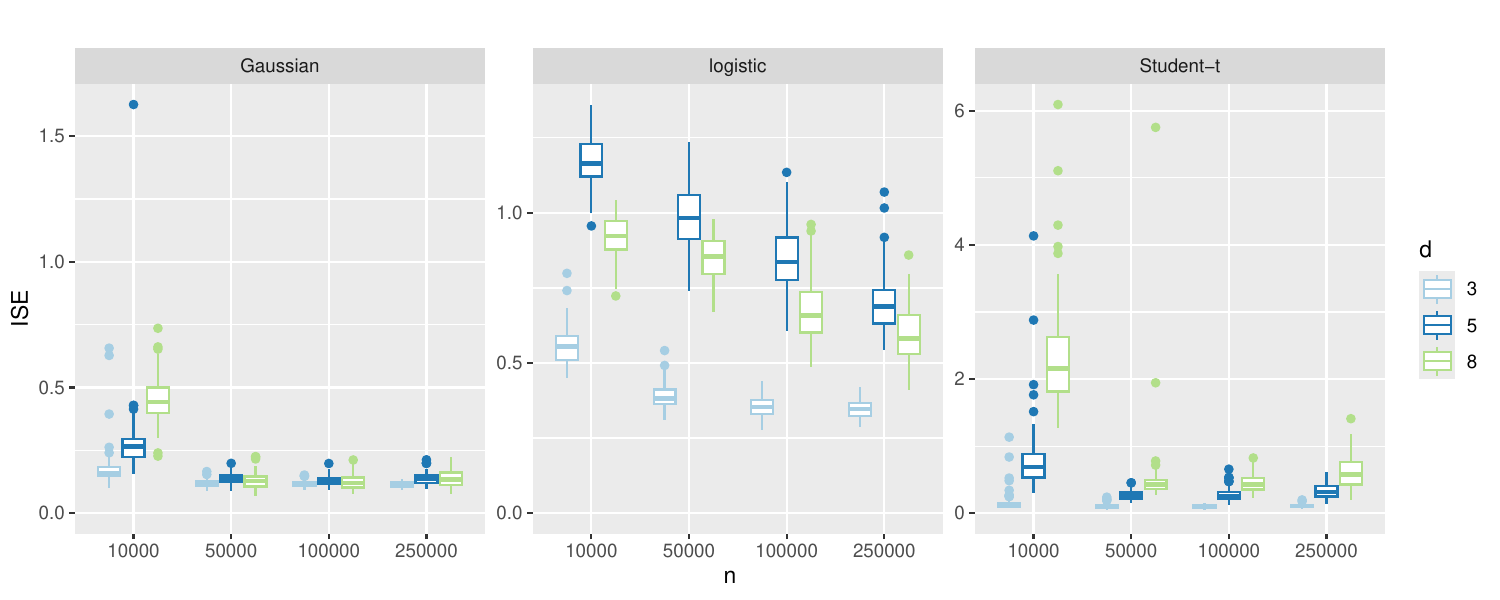}
    \includegraphics[width=\linewidth]{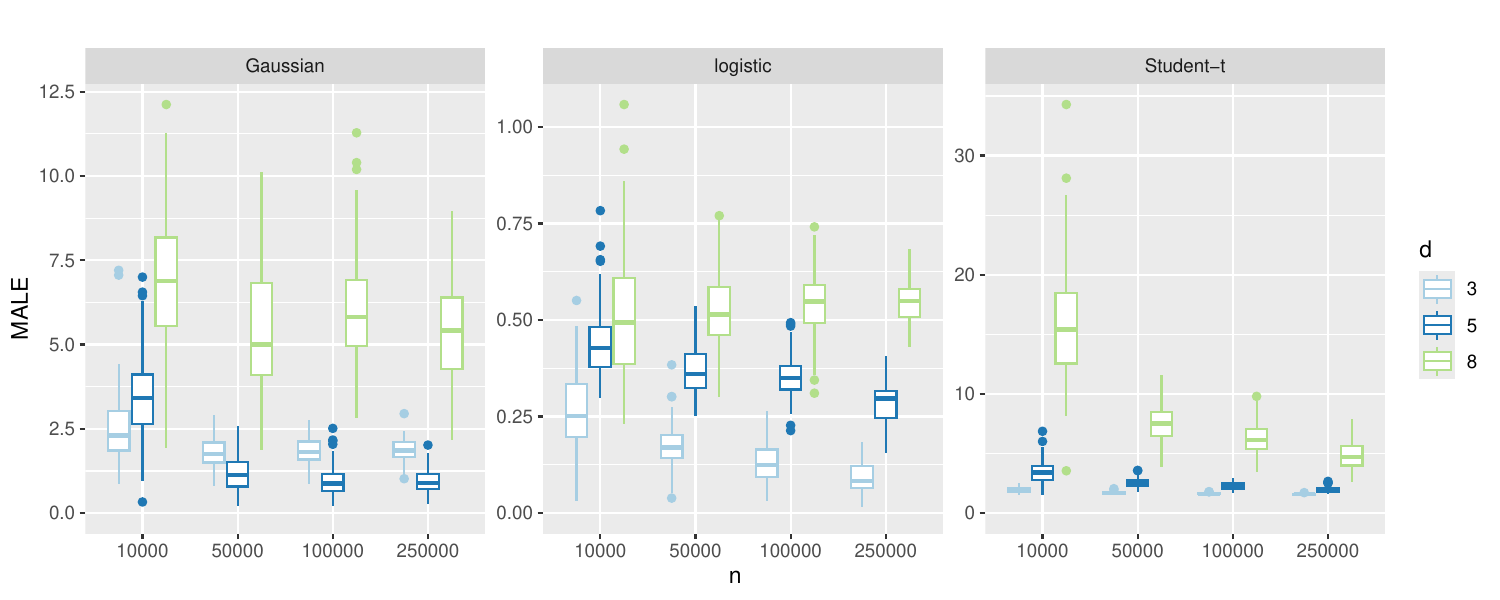}
    \caption{Boxplots of the estimated ISE and MALE for the varying sample size, $n$, and dimension, $d$, simulation study described in Section~\ref{sec:sim_study:models}. Here the radial quantile level is $\tau=0.75$ and the architecture for the rescaled gauge function has $N=3$ hidden layers (each with width 64).}
    \label{fig:sim_study3}
\end{figure}

\begin{landscape}
\begin{table}[t!]
\caption{Estimates of the median ($2.5\%$, $97.5\%$ percentile) of the ISE and MALE for the simulation study of the architecture for the gauge function $g(\cdot)$, described in Section~\ref{sec:sim_study:models}. Here the sample size is $n=100,000$, the radial quantile level is $\tau=0.75$, and the considered dimensions are $d=5$ and $d=8$.}
\label{tab:simstudy2}
 \centering
 \small
  \begin{tabular}{cccccc} 
 &&\multicolumn{2}{c}{$d=5$}&\multicolumn{2}{c}{$d=8$}\\
 \cline{3-6}
Copula&$g(\cdot)$&ISE&MALE&ISE&MALE\\
\hline
\multirow{8}{*}{Gaussian} & $(16)$ & 0.44 (0.27, 1.87)  & 4.58 (2.85, 9.03)& 2.52 (0.80. 5.67) & 14.4 (6.48, 20.8)\\
 & $(16,16)$ & 0.16 (0.13, 0.23) & 1.43 (0.50, 2.81)  & 0.28 (0.14, 0.93)& 3.12 (1.02, 8.28)\\
 & $(16,16,16)$ & 0.14 (0.11, 0.19)  & 1.11 (0.39, 2.27) & 0.19 (0.11, 0.62) & 3.87 (1.71, 6.83)\\
& $(16,16,16,16)$ & 0.14 (0.11, 0.22)  & 1.10 (0.35, 2.52) & 0.23 (0.11, 0.61)& 3.61 (1.19, 7.20)\\
 & $(64)$ & 0.19 (0.16, 0.25) & 1.88 (0.98, 3.02) & 0.33 (0.22, 0.49) & \textbf{2.44 }(1.27, 4.49)\\

 & $(64,64)$ & 0.14 (0.11, 0.20) & 0.91 (0.28, 1.87) & 0.16 (0.11, 0.24)& 4.52 (2.31, 7.13)\\
& $(64,64,64)$ & 0.13 (0.10, 0.17)  & \textbf{0.88} (0.36, 1.80) & 0.11 (0.08, 0.19) & 5.94 (3.58, 9.14)\\
 & $(64,64,64,64)$ & \textbf{0.12} (0.10, 0.18) & 0.97 (0.37, 2.04)  & \textbf{0.11} (0.08, 0.17) & 6.44 (3.13, 8.83)\\

\hline
\multirow{8}{*}{Logistic}& $(16)$ & 0.49 (0.40, 0.58)  & 0.35 (0.28, 0.45)& \textbf{0.36} (0.26, 0.53) & \textbf{0.53} (0.36, 0.66)\\
 & $(16,16)$ & 0.67 (0.53, 0.91) & 0.33 (0.24, 0.44)& 0.52 (0.38, 0.69)  & 0.54 (0.39, 0.69)\\
 & $(16,16,16)$ & 0.78 (0.62, 1.04) & \textbf{0.32} (0.16, 0.44) & 0.62 (0.50, 0.81) & 0.54 (0.40, 0.66)\\

 & $(16,16,16,16)$ & 0.85 (0.65, 1.03) & 0.32 (0.21, 0.45) & 0.66 (0.52, 0.90) & 0.54 (0.40, 0.69)\\
 & $(64)$ & \textbf{0.48} (0.39, 0.64)  & 0.34 (0.28, 0.45) & 0.37 (0.31, 0.54)  & 0.56 (0.43, 0.68)\\

 & (64,64) & 0.69 (0.56, 1.01)  & 0.35 (0.29, 0.47)  & 0.57 (0.45, 0.78) & 0.54 (0.42, 0.67)\\
 & (64,64,64) & 0.84 (0.66, 1.07)  & 0.35 (0.28, 0.46) & 0.66 (0.53, 0.87)  & 0.55 (0.36, 0.68)\\
 & (64,64,64,64) & 0.91 (0.73, 1.07) & 0.35 (0.23, 0.45) & 0.78 (0.62, 0.98) & 0.54 (0.39, 0.67)\\

\hline
\multirow{8}{*}{Student-t} & (16) & 12.0 (6.92, 15.0) & \textbf{1.73} (1.59, 1.97)  & 25.0 (17.7, 28.8) & 3.74 (2.60, 5.40)\\
 & (16,16) & 4.39 (0.97, 7.51)  & 1.75 (1.58, 2.38)  & 16.2 (11.3, 21.1)  & \textbf{2.97} (2.36, 3.97)\\

 & (16,16,16) & 0.62 (0.25, 3.21) & 2.09 (1.70, 2.72)  & 9.44 (4.05, 15.3) & 3.06 (2.40, 4.70)\\

 & (16,16,16,16) & 0.43 (0.20, 2.07) & 2.24 (1.77, 2.78) & 6.42 (2.90, 13.1) & 3.22 (2.43, 5.98)\\
 & (64) & 2.43 (1.33, 4.41)  & 1.76 (1.61, 2.12)  & 6.76 (4.36, 14.0) & 3.13 (2.30, 4.83)\\

 & (64,64) & 0.30 (0.15, 0.64)  & 2.07 (1.64, 2.86) & 1.20 (0.42, 3.83) & 4.71 (2.80, 6.28)\\
& (64,64,64) & \textbf{0.26} (0.15, 0.47)  & 2.26 (1.76, 2.80) & 0.43 (0.25, 0.68) & 6.28 (4.40, 9.00)\\

 & (64,64,64,64) & {0.26} (0.18, 0.45) & 2.30 (1.87, 2.98) & \textbf{0.40} (0.25, 0.70) & 6.42 (4.13, 9.47)\\
\hline
 \end{tabular}
\normalsize
  \end{table}
  \end{landscape}

\begin{figure}
\centering
    \includegraphics[scale=.52]{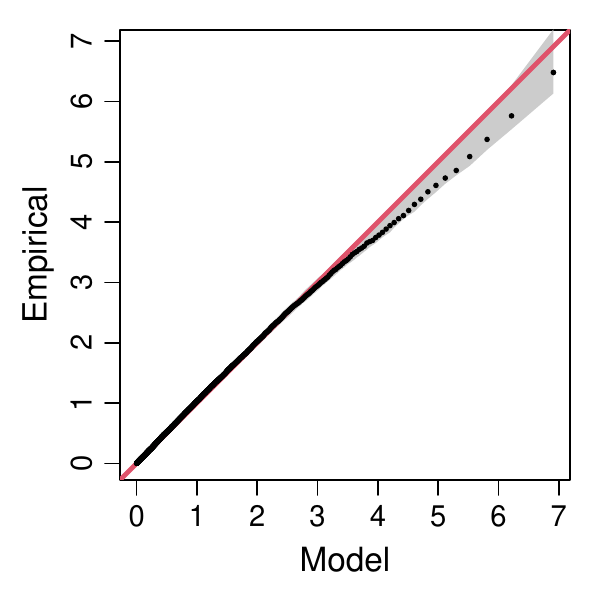}
    \includegraphics[scale=.52]{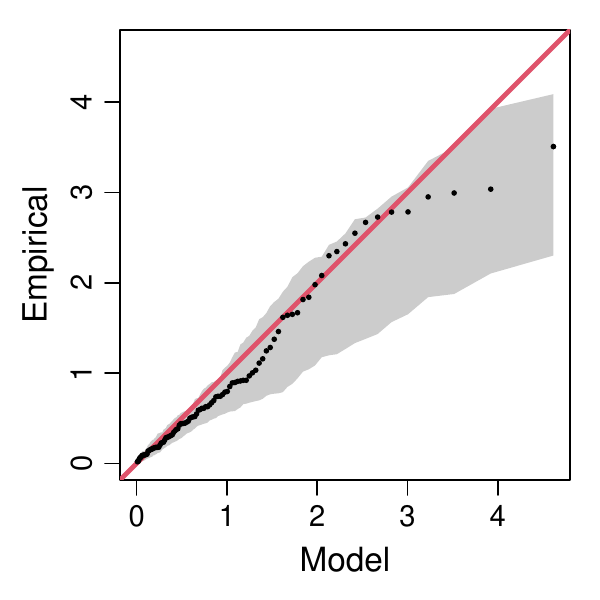}
    \includegraphics[scale=.52]{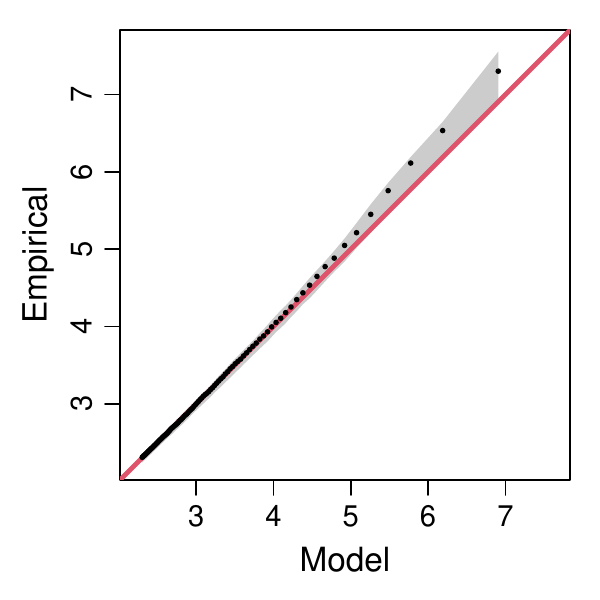}
         \caption{\textbf{Goodness-of-fit diagnostics for location 01}: Truncated gamma QQ plot (left), the extended ADF diagnostic at $\boldsymbol{w} = \boldsymbol{1}_d/||\boldsymbol{1}_d||$ (centre), and return level set probability estimates (right) for the DeepGauge model fitted to \texttt{hs}, \texttt{ws}, and \texttt{mslp}.}
    \label{fig:qqplot_1}
\end{figure}

\begin{figure}
\centering
    \includegraphics[scale=.52]{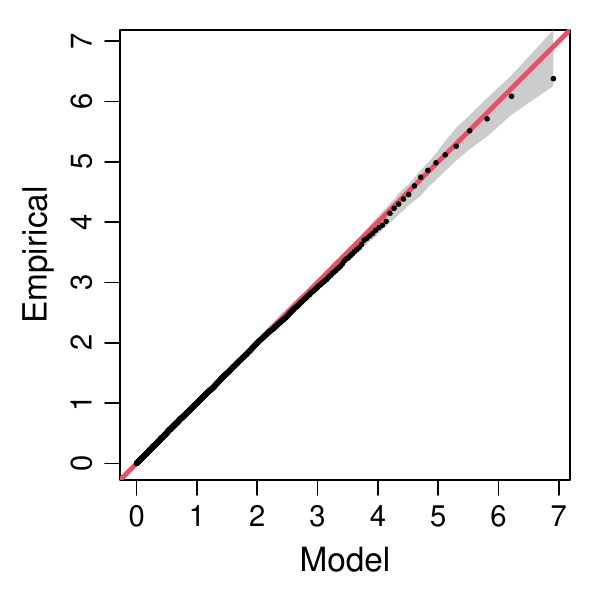}
    \includegraphics[scale=.52]{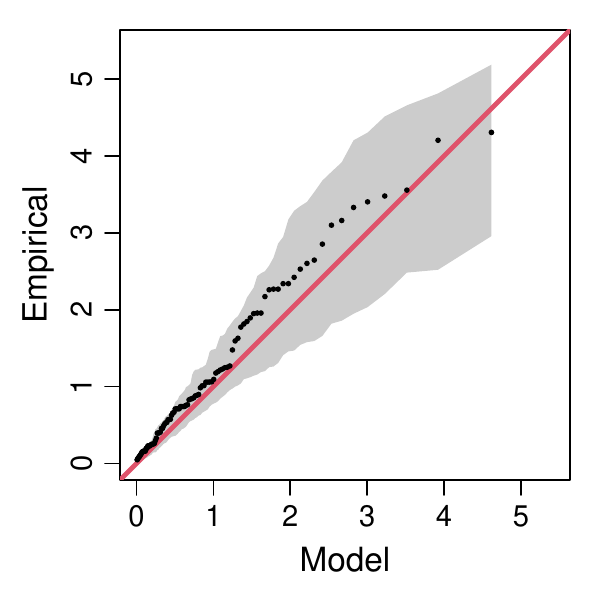}
    \includegraphics[scale=.52]{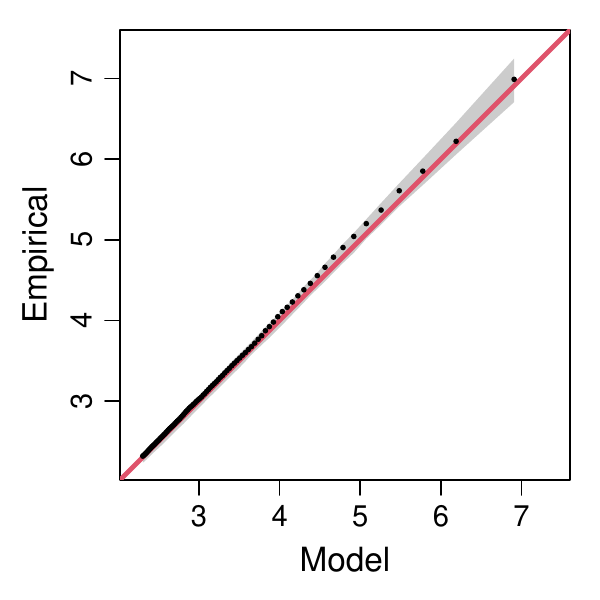}
         \caption{\textbf{Goodness-of-fit diagnostics for location 46}: Truncated gamma QQ plot (left), the extended ADF diagnostic at $\boldsymbol{w} = \boldsymbol{1}_d/||\boldsymbol{1}_d||$ (centre), and return level set probability estimates (right) for the DeepGauge model fitted to \texttt{hs}, \texttt{ws}, and \texttt{mslp}.}
    \label{fig:qqplot_46}
\end{figure}

\begin{figure}
\centering
    \includegraphics[scale=.52]{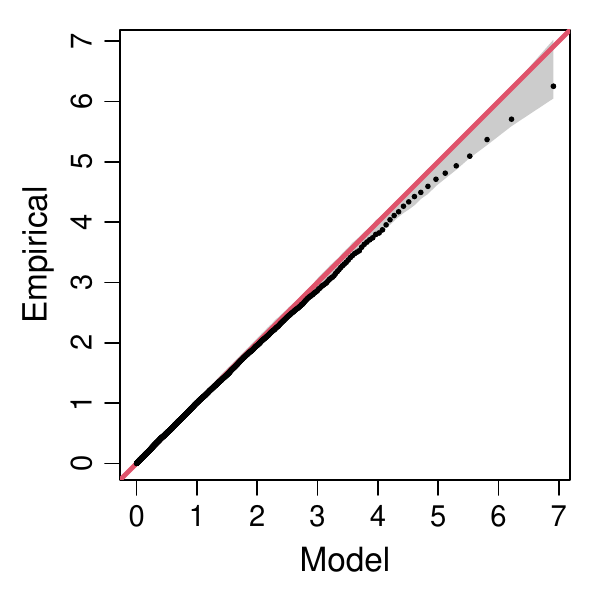}
    \includegraphics[scale=.52]{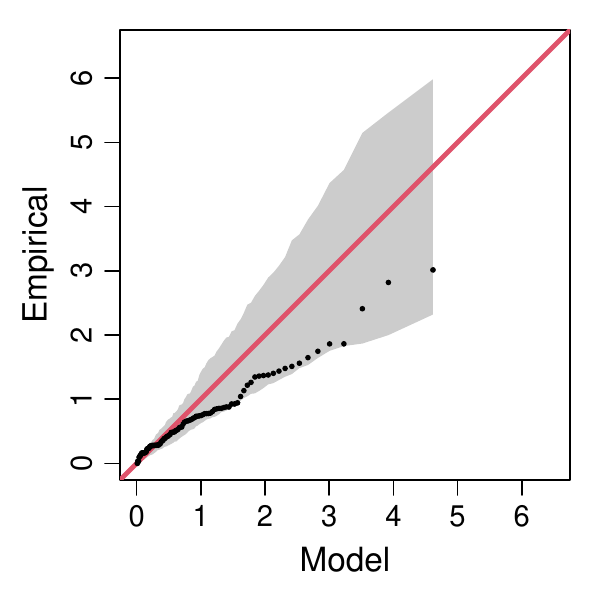}
    \includegraphics[scale=.52]{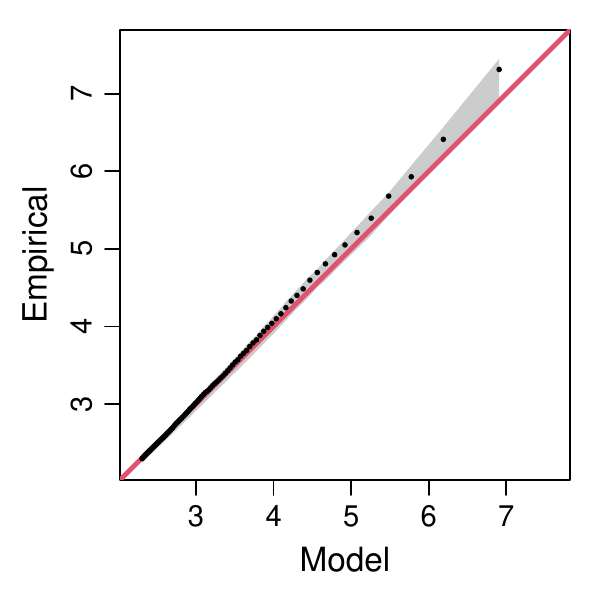}
         \caption{\textbf{Goodness-of-fit diagnostics for location 85}: Truncated gamma QQ plot (left), the extended ADF diagnostic at $\boldsymbol{w} = \boldsymbol{1}_d/||\boldsymbol{1}_d||$ (centre), and return level set probability estimates (right) for the DeepGauge model fitted to \texttt{hs}, \texttt{ws}, and \texttt{mslp}.}
    \label{fig:qqplot_85}
\end{figure}

\begin{figure}
\centering
    \includegraphics[scale=.4]{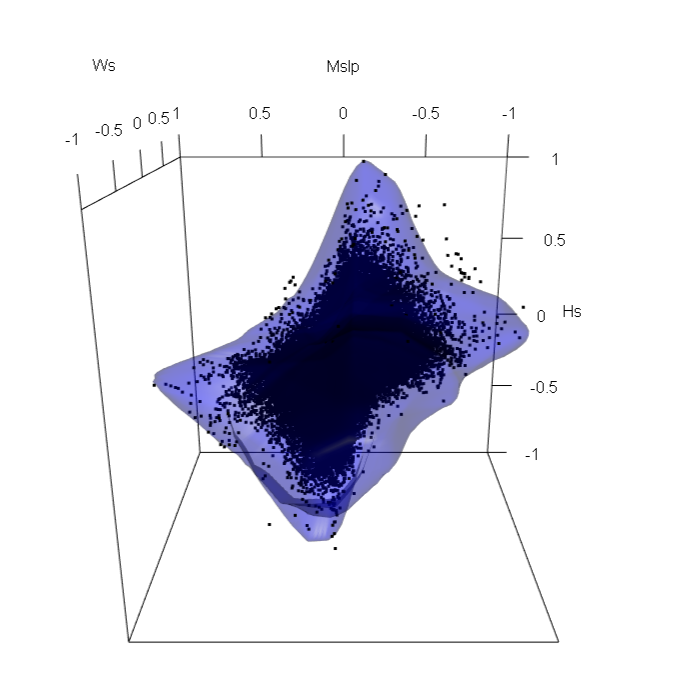}
    \includegraphics[scale=.4]{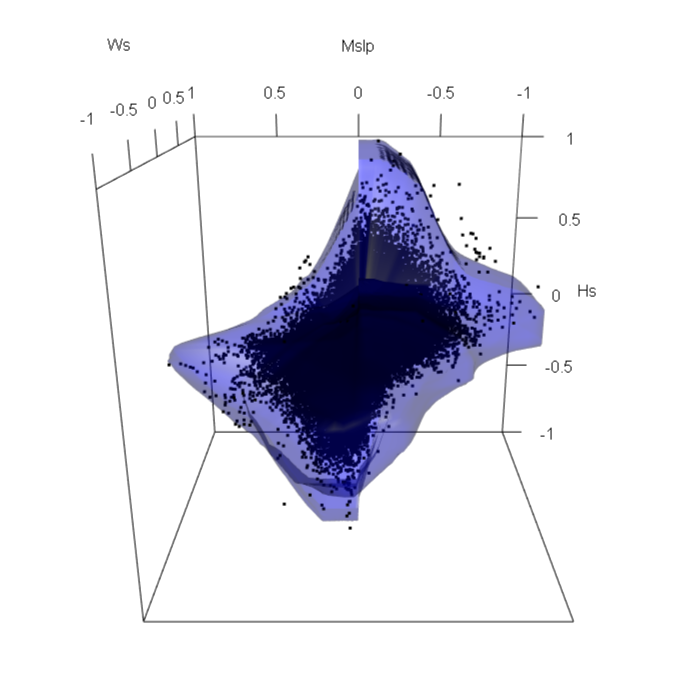}
         \caption{The estimated unit-level (left) and extended ADF (right) sets for \texttt{hs}, \texttt{ws}, and \texttt{mslp} at location 46 (on standard Laplace margins)}
    \label{fig:estimates_46}
\end{figure}
\begin{figure}
\centering
    \includegraphics[scale=.4]{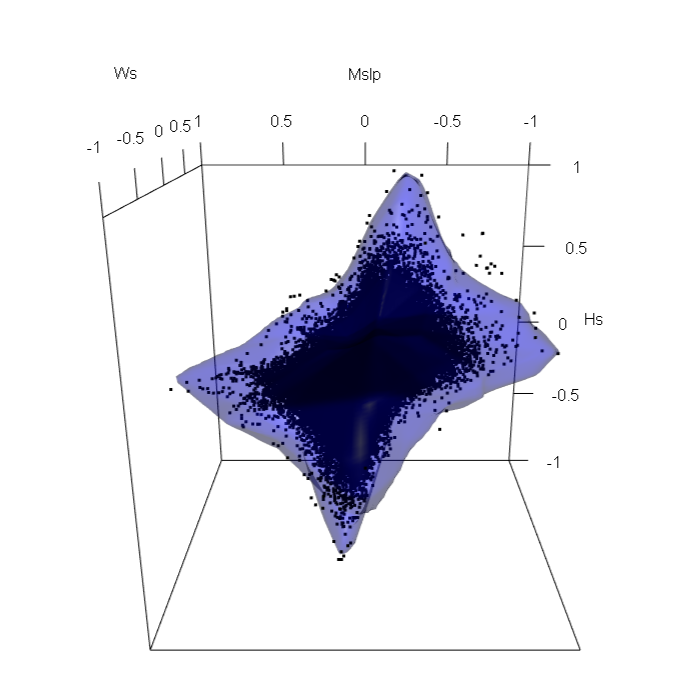}
    \includegraphics[scale=.4]{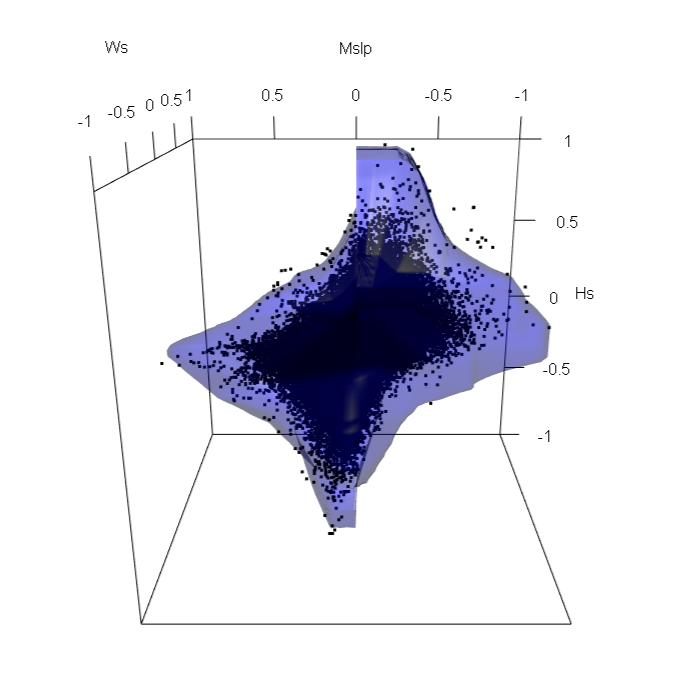}
         \caption{The estimated unit-level (left) and extended ADF (right) sets for \texttt{hs}, \texttt{ws}, and \texttt{mslp} at location 85 (on standard Laplace margins)}
    \label{fig:estimates_85}
\end{figure}

\begin{figure}
\centering
    \includegraphics[scale=.5]{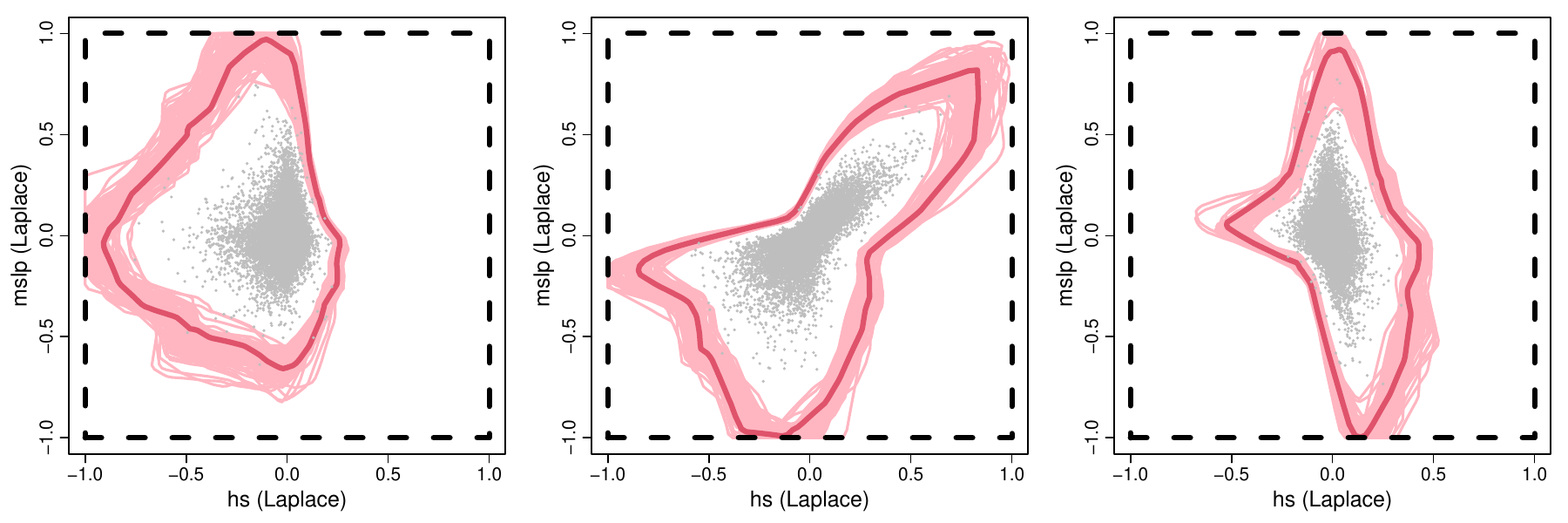}
         \caption{Scaled bivariate sample clouds with $\epsilon = 0.01$ for pairwise combinations of \texttt{ws}, \texttt{hs}, and \texttt{mslp} at location 46 on standard Laplace margins; the red lines describe the corresponding estimated bivariate unit-level set slices.}
    \label{fig:subgauges_46}
\end{figure}
\begin{figure}
\centering
    \includegraphics[scale=.5]{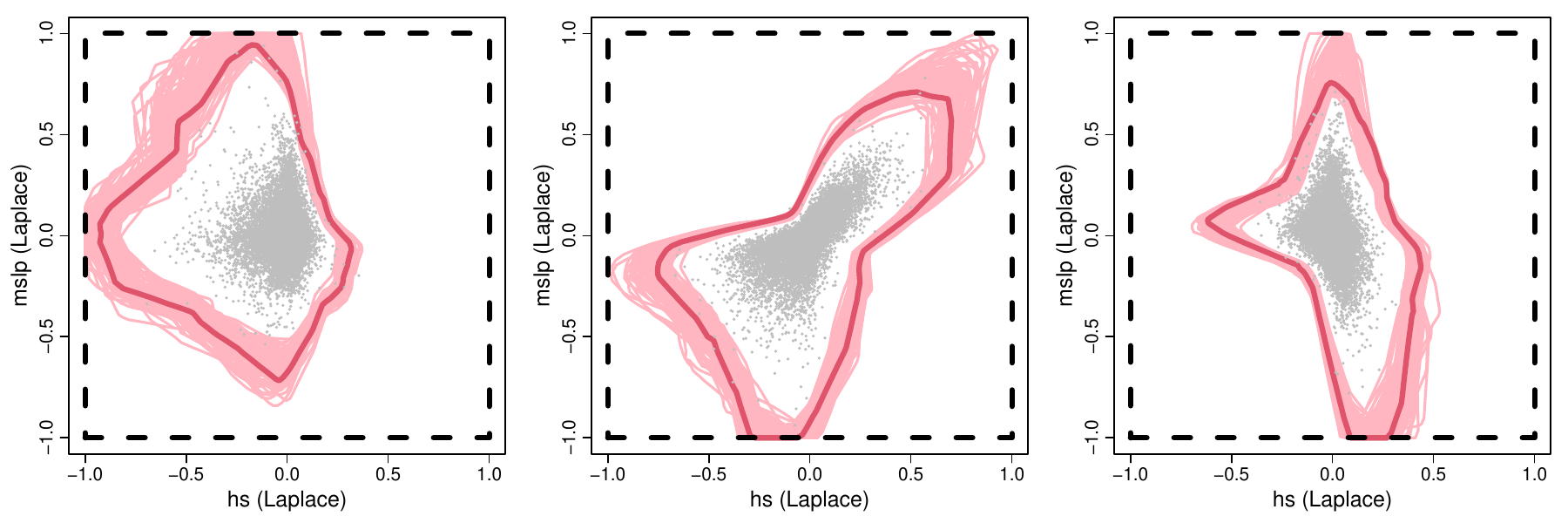}
         \caption{Scaled bivariate sample clouds with $\epsilon = 0.01$ for pairwise combinations of \texttt{ws}, \texttt{hs}, and \texttt{mslp} at location 85 on standard Laplace margins; the red lines describe the corresponding estimated bivariate unit-level set slices.}
    \label{fig:subgauges_85}
\end{figure}
\begin{figure}
    \centering
    \includegraphics[width=0.5\linewidth]{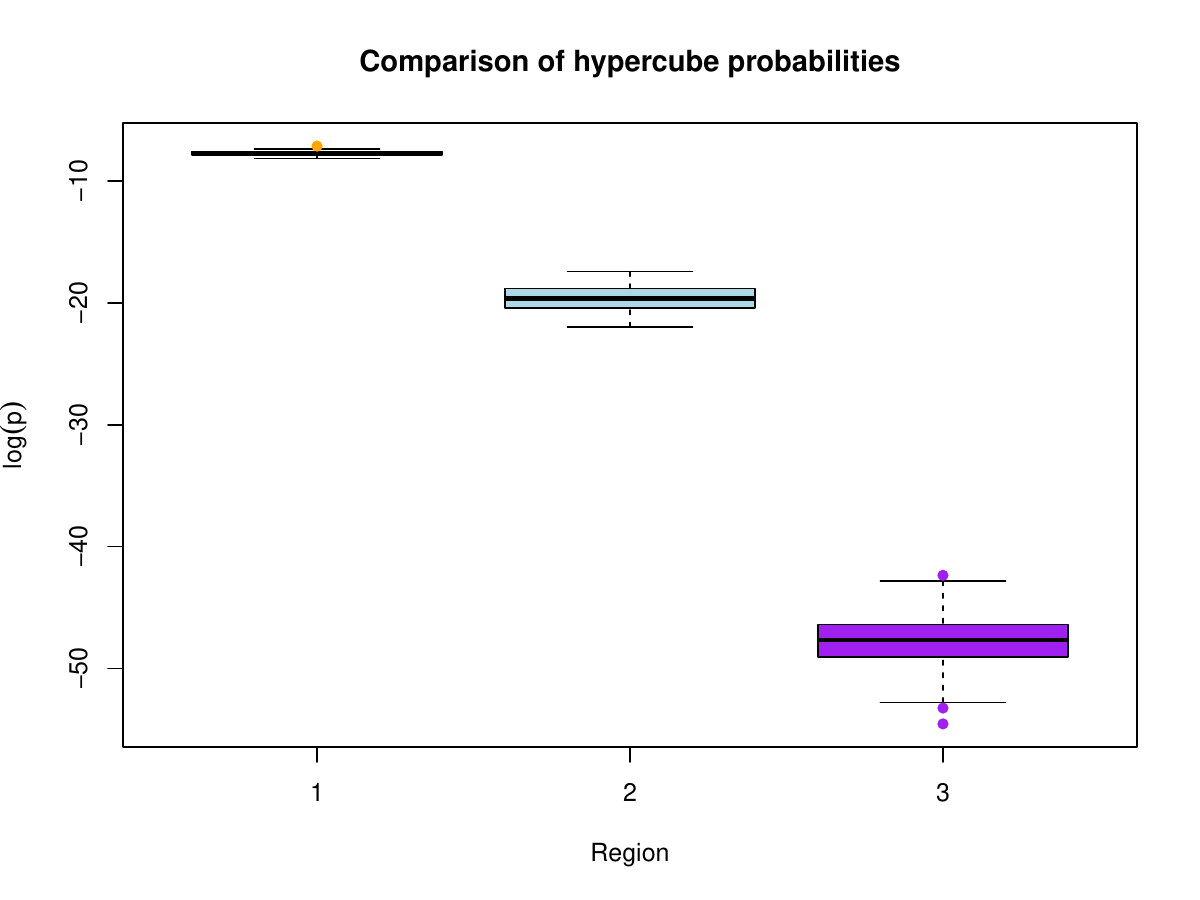}
    \caption{Bootstrapped log-probability estimates for each hypercube at location $01$. The colouring of each boxplot corresponds to the regions in Figure \ref{fig:hypercubes}.}
    \label{fig:hc_probs_1}
\end{figure}

\begin{figure}
    \centering
    \includegraphics[width=0.5\linewidth]{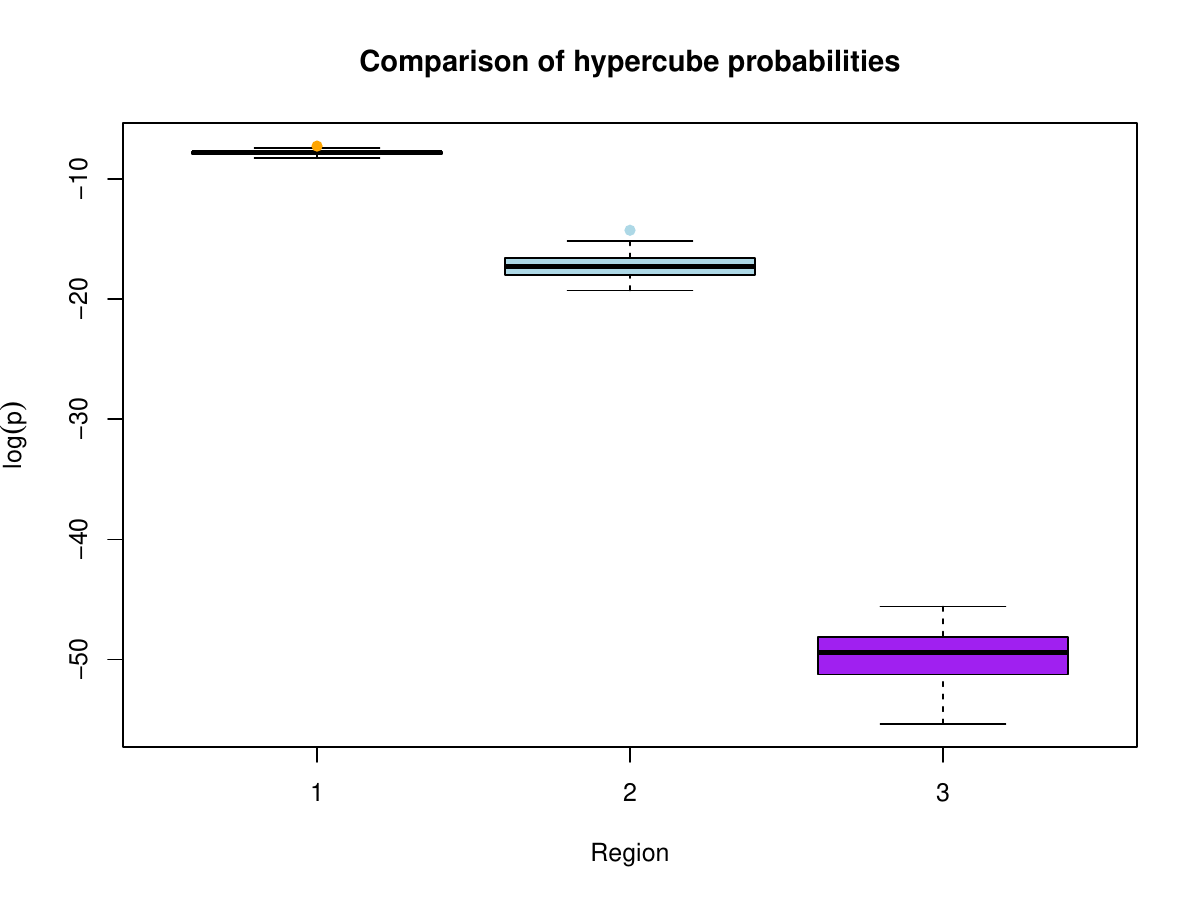}
    \caption{Bootstrapped log-probability estimates for each hypercube at location $46$. The colouring of each boxplot corresponds to the regions in Figure \ref{fig:hypercubes}.}
    \label{fig:hc_probs_46}
\end{figure}

\begin{figure}
    \centering
    \includegraphics[width=0.5\linewidth]{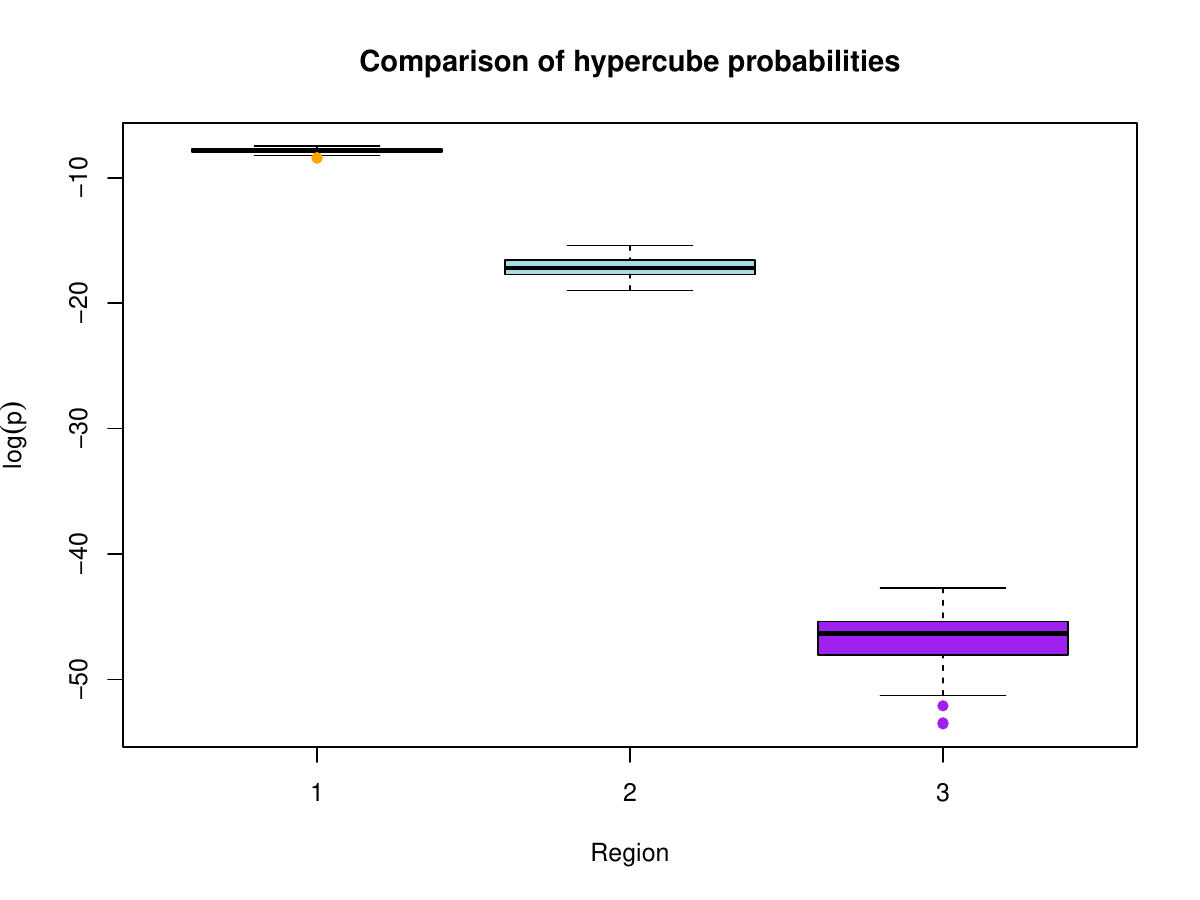}
    \caption{Bootstrapped log-probability estimates for each hypercube at location $85$. The colouring of each boxplot corresponds to the regions in Figure \ref{fig:hypercubes}.}
    \label{fig:hc_probs_85}
\end{figure}

\begin{figure}
\centering
    \includegraphics[scale=.52]{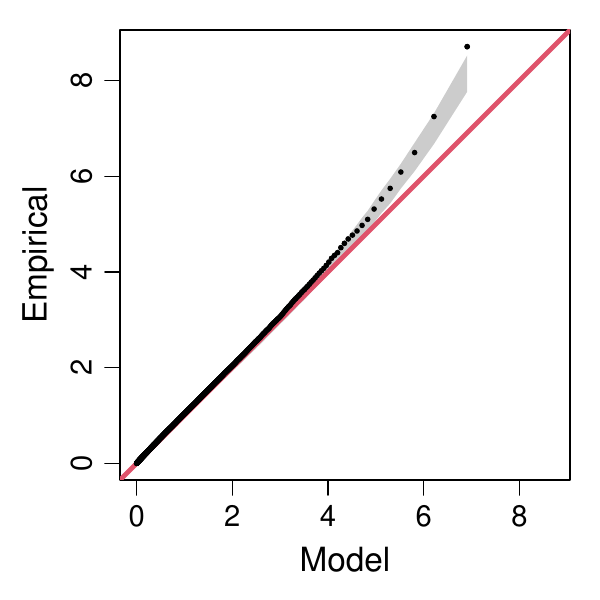}
    \includegraphics[scale=.52]{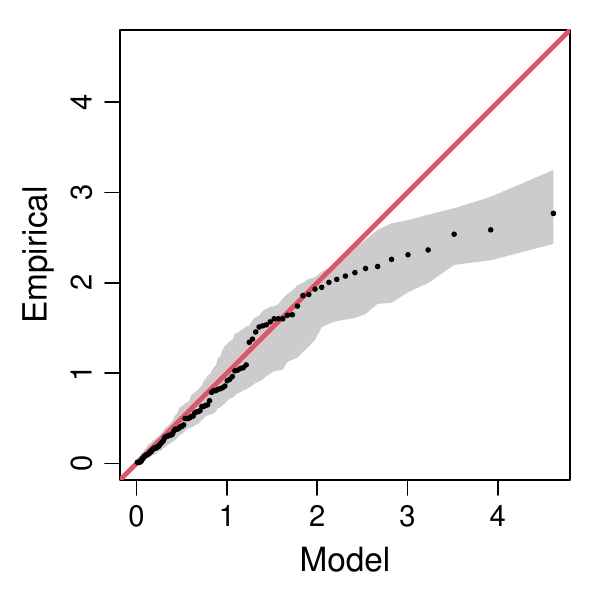}
    \includegraphics[scale=.52]{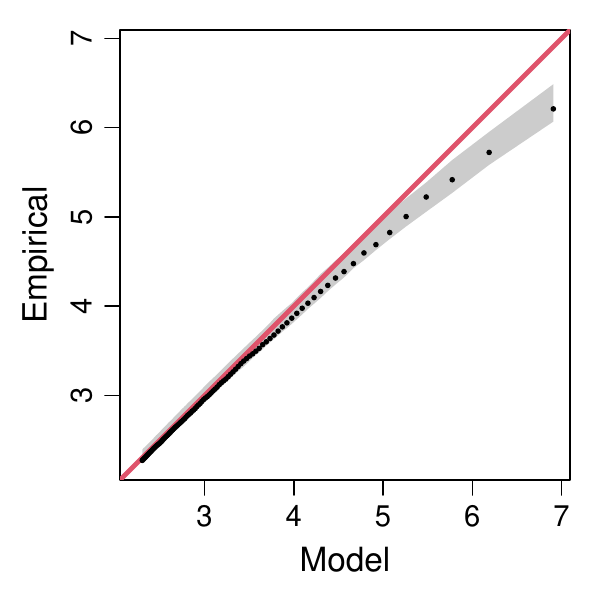}
         \caption{\textbf{Goodness-of-fit diagnostics for \texttt{hs}}: Truncated gamma QQ plot (left), the extended ADF diagnostic at $\boldsymbol{w} = \boldsymbol{1}_d/||\boldsymbol{1}_d||$ (centre), and return level set probability estimates (right) for the DeepGauge model fitted to wave height data at five locations.}
    \label{fig:qqplot_d5}
\end{figure}

\newpage

\section{Case study with eight dimensional transect} \label{appendx:d_8_case}

To further demonstrate the efficacy of the DeepGauge framework, we also apply the model to wave height data across $d=8$ locations in the transect; these locations are illustrates in Figure~\ref{fig:sites_d_8}. We remark that very few approaches for multivariate extremes are available to dimensions of this size \citep{Lederer2023}. 

\begin{figure}[t!]
\centering
    \includegraphics[scale=0.6]{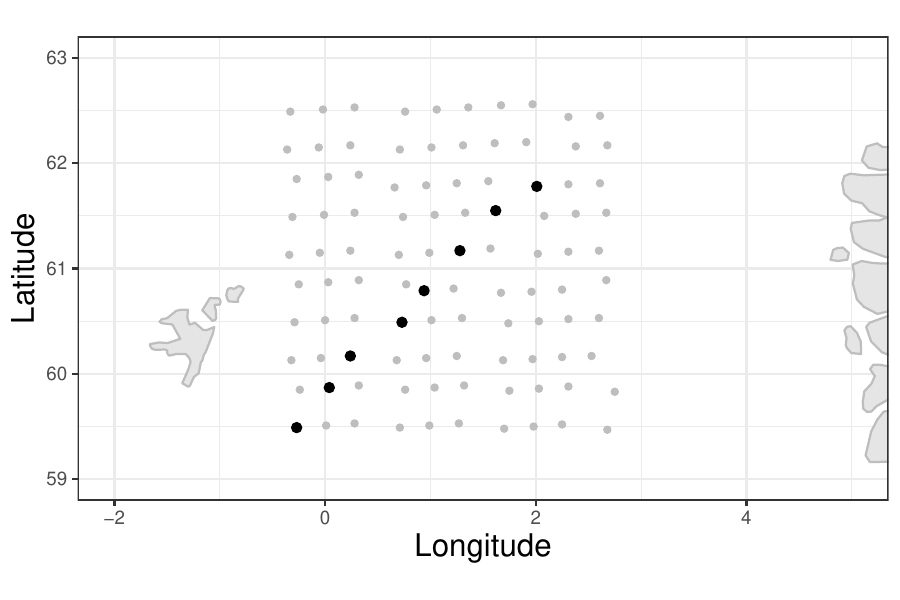}
        \caption{Locations used to study severe ocean events. Locations where we study the joint extremal dependence of \texttt{ws}, \texttt{hs}, and \texttt{mslp} are circled in blue. Locations for studying the joint extremal dependence in \texttt{hs} are highlighted in black.}
    \label{fig:sites_d_8}
\end{figure}
The diagnostics for the $d = 8$ case are illustrated in Figures~\ref{fig:subgauges_combined_d_8} and \ref{fig:qqplot_d_8}. One can observe generally good agreement across all metrics, indicating the model fits are reasonable. We observe that in a similar manner to the $d = 5$ case, there is slightly more divergence at extremely high levels of the quantile plots compared to the $d = 3$ examples. 

\begin{figure}[h]
\centering
    \includegraphics[width = \linewidth]{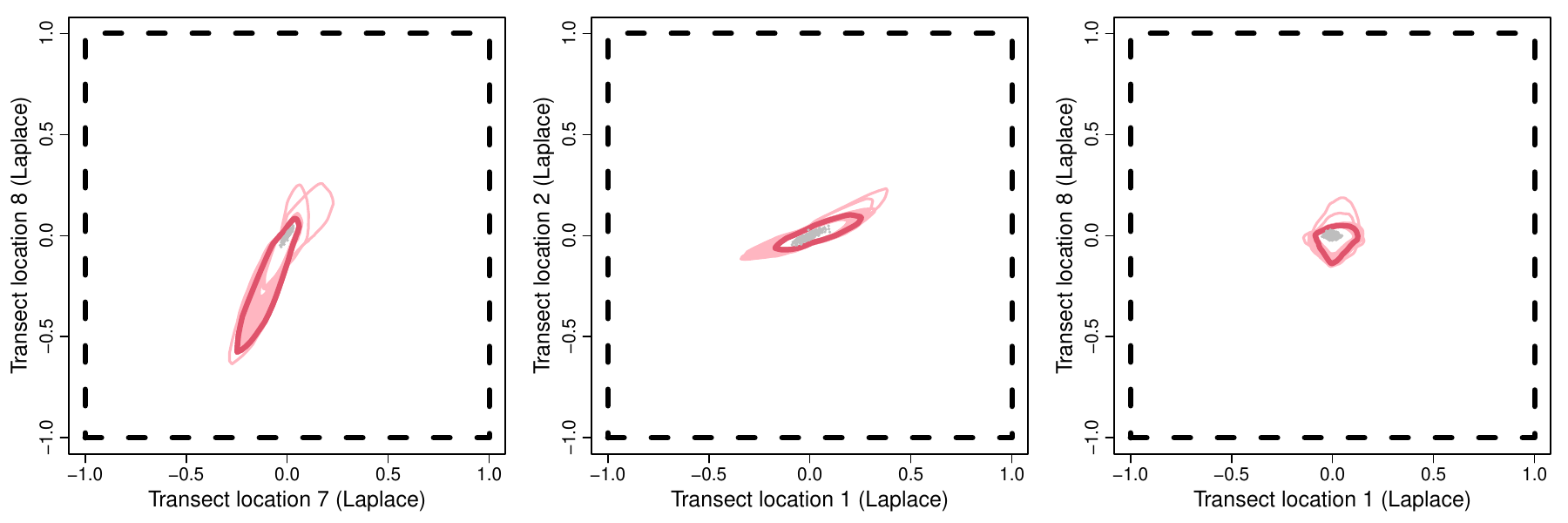}
         \caption{Scaled bivariate sample clouds with $\epsilon = 0.015$ for three pairwise combinations of the \texttt{hs} variable over the eight locations on standard Laplace margins; the red lines describe the estimated bivariate unit-level set slices.}
    \label{fig:subgauges_combined_d_8}
\end{figure}

\begin{figure}
\centering
    \includegraphics[scale=.52]{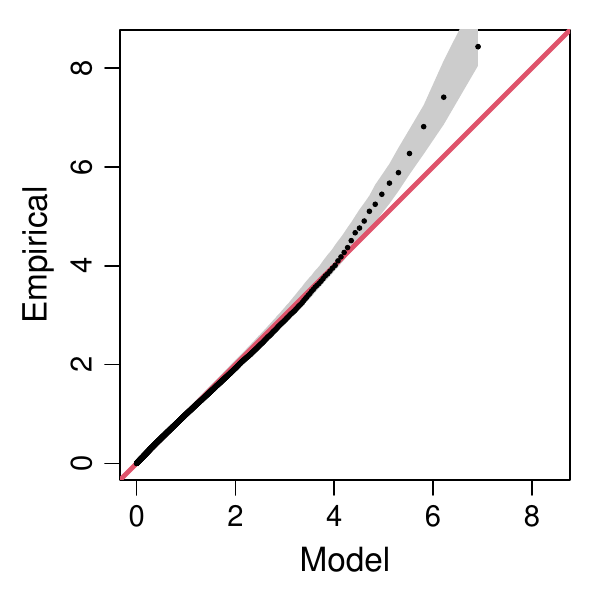}
    \includegraphics[scale=.52]{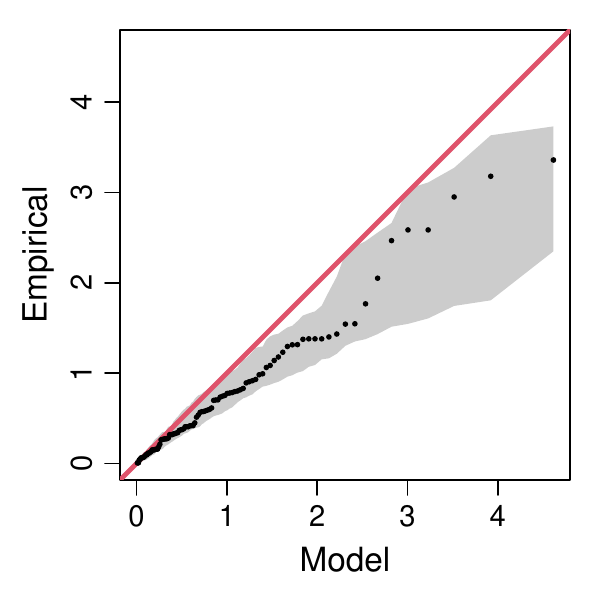}
    \includegraphics[scale=.52]{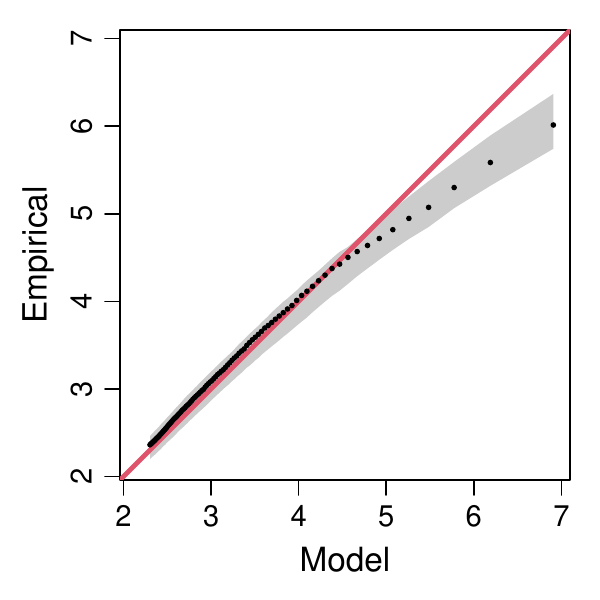}
         \caption{\textbf{Goodness-of-fit diagnostics for \texttt{hs}}: Truncated gamma QQ plot (left), the extended ADF diagnostic at $\boldsymbol{w} = \boldsymbol{1}_d/||\boldsymbol{1}_d||$ (centre), and return level set probability estimates (right) for the DeepGauge model fitted to wave height data at eight locations.}
    \label{fig:qqplot_d_8}
\end{figure}

\end{appendix}
\clearpage
\bibliographystyle{myapalike}
\bibliography{library, tooting}
\end{document}